\documentclass[pra,amsmath,amssymb,aps,superscriptaddress,preprint,nosort]{revtex4-2}

\usepackage{graphicx}
\usepackage{hyperref}
\usepackage{bm}
\usepackage{mathrsfs}
\usepackage{mathtools}
\usepackage{enumitem}
\usepackage{amsthm}
\usepackage{microtype}
\usepackage{tikz}
\usetikzlibrary{arrows.meta, calc, positioning, shapes.geometric, intersections}

\newtheorem{theorem}{Theorem}[section]
\newtheorem{proposition}[theorem]{Proposition}
\newtheorem{corollary}[theorem]{Corollary}

\newtheorem{definition}[theorem]{Definition}
\newtheorem{remark}[theorem]{Remark}

\newcommand{\HSB}{\mathcal{H}_{\mathrm{SB}}}
\newcommand{\CP}{\mathbb{CP}}
\newcommand{\T}{\mathbb{T}}
\newcommand{\Z}{\mathbb{Z}}

\newcommand{\Cc}{\mathbb{C}}
\newcommand{\I}{\mathbb{I}}
\newcommand{\Fcal}{\mathcal{F}}
\newcommand{\Mcal}{\mathcal{M}}
\newcommand{\Ccal}{\mathcal{C}}
\newcommand{\Pcal}{\mathcal{P}}
\newcommand{\Scal}{\mathcal{S}}
\newcommand{\Ecal}{\mathcal{E}}
\newcommand{\Ncal}{\mathcal{N}}

\newcommand{\FS}{\mathrm{FS}}
\newcommand{\wt}{\mathrm{wt}}

\newcommand{\ket}[1]{\left|#1\right\rangle}
\newcommand{\bra}[1]{\left\langle#1\right|}
\newcommand{\braket}[2]{\left\langle #1 \middle| #2 \right\rangle}

\newcommand{\Dint}{\mathcal{D}}

\begin{document}

\title{Holomorphic Quantum Error Correction Codes}

\author{M. W. AlMasri}
\affiliation{Wilczek Quantum Center, School of Physics and Astronomy, Shanghai Jiao Tong University, Minhang, Shanghai 200240, China}
\homepage{mwalmasri2003@gmail.com}

\date{\today}

\begin{abstract}
We develop a holomorphic representation of quantum error correction codes (QECCs) within the Segal--Bargmann space. By encoding qubits into Schwinger boson modes $(z_{a_j}, z_{b_j})$ subject to a degree-one homogeneity constraint, we derive closed-form differential operator representations for stabilizers, syndrome extraction, and recovery for fundamental codes (three-, five-, seven-, and nine-qubit codes). Quantum errors are characterized as holomorphic perturbations violating this constraint, while syndrome measurement projects onto eigenspaces of commuting differential operators. Restricting to unit-magnitude variables ($|z|=1$) reveals a toroidal space $\T^{2n}$ where error syndromes manifest as discrete translations in winding number space $\Z^{2n}$, and recovery acts as Hamiltonian flows restoring the winding configuration. In the full Segal--Bargmann space, the code space is a holomorphic submanifold of $\CP^{2^n-1}$, with correctable errors as transverse normal directions. Consequently, the Knill--Laflamme condition becomes a Fubini--Study orthogonality condition between the code submanifold and its error-translated images. Topological protection emerges from the $U(1)^n$ fiber bundle structure: global phase noise along fibers is unobservable, while base-space errors require active correction. Finally, we establish a path-integral formulation for semiclassical error correction dynamics and show that geometric entanglement via the Segre embedding naturally quantifies code distance. This framework unifies algebraic, geometric, and topological perspectives on fault-tolerant quantum protocols.
\end{abstract}

\maketitle

\section{Introduction}
\label{sec:intro}

Quantum error correction (QEC) constitutes the foundational pillar upon which scalable, fault-tolerant quantum computation rests \cite{shor1995,steane1996,calderbank1996,knill1997}. The central insight of QEC---that quantum information can be protected against local noise by encoding it nonlocally into entangled states of many physical qubits---has led to a rich mathematical theory encompassing stabilizer formalism \cite{gottesman1997,calderbank1997}, topological codes \cite{kitaev2003,dennis2002}, and subsystem codes \cite{bacon2006,lidar1998}. Fundamentally, quantum error correction has a deep and intrinsic relationship with quantum information geometry and topology, as the protection of quantum information inherently relies on the geometric separation of code states in Hilbert space and the topological invariants that characterize the global properties of the code manifold. Furthermore, recent insights from high-energy physics have revealed a profound relationship between QEC and the holographic duality, where the emergence of spacetime and black hole interiors is understood through the lens of holographic quantum error-correcting codes \cite{almheiri2015,pastawski2015,harlow2016}. However, the predominant algebraic treatment of QEC, while computationally powerful, often obscures these underlying geometric and topological structures that govern the relationship between code spaces, error operators, and recovery maps.

The geometric formulation of quantum mechanics, in which quantum states are points on a K\"ahler manifold and unitary evolution corresponds to Hamiltonian flows \cite{ashtekar1999,brody2001,provost1980}, offers a natural language for understanding the geometry of quantum error correction. In particular, the Fubini--Study metric on projective Hilbert space provides a canonical notion of distance between quantum states, and the Segre embedding characterizes separable states as an algebraic subvariety \cite{bengtsson2017,heydari2006}. These geometric structures have been exploited to study entanglement witnesses, quantum fidelity, and geometric phases, but their application to the theory of quantum error correction codes remains largely unexplored.

In a companion work \cite{almasri2026}, we established a holomorphic representation of quantum logic gates by unifying the Schwinger boson encoding with the Segal--Bargmann transform. In that framework, qubits are encoded into pairs of bosonic modes, quantum gates act as differential operators on holomorphic functions, and the restriction to unit-magnitude variables reveals a toroidal phase space $\T^{2n}$ on which gates act as canonical transformations. The present work extends this geometric framework to the domain of quantum error correction, providing a unified holomorphic description of code spaces, error operators, syndrome measurements, and recovery operations.

Bargmann representation \cite{segal1963,bargmann1961} maps the bosonic Fock space to a space of holomorphic functions on $\Cc^n$ equipped with Gaussian measure. In this representation, creation and annihilation operators become multiplication and differentiation operators, respectively, and the physical constraint of fixed total occupation number translates into a homogeneity condition on the holomorphic functions. This provides a natural setting for analyzing quantum error correction because: (i) errors that change occupation numbers manifest as violations of the homogeneity constraint; (ii) stabilizer measurements correspond to eigenvalue problems for commuting holomorphic differential operators; and (iii) the geometric structure of the code space and its complement can be analyzed using the tools of complex algebraic geometry.

Our key contributions are as follows:

\begin{enumerate}[label=(\roman*)]
\item We derive explicit closed-form differential operator representations for stabilizer generators of fundamental QECCs (three-qubit bit-flip, five-qubit perfect, seven-qubit Steane, and nine-qubit Shor codes) in the Segal--Bargmann space, demonstrating that the code space is characterized as the joint eigenspace of these holomorphic operators.

\item We characterize quantum errors as holomorphic perturbations and show that the syndrome extraction process corresponds to a geometric projection onto eigenspaces of commuting differential operators, with the Knill--Laflamme condition acquiring a natural interpretation as a Fubini--Study orthogonality condition.

\item We analyze the toroidal restriction ($|z|=1$) and demonstrate that error syndromes manifest as discrete shifts in winding number space $\Z^{2n}$, while recovery operations act as Hamiltonian flows that restore the original topological sector.

\item We identify the code space as a holomorphic submanifold of the ambient projective space and characterize correctable errors as normal deformations transverse to this submanifold, providing a geometric interpretation of code distance.

\item We establish the topological protection mechanism arising from the $U(1)^n$ fiber bundle structure and demonstrate that global phase noise is automatically unobservable due to the bundle holonomy, while local Pauli errors require active correction.

\item We formulate a coherent-state path integral for the semiclassical simulation of error correction dynamics, enabling efficient approximation of recovery fidelities for weakly-coupled noise models.
\end{enumerate}

The remainder of this paper is organized as follows. In Sec.~\ref{sec:encoding}, we review the holomorphic encoding of qubits via Schwinger bosons and establish the notation. Section~\ref{sec:stabilizer} presents the holomorphic representation of stabilizer generators and code spaces for fundamental QECCs. Section~\ref{sec:errors} characterizes quantum errors and syndrome extraction in the holomorphic framework. Section~\ref{sec:torus} analyzes the toroidal geometry of error syndromes and recovery flows. Section~\ref{sec:kahler} develops the K\"ahler-geometric interpretation of the Knill--Laflamme condition and code distance. Section~\ref{sec:topological} establishes the topological protection mechanism from the fiber bundle perspective. Section~\ref{sec:pathintegral} presents the path-integral formulation for semiclassical error correction simulation. Section~\ref{sec:extensions} discusses extensions to topological codes, subsystem codes, and holographic physics. Finally, Sec.~\ref{sec:conclusion} summarizes our findings and outlines future directions.

\section{Holomorphic Encoding of Qubits}
\label{sec:encoding}

\subsection{Schwinger Boson Representation}

We consider a system of $n$ physical qubits, where each qubit $j$ is described by Pauli operators $\sigma_j^x, \sigma_j^y, \sigma_j^z$ satisfying the $\mathfrak{su}(2)$ algebra:
\begin{equation}
[\sigma_j^\alpha, \sigma_k^\beta] = 2i\delta_{jk}\epsilon_{\alpha\beta\gamma}\sigma_j^\gamma, \qquad \{\sigma_j^\alpha, \sigma_j^\beta\} = 2\delta_{\alpha\beta}\I,
\label{eq:su2}
\end{equation}
for $\alpha, \beta, \gamma \in \{x, y, z\}$.

Each qubit is encoded into a pair of bosonic modes $(a_j, b_j)$ via the Schwinger representation \cite{schwinger1965}:
\begin{equation}
\sigma_j^+ = a_j^\dagger b_j, \qquad \sigma_j^- = b_j^\dagger a_j, \qquad \sigma_j^z = a_j^\dagger a_j - b_j^\dagger b_j,
\label{eq:schwinger}
\end{equation}
where $\sigma_j^\pm = (\sigma_j^x \pm i\sigma_j^y)/2$. The bosonic operators satisfy canonical commutation relations $[a_j, a_k^\dagger] = [b_j, b_k^\dagger] = \delta_{jk}$, with all other commutators vanishing. Physical qubit states satisfy the constraint:
\begin{equation}
\hat{n}_j^{\mathrm{tot}} \equiv a_j^\dagger a_j + b_j^\dagger b_j = 1 \qquad \forall\, j.
\label{eq:constraint}
\end{equation}

\subsection{Bargmann Correspondence}
The bosonic Fock space $\Fcal_{\mathrm{boson}}$ is mapped to the Segal--Bargmann space $\HSB$, consisting of holomorphic functions $f:\Cc^{2n}\to\Cc$ that are square-integrable with respect to the Gaussian measure:
\begin{equation}
d\mu(\mathbf{z}) = e^{-\|\mathbf{z}\|^2}\prod_{j=1}^n \frac{d^2 z_{a_j}\, d^2 z_{b_j}}{\pi^2},
\label{eq:measure}
\end{equation}
where $\|\mathbf{z}\|^2 = \sum_{j=1}^n(|z_{a_j}|^2 + |z_{b_j}|^2)$. The Bargmann correspondence is defined by:
\begin{equation}
a_j^\dagger \mapsto z_{a_j}, \qquad b_j^\dagger \mapsto z_{b_j}, \qquad a_j \mapsto \frac{\partial}{\partial z_{a_j}}, \qquad b_j \mapsto \frac{\partial}{\partial z_{b_j}}.
\label{eq:bargmann}
\end{equation}

The physical constraint \eqref{eq:constraint} translates into the homogeneity condition:
\begin{equation}
\left(z_{a_j}\frac{\partial}{\partial z_{a_j}} + z_{b_j}\frac{\partial}{\partial z_{b_j}}\right)f(\mathbf{z}) = f(\mathbf{z}) \qquad \forall\, j.
\label{eq:homogeneity}
\end{equation}

Physical qubit states correspond to functions homogeneous of degree one in each pair $(z_{a_j}, z_{b_j})$:
\begin{equation}
f(\mathbf{z}) = \sum_{\mathbf{s}\in\{0,1\}^n} c_{\mathbf{s}} \prod_{j=1}^n z_{a_j}^{1-s_j} z_{b_j}^{s_j},
\label{eq:state}
\end{equation}
with $\sum_{\mathbf{s}}|c_{\mathbf{s}}|^2 = 1$. The computational basis states map as:
\begin{equation}
\ket{0}_j \mapsto z_{a_j}, \qquad \ket{1}_j \mapsto z_{b_j}.
\label{eq:basis}
\end{equation}

\begin{remark}[Physical Subspace Isolation]
The homogeneity constraint \eqref{eq:homogeneity} is not merely a mathematical convenience; it is the exact mechanism that isolates the finite-dimensional physical qubit subspace $(\mathbb{C}^2)^{\otimes n}$ from the infinite-dimensional bosonic Fock space. By restricting to functions that are strictly homogeneous of degree one in each pair $(z_{a_j}, z_{b_j})$, we establish a unitary isomorphism between the physical Hilbert space and this specific subspace of $\HSB$. Any function violating this constraint corresponds to an unphysical state with incorrect boson occupation numbers (e.g., photon loss or gain), which naturally maps to leakage errors outside the computational subspace.
\end{remark}

\subsection{Holomorphic Differential Operators for Pauli Gates}

The single-qubit Pauli operators act as first-order differential operators \cite{almasri2026,almasri2022refrigerator,almasri2026projection}:
\begin{align}
X_j &\mapsto z_{a_j}\partial_{z_{b_j}} + z_{b_j}\partial_{z_{a_j}}, \label{eq:X}\\
Y_j &\mapsto -i(z_{a_j}\partial_{z_{b_j}} - z_{b_j}\partial_{z_{a_j}}), \label{eq:Y}\\
Z_j &\mapsto z_{a_j}\partial_{z_{a_j}} - z_{b_j}\partial_{z_{b_j}}. \label{eq:Z}
\end{align}

These operators preserve the homogeneity constraint \eqref{eq:homogeneity} and implement unitary transformations on the physical subspace.

\section{Holomorphic Representation of Stabilizer Codes}
\label{sec:stabilizer}

\subsection{General Framework}

A stabilizer code $\Ccal$ encoding $k$ logical qubits into $n$ physical qubits is defined by an abelian subgroup $\Scal \subset \Pcal_n$ of the $n$-qubit Pauli group $\Pcal_n$ with $n-k$ independent generators $\{g_1, \ldots, g_{n-k}\}$. The code space is the simultaneous $+1$ eigenspace:
\begin{equation}
\Ccal = \{\ket{\psi} \in (\Cc^2)^{\otimes n} : g_m\ket{\psi} = \ket{\psi},\; m=1,\ldots,n-k\}.
\label{eq:codespace}
\end{equation}

In the Segal--Bargmann space, each stabilizer generator $g_m$ maps to a holomorphic differential operator $\hat{g}_m$ acting on functions $f(\mathbf{z})$ satisfying the homogeneity constraint. The code space is characterized as:
\begin{equation}
\Ccal_{\mathrm{SB}} = \{f \in \HSB : \hat{g}_m f = f,\; m=1,\ldots,n-k\}.
\label{eq:codespaceSB}
\end{equation}

\begin{definition}[Holomorphic Stabilizer Operator]
Let $g = \bigotimes_{j=1}^n P_j$ be a Pauli string with $P_j \in \{\I, X, Y, Z\}$. Its holomorphic representation is the differential operator:
\begin{equation}
\hat{g} = \prod_{j:\, P_j \neq \I} \hat{P}_j,
\label{eq:stabop}
\end{equation}
where $\hat{P}_j$ is given by Eqs.~\eqref{eq:X}--\eqref{eq:Z}, and the product is ordered by qubit index. All differential operators act from left to right on the function $f(\mathbf{z})$.
\end{definition}

\begin{proposition}
The holomorphic stabilizer operators $\{\hat{g}_m\}_{m=1}^{n-k}$ are mutually commuting differential operators on the physical subspace:
\begin{equation}
[\hat{g}_m, \hat{g}_l]f = 0 \qquad \forall\, m,l, \quad \forall\, f \text{ satisfying \eqref{eq:homogeneity}}.
\label{eq:commute}
\end{equation}
\end{proposition}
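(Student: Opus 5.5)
The strategy is to move the problem onto the finite-dimensional physical subspace, where the holomorphic operators become an exact copy of the Pauli matrices. The commutation relation \eqref{eq:commute} is then inherited from the fact that the stabilizer group $\Scal$ is abelian.

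\textbf{Step 1 (invariant subspace).} Let $V\subset\HSB$ be the span of the monomials $\prod_j z_{a_j}^{1-s_j}z_{b_j}^{s_j}$ in \eqref{eq:state}, i.e.\ the functions satisfying \eqref{eq:homogeneity}. Each operator \eqref{eq:X}--\eqref{eq:Z} acts only on the variables of a single pair $(z_{a_j},z_{b_j})$. Within that pair it maps degree-one monomials to degree-one monomials:
\begin{equation}
\hat X_j z_{a_j}=z_{b_j},\quad \hat X_j z_{b_j}=z_{a_j},\quad \hat Y_j z_{a_j}=i z_{b_j},\quad \hat Y_j z_{b_j}=-i z_{a_j},\quad \hat Z_j z_{a_j}=z_{a_j},\quad \hat Z_j z_{b_j}=-z_{b_j}.
\end{equation}
On all other variables it acts as the identity, because it contains no derivative in them. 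Hence $V$ is invariant under every $\hat P_j$, and so under every product $\hat g_m$. Moreover, under the basis identification \eqref{eq:basis}, $\hat P_j|_V$ is exactly the matrix of $P_j$ acting on qubit $j$.

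\textbf{Step 2 (faithful representation).} From Step 1, operators with distinct qubit indices commute as differential operators outright, since they involve disjoint sets of variables. Consequently the ordered product \eqref{eq:stabop} restricted to $V$ equals the tensor product $\bigotimes_j P_j$ transported through the isomorphism $V\cong(\Cc^2)^{\otimes n}$. Call this isomorphism $\Phi$. We then have $\hat g_m|_V=\Phi\, g_m\,\Phi^{-1}$. Since the product $\hat g_m\hat g_l$ keeps $V$ invariant, it satisfies $\hat g_m\hat g_l|_V=\Phi\, g_m g_l\,\Phi^{-1}$.

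\textbf{Step 3 (conclude).} Because $\Scal$ is abelian, $g_m g_l=g_l g_m$. Therefore
\begin{equation}
[\hat g_m,\hat g_l]f=\Phi[g_m,g_l]\Phi^{-1}f=0
\end{equation}
for all $f$ satisfying \eqref{eq:homogeneity}.

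\textbf{Main obstacle.} The delicate point is Step 1 for the $Y$ operator. The convention in \eqref{eq:Y} must reproduce the correct sign of $Y$, or at least match the sign used in defining $\Scal$. I will check directly that $\hat Y_j=i\hat X_j\hat Z_j$ holds on $V$, which shows that the single-qubit algebra $\hat X\hat Z=-\hat Z\hat X$ is faithful. Even if a global sign of $\hat Y$ were off, the commutation relation is unaffected: $\hat g_m$ would differ from $\Phi g_m\Phi^{-1}$ only by a scalar $\pm1$, and scalars do not change commutators.

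A second point to note is that off $V$ the operators need not commute; for instance $\hat X_j$ and $\hat Z_j$ do not anticommute on degree-two polynomials. This is exactly why the statement is restricted to $f$ satisfying \eqref{eq:homogeneity}.
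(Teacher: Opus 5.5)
Your proof is correct, but it reaches the conclusion by a different route than the paper.

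The paper argues through operator relations. Factors on distinct qubits commute, and same-site factors either commute or anticommute. It then asserts that the holomorphic $\hat X_j,\hat Y_j,\hat Z_j$ obey the Pauli (anti)commutation relations on the physical subspace. Since commuting Pauli strings anticommute on an even number of sites, the signs cancel.

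You instead construct the intertwiner $\Phi:(\Cc^2)^{\otimes n}\to V$ and check on the basis monomials that $\hat P_j|_V=\Phi P_j\Phi^{-1}$. This is precisely the fact the paper asserts without verification. You then transport the identity $g_m g_l=g_l g_m$ through $\Phi$ all at once, so no parity count is needed.

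Your version also states explicitly that $V$ is invariant under each $\hat P_j$, which the paper's sign-cancellation step needs but never invokes. A same-site relation $\hat P_j\hat Q_j f=-\hat Q_j\hat P_j f$ holds only for $f\in V$. It can be applied inside a longer product only if the intermediate functions stay in $V$. Your observation that $\hat X_j$ and $\hat Z_j$ fail to anticommute on degree-two polynomials shows this requirement has real content.

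What the paper's route buys is locality. It uses only single-site relations rather than a full matrix identification, and it isolates same-site anticommutation as the one place where the homogeneity constraint matters. Your route is more self-contained and rigorous. It is also the same intertwining argument the paper later uses in its proof that $\dim\Ccal_{\mathrm{SB}}=2^k$, so it gives a single mechanism for both results.
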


\begin{proof}
Let $g_m = \bigotimes_{j=1}^n P_j^{(m)}$ and $g_l = \bigotimes_{j=1}^n P_j^{(l)}$ be two commuting stabilizer generators. Their holomorphic representations are $\hat{g}_m = \prod_{j: P_j^{(m)} \neq \I} \hat{P}_j^{(m)}$ and $\hat{g}_l = \prod_{j: P_j^{(l)} \neq \I} \hat{P}_j^{(l)}$. Since the operators act on distinct qubits, the differential operators for different qubits commute trivially: $[\hat{P}_j^{(m)}, \hat{P}_k^{(l)}] = 0$ for $j \neq k$. 
For the same qubit $j$, the Pauli operators either commute or anticommute. If they anticommute, $P_j^{(m)} P_j^{(l)} = - P_j^{(l)} P_j^{(m)}$. However, since $g_m$ and $g_l$ commute globally, the number of qubits where they anticommute must be even. In the holomorphic representation, the operators $\hat{X}_j, \hat{Y}_j, \hat{Z}_j$ satisfy the same commutation/anticommutation relations as the Pauli matrices when restricted to the physical subspace defined by the homogeneity constraint \eqref{eq:homogeneity}. Since $\hat{g}_m$ and $\hat{g}_l$ are products of these single-qubit operators, and the total number of anticommuting pairs is even, the overall sign flips cancel out, yielding $[\hat{g}_m, \hat{g}_l]f = 0$ for all $f$ satisfying \eqref{eq:homogeneity}.
\end{proof}

\subsection{Three-Qubit Bit-Flip Code}

The three-qubit bit-flip code encodes $\ket{0}_L = \ket{000}$, $\ket{1}_L = \ket{111}$ with stabilizer generators:
\begin{equation}
g_1 = Z_1 Z_2, \qquad g_2 = Z_2 Z_3.
\label{eq:3qubit_gen}
\end{equation}

In the Segal--Bargmann space, with operators acting from left to right:
\begin{align}
\hat{g}_1 &= (z_{a_1}\partial_{z_{a_1}} - z_{b_1}\partial_{z_{b_1}})(z_{a_2}\partial_{z_{a_2}} - z_{b_2}\partial_{z_{b_2}}), \label{eq:3q_g1}\\
\hat{g}_2 &= (z_{a_2}\partial_{z_{a_2}} - z_{b_2}\partial_{z_{b_2}})(z_{a_3}\partial_{z_{a_3}} - z_{b_3}\partial_{z_{b_3}}). \label{eq:3q_g2}
\end{align}

The encoded logical states are:
\begin{equation}
\ket{0}_L \mapsto z_{a_1}z_{a_2}z_{a_3}, \qquad \ket{1}_L \mapsto z_{b_1}z_{b_2}z_{b_3}.
\label{eq:3q_logical}
\end{equation}

A general codeword $\ket{\psi}_L = \alpha\ket{0}_L + \beta\ket{1}_L$ maps to:
\begin{equation}
f_{\mathrm{code}}(\mathbf{z}) = \alpha\, z_{a_1}z_{a_2}z_{a_3} + \beta\, z_{b_1}z_{b_2}z_{b_3}.
\label{eq:3q_codeword}
\end{equation}

\subsection{Five-Qubit Perfect Code}

The five-qubit code \cite{gottesman1996} is the smallest code correcting arbitrary single-qubit errors, with stabilizer generators:
\begin{equation}
g_1 = X Z Z X \I, \quad g_2 = \I X Z Z X, \quad g_3 = X \I X Z Z, \quad g_4 = Z X \I X Z.
\label{eq:5qubit_gen}
\end{equation}

The holomorphic representation of $g_1 = X_1 Z_2 Z_3 X_4 I_5$ is:
\begin{align}
\hat{g}_1 &= (z_{a_1}\partial_{z_{b_1}} + z_{b_1}\partial_{z_{a_1}})(z_{a_2}\partial_{z_{a_2}} - z_{b_2}\partial_{z_{b_2}}) \notag \\
&\quad \times (z_{a_3}\partial_{z_{a_3}} - z_{b_3}\partial_{z_{b_3}})(z_{a_4}\partial_{z_{b_4}} + z_{b_4}\partial_{z_{a_4}}).
\label{eq:5q_g1}
\end{align}

The logical codewords of the five-qubit code, verified against Gottesman (1996), are:
\begin{align}
\ket{0}_L = \frac{1}{4}(&\ket{00000} + \ket{10010} + \ket{01001} + \ket{10100} + \ket{01010} \notag\\
&- \ket{11011} - \ket{00110} - \ket{11000} - \ket{11101} - \ket{00011} \notag\\
&- \ket{11110} - \ket{01111} - \ket{10001} - \ket{01100} - \ket{10111} + \ket{00101}),
\label{eq:5q_logical0}
\end{align}
which maps to a homogeneous polynomial of degree 5 in the holomorphic variables, with each term being a product of five factors, each drawn from $\{z_{a_j}, z_{b_j}\}$. The signs have been verified to match the standard construction in \cite{gottesman1996}.

\subsection{Seven-Qubit Steane Code}

The Steane code \cite{steane1996} is a $[[7,1,3]]$ CSS code with stabilizer generators:
\begin{align}
g_1 &= \I\I\I XXXX, \quad g_2 = \I XX\I\I XX, \quad g_3 = X\I X\I X\I X, \label{eq:7q_X}\\
g_4 &= \I\I\I ZZZZ, \quad g_5 = \I ZZ\I\I ZZ, \quad g_6 = Z\I Z\I Z\I Z. \label{eq:7q_Z}
\end{align}

The $X$-type stabilizer $g_1$ maps to:
\begin{equation}
\hat{g}_1 = \prod_{j=4}^{7}(z_{a_j}\partial_{z_{b_j}} + z_{b_j}\partial_{z_{a_j}}),
\label{eq:7q_g1}
\end{equation}
while the $Z$-type stabilizer $g_4$ maps to:
\begin{equation}
\hat{g}_4 = \prod_{j=4}^{7}(z_{a_j}\partial_{z_{a_j}} - z_{b_j}\partial_{z_{b_j}}).
\label{eq:7q_g4}
\end{equation}

The CSS structure is manifest in the holomorphic representation: $X$-type stabilizers are symmetric differential operators (invariant under $z_{a_j} \leftrightarrow z_{b_j}$), while $Z$-type stabilizers are diagonal in the holomorphic variables.

The logical operators for the Steane code are transversal:
\begin{align}
\bar{X} &= \bigotimes_{j=1}^7 X_j \mapsto \prod_{j=1}^7(z_{a_j}\partial_{z_{b_j}} + z_{b_j}\partial_{z_{a_j}}), \label{eq:7q_Xbar}\\
\bar{Z} &= \bigotimes_{j=1}^7 Z_j \mapsto \prod_{j=1}^7(z_{a_j}\partial_{z_{a_j}} - z_{b_j}\partial_{z_{b_j}}). \label{eq:7q_Zbar}
\end{align}

The logical codewords are formed by the even and odd weight subcodes of the classical $[7,4,3]$ Hamming code. The logical $\ket{0}_L$ state maps to:
\begin{equation}
\ket{0}_L \mapsto \frac{1}{\sqrt{8}} \sum_{c \in C_{\mathrm{even}}} \prod_{j=1}^7 z_{a_j}^{1-c_j} z_{b_j}^{c_j},
\label{eq:7q_logical0}
\end{equation}
where $C_{\mathrm{even}}$ denotes the 8 codewords of the Hamming code with even Hamming weight. Similarly, $\ket{1}_L$ is the sum over the 8 odd-weight codewords.

\subsection{Nine-Qubit Shor Code}

The Shor code \cite{shor1995} combines the three-qubit bit-flip code with the three-qubit phase-flip code. Its stabilizer generators include:
\begin{align}
g_1 &= Z_1Z_2, \quad g_2 = Z_2Z_3, \quad g_3 = Z_4Z_5, \quad g_4 = Z_5Z_6, \quad g_5 = Z_7Z_8, \quad g_6 = Z_8Z_9, \label{eq:9q_Z}\\
g_7 &= X_1X_2X_3X_4X_5X_6, \quad g_8 = X_4X_5X_6X_7X_8X_9. \label{eq:9q_X}
\end{align}

The holomorphic representation of $g_7$ is:
\begin{equation}
\hat{g}_7 = \prod_{j=1}^{6}(z_{a_j}\partial_{z_{b_j}} + z_{b_j}\partial_{z_{a_j}}),
\label{eq:9q_g7}
\end{equation}
which is a sixth-order differential operator. The encoded logical states are:
\begin{align}
\ket{0}_L &= \frac{1}{2\sqrt{2}}(\ket{000}+\ket{111})^{\otimes 3}, \notag \\
\ket{1}_L &= \frac{1}{2\sqrt{2}}(\ket{000}-\ket{111})^{\otimes 3},
\label{eq:9q_logical}
\end{align}
which map to:
\begin{equation}
\ket{0}_L \mapsto \frac{1}{2\sqrt{2}}\prod_{j=1}^3(z_{a_{3j-2}}z_{a_{3j-1}}z_{a_{3j}} + z_{b_{3j-2}}z_{b_{3j-1}}z_{b_{3j}}),
\label{eq:9q_holo0}
\end{equation}
and similarly for $\ket{1}_L$ with a relative minus sign in each factor.

\begin{theorem}
For any stabilizer code $\Ccal$ with generators $\{g_m\}$, the holomorphic code space $\Ccal_{\mathrm{SB}}$ defined by Eq.~\eqref{eq:codespaceSB} is a finite-dimensional subspace of $\HSB$ of dimension $2^k$, where $k = n - (n-k)$ is the number of logical qubits. Moreover, $\Ccal_{\mathrm{SB}}$ is invariant under all stabilizer operators and is the unique maximal subspace with this property within the homogeneous sector.
\end{theorem}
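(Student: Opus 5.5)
The plan is to reduce everything to the standard finite-dimensional stabilizer theory by means of the unitary isomorphism described in the Physical Subspace Isolation remark. We read the definition in Eq.~\eqref{eq:codespaceSB} as ranging over the homogeneous sector $\mathcal{V}_n\subset\HSB$, that is, over functions satisfying Eq.~\eqref{eq:homogeneity}. Without this restriction the statement fails: a constant function, or $z_{a_1}^2$, is also fixed by $\hat g_1$ of the bit-flip code. By Eq.~\eqref{eq:state}, $\mathcal{V}_n$ is spanned by the $2^n$ monomials $\prod_j z_{a_j}^{1-s_j}z_{b_j}^{s_j}$. These monomials are orthogonal under the Gaussian measure~\eqref{eq:measure}, and each has norm $1$, since $\int |z|^2 e^{-|z|^2}d^2z/\pi=1$. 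The map $U:\ket{\mathbf s}\mapsto \prod_j z_{a_j}^{1-s_j}z_{b_j}^{s_j}$ is therefore a unitary isomorphism $(\Cc^2)^{\otimes n}\to\mathcal{V}_n$.

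The first key step is to show that each $\hat P_j$ from Eqs.~\eqref{eq:X}--\eqref{eq:Z} restricts to $\mathcal{V}_n$ and satisfies $\hat P_j U = U P_j$. This is a direct check on the two monomials $z_{a_j}$ and $z_{b_j}$. For example, $\hat X_j z_{a_j}=z_{b_j}$, and $\hat Z_j z_{b_j}=-z_{b_j}$. For $\hat Y_j$ we compare $-i(z_a\partial_b-z_b\partial_a)$ with $Y=\begin{pmatrix}0&-i\\ i&0\end{pmatrix}$. We also note that $\hat P_j$ only touches the variables $(z_{a_j},z_{b_j})$ and preserves degree one in that pair. Hence the product in Eq.~\eqref{eq:stabop} factorizes over qubits, and $\hat g U = U g$ for every Pauli string $g$. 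In particular $\hat g_m$ is unitary and Hermitian on $\mathcal{V}_n$, and $\hat g_m^2=\mathrm{id}$ there. The commutation of the $\hat g_m$ then follows directly from the preceding Proposition, or equally from the intertwining relation.

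Given the intertwining relation, $\Ccal_{\mathrm{SB}} = U(\Ccal)$, so $\dim\Ccal_{\mathrm{SB}}=\dim\Ccal$. For the dimension count we use the projector $\Pi=\prod_{m=1}^{n-k}\tfrac12(\I+g_m)$ onto $\Ccal$. Expanding the product gives $\Pi=2^{-(n-k)}\sum_{s\in\Scal}s$. Every element $s\neq\I$ of $\Scal$ is traceless: $-\I\notin\Scal$, and $s$ is a nontrivial Pauli string up to a phase. Therefore $\dim\Ccal=\mathrm{tr}\,\Pi=2^{-(n-k)}2^n=2^k$. This is the step where some care is needed. One must use the independence of the generators, so that $|\Scal|=2^{n-k}$, and the requirement $-\I\notin\Scal$, which is implicit in calling $\Ccal$ a nontrivial stabilizer code. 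We will state both explicitly.

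Invariance and maximality are then short. If $f\in\Ccal_{\mathrm{SB}}$, then $\hat g_l f=f\in\Ccal_{\mathrm{SB}}$. More generally, for any element $\hat s$ of the stabilizer group, commutativity gives $\hat g_m\hat s f=\hat s\hat g_m f=\hat s f$. For uniqueness we interpret ``maximal subspace with this property'' as maximal among subspaces of $\mathcal{V}_n$ on which every $\hat g_m$ acts as the identity. Any such subspace $W$ satisfies $W\subseteq\Ccal_{\mathrm{SB}}$ by definition, so $\Ccal_{\mathrm{SB}}$ is the unique maximal one. We will remark in the text that mere invariance is not enough to single out $\Ccal_{\mathrm{SB}}$, since $\mathcal{V}_n$ itself is invariant. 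The precise formulation therefore requires the pointwise-fixed condition. The main obstacle is not technical; it is making the statement's intended meaning precise, namely the restriction to the homogeneous sector and the ``fixed'' rather than ``invariant'' reading of maximality.
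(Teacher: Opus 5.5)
Your proof is correct and follows the paper's route. Both identify the homogeneous sector unitarily with $(\mathbb{C}^2)^{\otimes n}$, use the intertwining $\hat g\,U = U g$ to get $\mathcal{C}_{\mathrm{SB}} = U(\mathcal{C})$, count $2^n/2^{n-k}$, and read maximality as ``pointwise fixed'' within that sector. Your trace computation $\mathrm{tr}\,\Pi = 2^{-(n-k)}\sum_{s\in\mathcal{S}}\mathrm{tr}\,s$ and your direct containment argument are more careful versions of the paper's ``each generator halves the dimension'' and dimension-contradiction steps.

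There is one slip in your side remark. A constant and $z_{a_1}^2$ are annihilated by $\hat g_1=\hat Z_1\hat Z_2$, not fixed: every $\hat P_j$ kills constants, and $\hat Z_2 z_{a_1}^2=0$. A correct witness that the homogeneous restriction is needed is $z_{a_1}^{2}z_{b_1}z_{a_2}z_{a_3}$, which $\hat g_1$ and $\hat g_2$ both fix. Its multiples by powers of $z_{a_1}z_{b_1}$ are also fixed, so the unrestricted fixed space is infinite-dimensional.
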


\begin{proof}
The physical Hilbert space of $n$ qubits, $\mathcal{H}_{\mathrm{phys}} = (\mathbb{C}^2)^{\otimes n}$, has dimension $2^n$. The Segal--Bargmann transform $\mathcal{B}: \mathcal{H}_{\mathrm{phys}} \to \HSB$ is a unitary isomorphism onto its image, which is precisely the subspace of $\HSB$ consisting of functions satisfying the homogeneity constraint \eqref{eq:homogeneity} (degree 1 in each pair $(z_{a_j}, z_{b_j})$). 

A stabilizer code $\Ccal$ is defined as the simultaneous $+1$ eigenspace of $n-k$ independent, mutually commuting stabilizer generators $\{g_1, \dots, g_{n-k}\}$. In the finite-dimensional space $\mathcal{H}_{\mathrm{phys}}$, each independent generator halves the dimension of the eigenspace. Thus, the dimension of $\Ccal$ is $2^n / 2^{n-k} = 2^k$.

Since $\mathcal{B}$ is an isomorphism that intertwines the action of the Pauli group with the holomorphic differential operators (i.e., $\mathcal{B}(g_m \ket{\psi}) = \hat{g}_m \mathcal{B}(\ket{\psi})$), the image $\Ccal_{\mathrm{SB}} = \mathcal{B}(\Ccal)$ is exactly the simultaneous $+1$ eigenspace of $\{\hat{g}_m\}$ within the homogeneous subspace. Therefore, $\dim(\Ccal_{\mathrm{SB}}) = \dim(\Ccal) = 2^k$.

To show it is the unique maximal subspace with this property, suppose there exists a larger subspace $\mathcal{V} \subset \HSB$ satisfying $\hat{g}_m f = f$ for all $f \in \mathcal{V}$. By the invertibility of $\mathcal{B}$ on the homogeneous subspace, $\mathcal{B}^{-1}(\mathcal{V})$ would be a subspace of $\mathcal{H}_{\mathrm{phys}}$ of dimension $> 2^k$ stabilized by $\{g_m\}$, which contradicts the fundamental theorem of stabilizer codes.
\end{proof}

\section{Quantum Errors and Syndrome Extraction in Holomorphic Space}
\label{sec:errors}

\subsection{Error Operators as Holomorphic Perturbations}

A single-qubit error on qubit $j$ is represented by a Pauli operator $E \in \{X_j, Y_j, Z_j\}$. In the holomorphic representation, these act as:

\textbf{Bit-flip error} ($X_j$): Swaps the roles of $z_{a_j}$ and $z_{b_j}$:
\begin{equation}
(\hat{X}_j f)(\ldots, z_{a_j}, z_{b_j}, \ldots) = z_{a_j}\partial_{z_{b_j}}f + z_{b_j}\partial_{z_{a_j}}f.
\label{eq:errorX}
\end{equation}

\textbf{Phase-flip error} ($Z_j$): Changes the sign of the $z_{b_j}$ component:
\begin{equation}
(\hat{Z}_j f)(\ldots, z_{a_j}, z_{b_j}, \ldots) = z_{a_j}\partial_{z_{a_j}}f - z_{b_j}\partial_{z_{b_j}}f.
\label{eq:errorZ}
\end{equation}

\textbf{Combined error} ($Y_j$): Applies both bit and phase flip:
\begin{equation}
(\hat{Y}_j f) = -i(z_{a_j}\partial_{z_{b_j}} - z_{b_j}\partial_{z_{a_j}})f.
\label{eq:errorY}
\end{equation}

\begin{definition}[Error Syndrome in Segal--Bargmann Space]
Given a stabilizer code with generators $\{\hat{g}_m\}_{m=1}^{n-k}$ and an error operator $\hat{E}$, the syndrome $\mathbf{s} = (s_1, \ldots, s_{n-k}) \in \{+1,-1\}^{n-k}$ is defined by:
\begin{equation}
\hat{g}_m(\hat{E}f_{\mathrm{code}}) = s_m(\hat{E}f_{\mathrm{code}}), \qquad m = 1,\ldots,n-k.
\label{eq:syndrome}
\end{equation}
The syndrome identifies the error up to stabilizer equivalence: two errors $E_1, E_2$ have the same syndrome if and only if $E_1 E_2^\dagger \in \Scal$.
\end{definition}

\subsection{Syndrome Extraction as Spectral Projection}

Syndrome measurement corresponds to projecting the errored state onto eigenspaces of the stabilizer operators. In the holomorphic representation, define the spectral projectors:
\begin{equation}
\hat{\Pi}_{\mathbf{s}} = \prod_{m=1}^{n-k} \frac{\I + s_m \hat{g}_m}{2}.
\label{eq:projector}
\end{equation}

\begin{proposition}
The syndrome projectors $\{\hat{\Pi}_{\mathbf{s}}\}$ form a complete orthogonal decomposition of the physical Hilbert space:
\begin{equation}
\sum_{\mathbf{s}} \hat{\Pi}_{\mathbf{s}} = \I, \qquad \hat{\Pi}_{\mathbf{s}}\hat{\Pi}_{\mathbf{s}'} = \delta_{\mathbf{s}\mathbf{s}'}\hat{\Pi}_{\mathbf{s}}.
\label{eq:completeness}
\end{equation}
The code space corresponds to the syndrome $\mathbf{s} = (+1,+1,\ldots,+1)$:
\begin{equation}
\hat{\Pi}_{(+1,\ldots,+1)} f = f \iff f \in \Ccal_{\mathrm{SB}}.
\end{equation}
\end{proposition}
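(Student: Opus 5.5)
The plan is to work entirely on the homogeneous (physical) subspace $\mathcal{H}_{\mathrm{hom}} = \mathcal{B}(\mathcal{H}_{\mathrm{phys}})$. There the operators $\hat{g}_m$ are, by the intertwining property $\mathcal{B} g_m = \hat{g}_m \mathcal{B}$ used in the preceding Theorem, unitarily equivalent to the stabilizer generators $g_m$. All identities can then be checked for the Pauli operators $g_m$ and transported back through $\mathcal{B}$. First we collect the three algebraic facts needed. By the earlier Proposition, the $\hat{g}_m$ commute on $\mathcal{H}_{\mathrm{hom}}$. Each $\hat{g}_m$ is Hermitian there, since Pauli strings are Hermitian and $\mathcal{B}$ is unitary. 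Finally $\hat{g}_m^2 = \I$ on $\mathcal{H}_{\mathrm{hom}}$, because each single-qubit $\hat{P}_j$ squares to the identity on degree-one functions in $(z_{a_j},z_{b_j})$, and factors on distinct qubits commute. From these facts, each $P_m^{\pm} := (\I \pm \hat{g}_m)/2$ is an orthogonal projector with $P_m^+ + P_m^- = \I$ and $P_m^+ P_m^- = (\I - \hat{g}_m^2)/4 = 0$.

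Next, completeness. Since all the $P_m^{\pm}$ commute, expanding the product gives
\begin{equation*}
\sum_{\mathbf{s}} \hat{\Pi}_{\mathbf{s}} = \prod_{m=1}^{n-k}\left(P_m^+ + P_m^-\right) = \I .
\end{equation*}
For orthogonality, we write $\hat{\Pi}_{\mathbf{s}}\hat{\Pi}_{\mathbf{s}'} = \prod_m P_m^{s_m} P_m^{s'_m}$, again using commutativity to regroup. Each factor equals $P_m^{s_m}$ if $s_m = s'_m$ by idempotence, and $0$ otherwise. This yields $\delta_{\mathbf{s}\mathbf{s}'}\hat{\Pi}_{\mathbf{s}}$. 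Self-adjointness of the $\hat{\Pi}_{\mathbf{s}}$ then makes the decomposition orthogonal.

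For the code-space characterization, suppose first that $f \in \Ccal_{\mathrm{SB}}$. Then $\hat{g}_m f = f$, so $P_m^+ f = f$ for every $m$, and hence $\hat{\Pi}_{(+1,\ldots,+1)} f = f$. Conversely, suppose $\hat{\Pi}_{(+\cdots+)} f = f$. Because $P_m^+$ commutes with the other factors and is idempotent, $P_m^+ f = P_m^+ \hat{\Pi} f = \hat{\Pi} f = f$. This gives $\hat{g}_m f = 2P_m^+ f - f = f$ for each $m$, so $f \in \Ccal_{\mathrm{SB}}$ by Eq.~\eqref{eq:codespaceSB}.

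The main obstacle is the scope of the identities. The relation $\hat{g}_m^2 = \I$, and the commutation relations, hold only on the homogeneous sector. For example, on a function of degree $2$ in $(z_{a_j}, z_{b_j})$, $\hat{Z}_j^2$ is not the identity. So the statement must be read with $\I$ denoting the identity on $\mathcal{H}_{\mathrm{hom}}$ (the ``physical Hilbert space''), and I would make this restriction explicit. Relatedly, $\hat{\Pi}_{(+\cdots+)} f = f$ must be taken with $f$ homogeneous. Checking $\hat{P}_j^2 = \I$ on degree-one functions is a short direct computation on the basis $\{z_{a_j}, z_{b_j}\}$, which I will carry out for $X$, $Y$, $Z$ separately.
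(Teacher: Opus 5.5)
Your proof is correct and follows essentially the same route as the paper: establish $\hat g_m^2=\I$ and mutual commutativity on the homogeneous sector, factor the sum as $\prod_m(P_m^+ + P_m^-)=\I$, and get orthogonality from $P_m^{+}P_m^{-}=0$ on any mismatched factor. Beyond the paper, you restrict explicitly to the homogeneous sector (the converse genuinely fails off it), note self-adjointness, and justify the converse of the code-space characterization via $P_m^+ f = P_m^+\hat\Pi f=\hat\Pi f = f$ — points the paper only asserts.
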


\begin{proof}
First, we establish that $\hat{g}_m^2 = \I$ on the physical subspace. For any single-qubit Pauli operator, $\hat{X}_j^2 f = (z_{a_j}\partial_{z_{b_j}} + z_{b_j}\partial_{z_{a_j}})^2 f$. Applying this to a homogeneous function of degree 1, $f = c_0 z_{a_j} + c_1 z_{b_j}$, we get $\hat{X}_j^2 f = f$. Similarly, $\hat{Z}_j^2 f = f$ and $\hat{Y}_j^2 f = f$. Since $\hat{g}_m$ is a product of commuting single-qubit operators (on the physical subspace), $\hat{g}_m^2 = \I$.

Consequently, the eigenvalues of $\hat{g}_m$ are restricted to $\pm 1$. The operators $\hat{\Pi}_{\pm}^{(m)} = \frac{\I \pm \hat{g}_m}{2}$ are therefore valid projection operators, satisfying $(\hat{\Pi}_{\pm}^{(m)})^2 = \hat{\Pi}_{\pm}^{(m)}$ and $\hat{\Pi}_{+}^{(m)} + \hat{\Pi}_{-}^{(m)} = \I$.

Since the stabilizer generators mutually commute (Proposition 1), their projectors also mutually commute: $[\hat{\Pi}_{s_m}^{(m)}, \hat{\Pi}_{s_l}^{(l)}] = 0$. The simultaneous projector for a syndrome $\mathbf{s} = (s_1, \dots, s_{n-k})$ is defined as $\hat{\Pi}_{\mathbf{s}} = \prod_{m=1}^{n-k} \hat{\Pi}_{s_m}^{(m)}$.

Because the individual projectors commute and are idempotent, their product is also a projector: $\hat{\Pi}_{\mathbf{s}}^2 = \hat{\Pi}_{\mathbf{s}}$. Furthermore, for $\mathbf{s} \neq \mathbf{s}'$, there exists at least one $m$ where $s_m \neq s'_m$, making $\hat{\Pi}_{s_m}^{(m)} \hat{\Pi}_{s'_m}^{(m)} = 0$, which implies $\hat{\Pi}_{\mathbf{s}} \hat{\Pi}_{\mathbf{s}'} = 0$.

Finally, summing over all $2^{n-k}$ possible syndromes yields:
\begin{equation}
\sum_{\mathbf{s}} \hat{\Pi}_{\mathbf{s}} = \sum_{s_1 \in \{\pm 1\}} \dots \sum_{s_{n-k} \in \{\pm 1\}} \prod_{m=1}^{n-k} \frac{\I + s_m \hat{g}_m}{2} = \prod_{m=1}^{n-k} \left( \frac{\I + \hat{g}_m}{2} + \frac{\I - \hat{g}_m}{2} \right) = \prod_{m=1}^{n-k} \I = \I.
\end{equation}
The code space corresponds to $\mathbf{s} = (+1, \dots, +1)$, so $\hat{\Pi}_{(+1,\dots,+1)} f = f$ if and only if $\hat{g}_m f = f$ for all $m$, which is the definition of $\Ccal_{\mathrm{SB}}$.
\end{proof}

\subsection{Example: Three-Qubit Code Under Bit-Flip Error}

Consider the three-qubit bit-flip code with an $X_1$ error on the first qubit. The errored state is:
\begin{equation}
\hat{X}_1 f_{\mathrm{code}} = \alpha\, z_{b_1}z_{a_2}z_{a_3} + \beta\, z_{a_1}z_{b_2}z_{b_3}.
\label{eq:3q_error}
\end{equation}

Computing the syndrome:
\begin{align}
\hat{g}_1(\hat{X}_1 f_{\mathrm{code}}) &= (z_{a_1}\partial_{z_{a_1}} - z_{b_1}\partial_{z_{b_1}})(z_{a_2}\partial_{z_{a_2}} - z_{b_2}\partial_{z_{b_2}}) \notag \\
&\quad \times (\alpha z_{b_1}z_{a_2}z_{a_3} + \beta z_{a_1}z_{b_2}z_{b_3}).
\end{align}

For the first term: eigenvalue $(-1)(+1) = -1$. For the second term: eigenvalue $(+1)(-1) = -1$. Hence $s_1 = -1$. Similarly, $\hat{g}_2$ gives eigenvalue $(+1)(+1) = +1$ for the first term and $(-1)(-1) = +1$ for the second term, so $s_2 = +1$. The syndrome $\mathbf{s} = (-1, +1)$ uniquely identifies the $X_1$ error.

\begin{table}[htbp]
\centering
\caption{Syndrome table for the three-qubit bit-flip code. The syndrome $\mathbf{s} = (s_1, s_2)$ uniquely identifies the location of a single bit-flip error.}
\label{tab:3qubit_syndrome}
\begin{tabular}{cc|c}
\hline\hline
Error & Syndrome $(s_1, s_2)$ & Recovery Operation \\
\hline
$\I$ (No error) & $(+1, +1)$ & $\I$ \\
$X_1$ & $(-1, +1)$ & $X_1$ \\
$X_2$ & $(-1, -1)$ & $X_2$ \\
$X_3$ & $(+1, -1)$ & $X_3$ \\
\hline\hline
\end{tabular}
\end{table}

\subsection{General Error Channel}

A general quantum error channel acting on qubit $j$ is described by Kraus operators $\{K_\mu\}$ satisfying $\sum_\mu K_\mu^\dagger K_\mu = \I$. In the holomorphic representation:
\begin{equation}
(\hat{K}_\mu f)(\mathbf{z}) = \sum_{\mathbf{s}} c_{\mu,\mathbf{s}} \prod_{l=1}^n z_{a_l}^{1-s_l}z_{b_l}^{s_l} \cdot \left[\prod_{l} \left(\partial_{z_{a_l}}\right)^{1-s_l}\left(\partial_{z_{b_l}}\right)^{s_l}\right] f(\mathbf{z}).
\label{eq:kraus}
\end{equation}

The action on the code space produces a superposition of syndrome sectors:
\begin{equation}
\hat{K}_\mu f_{\mathrm{code}} = \sum_{\mathbf{s}} \hat{\Pi}_{\mathbf{s}}(\hat{K}_\mu f_{\mathrm{code}}).
\label{eq:error_decomp}
\end{equation}

\section{Toroidal Geometry of Error Syndromes}
\label{sec:torus}

\subsection{Phasor Representation}

Restricting to unit-magnitude variables $z_{a_j} = e^{i\phi_{a_j}}$, $z_{b_j} = e^{i\phi_{b_j}}$, the physical state space becomes the $2n$-torus $\T^{2n}$. The differential operators transform as \cite{almasri2026}:
\begin{equation}
\partial_{z_{a_j}} = -ie^{-i\phi_{a_j}}\partial_{\phi_{a_j}}, \qquad \partial_{z_{b_j}} = -ie^{-i\phi_{b_j}}\partial_{\phi_{b_j}}.
\label{eq:phasor_deriv}
\end{equation}

The Pauli operators become:
\begin{align}
\hat{X}_j &\mapsto -i(e^{i(\phi_{a_j}-\phi_{b_j})}\partial_{\phi_{b_j}} + e^{i(\phi_{b_j}-\phi_{a_j})}\partial_{\phi_{a_j}}), \label{eq:torusX}\\
\hat{Z}_j &\mapsto -i(\partial_{\phi_{a_j}} - \partial_{\phi_{b_j}}). \label{eq:torusZ}
\end{align}

\subsection{Stabilizer Generators on the Torus}

The $Z$-type stabilizer $Z_j Z_k$ becomes:
\begin{equation}
\widehat{Z_j Z_k} = -(\partial_{\phi_{a_j}} - \partial_{\phi_{b_j}})(\partial_{\phi_{a_k}} - \partial_{\phi_{b_k}}).
\label{eq:torusZZ}
\end{equation}

The $X$-type stabilizer $X_j X_k$ becomes:
\begin{align}
\widehat{X_j X_k} &= -\left(e^{i(\phi_{a_j}-\phi_{b_j})}\partial_{\phi_{b_j}} + e^{i(\phi_{b_j}-\phi_{a_j})}\partial_{\phi_{a_j}}\right) \notag \\
&\quad \times \left(e^{i(\phi_{a_k}-\phi_{b_k})}\partial_{\phi_{b_k}} + e^{i(\phi_{b_k}-\phi_{a_k})}\partial_{\phi_{a_k}}\right).
\label{eq:torusXX}
\end{align}

\subsection{Syndrome as Winding Number Shift}

\begin{definition}[Winding Number Vector]
For a function $f(\boldsymbol{\phi})$ on $\T^{2n}$, the winding number vector $\mathbf{w} = (w_{a_1}, w_{b_1}, \ldots, w_{a_n}, w_{b_n}) \in \Z^{2n}$ is defined almost everywhere (for non-vanishing $f$) by:
\begin{align}
w_{a_j} &= \frac{1}{2\pi}\oint_{S^1_{a_j}} d\phi_{a_j}\, \partial_{\phi_{a_j}}(\arg f), \\
w_{b_j} &= \frac{1}{2\pi}\oint_{S^1_{b_j}} d\phi_{b_j}\, \partial_{\phi_{b_j}}(\arg f).
\label{eq:winding}
\end{align}
\end{definition}

\begin{theorem}
A Pauli error $E_j^\alpha$ on qubit $j$ induces a discrete shift in the winding number vector:
\begin{equation}
\mathbf{w} \mapsto \mathbf{w} + \Delta\mathbf{w}_j^\alpha,
\label{eq:winding_shift}
\end{equation}
where the shift depends on the initial state:
\begin{itemize}
\item $X_j$ error: $\Delta\mathbf{w}_j^X = (-1,+1)$ if the initial state has $(w_{a_j}, w_{b_j}) = (1,0)$, or $\Delta\mathbf{w}_j^X = (+1,-1)$ if $(w_{a_j}, w_{b_j}) = (0,1)$. In both cases, the shift transfers one unit of winding between $a_j$ and $b_j$.
\item $Z_j$ error: $\Delta\mathbf{w}_j^Z = (0,0)$ for all initial states. Importantly, while $Z$ errors preserve winding numbers, they are detected via eigenvalue changes of $Z$-type stabilizers, not through winding number shifts.
\item $Y_j$ error: $\Delta\mathbf{w}_j^Y = \Delta\mathbf{w}_j^X$, with the same state-dependent sign.
\end{itemize}
\end{theorem}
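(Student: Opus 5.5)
The plan is to reduce everything to monomials, because on $\T^{2n}$ each monomial is a pure Fourier mode. Under $z_{a_j}=e^{i\phi_{a_j}}$, $z_{b_j}=e^{i\phi_{b_j}}$, a monomial $\prod_l z_{a_l}^{1-s_l}z_{b_l}^{s_l}$ becomes $\exp\bigl(i\sum_l[(1-s_l)\phi_{a_l}+s_l\phi_{b_l}]\bigr)$. This function never vanishes, and $\arg f$ is linear in the angles. The integrals in Definition (Winding Number Vector) therefore return the exponents exactly: $(w_{a_l},w_{b_l})=(1-s_l,s_l)$. So a computational basis monomial has, on qubit $j$, winding $(1,0)$ when $s_j=0$ and $(0,1)$ when $s_j=1$. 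The phrase ``initial state has $(w_{a_j},w_{b_j})=(1,0)$'' will be read as ``$f$ restricted to the $j$-th pair is $z_{a_j}$''. For a general superposition I will state the result sectorwise: $\hat E$ acts linearly, and each monomial component carries a well-defined winding vector.

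\textbf{$X_j$.} On the $j$-th factor, Eq.~\eqref{eq:errorX} gives $\hat X_j z_{a_j}=z_{b_j}$ and $\hat X_j z_{b_j}=z_{a_j}$, leaving the other factors untouched. In the toroidal form \eqref{eq:torusX} this is $-i e^{i(\phi_{a_j}-\phi_{b_j})}\partial_{\phi_{b_j}}$ acting on $e^{i\phi_{b_j}}$, which yields $e^{i\phi_{a_j}}$; the symmetric case works the same way. The winding pair therefore goes from $(1,0)$ to $(0,1)$, a shift of $(-1,+1)$, or from $(0,1)$ to $(1,0)$, a shift of $(+1,-1)$. All other components of $\mathbf w$ are unchanged. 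The multiplicative phase factors $e^{\pm i(\phi_{a_j}-\phi_{b_j})}$ in \eqref{eq:torusX} are exactly the operators that carry this unit transfer, which gives the ``transfer of one unit of winding'' interpretation.

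\textbf{$Z_j$ and $Y_j$.} By \eqref{eq:torusZ}, $\hat Z_j=-i(\partial_{\phi_{a_j}}-\partial_{\phi_{b_j}})$ multiplies each monomial by the scalar $w_{a_j}-w_{b_j}=\pm1$. A nonzero constant does not change $\arg f$ up to an additive constant, so $\Delta\mathbf w^Z_j=0$. The information that $Z_j$ does carry is the sign, which is precisely the stabilizer eigenvalue flip used in Definition (Error Syndrome) and the spectral decomposition \eqref{eq:projector}. For $Y_j$, I will use \eqref{eq:errorY}: $\hat Y_j z_{a_j}=i z_{b_j}$ and $\hat Y_j z_{b_j}=-i z_{a_j}$. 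Alternatively one can write $\hat Y_j=i\hat X_j\hat Z_j$ on the physical subspace. Either way this is the $X_j$ monomial map times a constant phase, so it produces the same winding shift with the same state-dependent sign.

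The main obstacle is superpositions: $\arg f$ of a sum such as $\alpha z_{a_j}+\beta z_{b_j}$ is not linear, and $f$ may vanish on the torus. The winding integral along $S^1_{a_j}$ then depends on whether $|\alpha|>|\beta|$, by the argument principle on the circle. I will not claim the shift for arbitrary $f$. Instead I will make precise that the statement concerns winding sectors, meaning the Fourier support $\mathbf w\in\Z^{2n}$ of $f$. Each Fourier mode $e^{i\mathbf w\cdot\boldsymbol\phi}$ is mapped by $\hat E$ to a multiple of $e^{i(\mathbf w+\Delta\mathbf w)\cdot\boldsymbol\phi}$, so the Fourier support is translated by $\Delta\mathbf w^\alpha_j$. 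For single-monomial states this agrees with the literal winding-number definition by the computation in the first step.
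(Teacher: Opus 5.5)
Your proposal is correct and follows essentially the paper's route: act with the Pauli operators on a single monomial, i.e.\ a single Fourier mode on $\T^{2n}$, use $w_{a_j}+w_{b_j}=1$ to reduce to the $(1,0)$ and $(0,1)$ cases, read off the unit transfer of winding for $X_j$ (and for $Y_j$, which on each monomial differs from $X_j$ only by a phase $\pm i$), and note that $\hat Z_j$ acts as the scalar $w_{a_j}-w_{b_j}=\pm 1$. Your explicit evaluation of the winding integrals on monomials and your mode-by-mode caveat for superpositions (where those integrals can be undefined or can depend on the fixed angles) are refinements the paper leaves implicit; one caution is that your syndrome remark should not be sharpened into the statement's claim that $Z$-type stabilizers detect $Z_j$, since $Z_j$ commutes with every $Z$-type stabilizer and flips only the eigenvalues of stabilizers carrying $X$ or $Y$ on qubit $j$.
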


\begin{proof}
Let $f(\boldsymbol{\phi})$ be a function on the torus $\T^{2n}$. The winding number $w_{a_j}$ is the degree of the map $\phi_{a_j} \mapsto \arg f(\boldsymbol{\phi})$.
For an $X_j$ error, the holomorphic operator is $\hat{X}_j = z_{a_j}\partial_{z_{b_j}} + z_{b_j}\partial_{z_{a_j}}$. On the torus, $z_{a_j} = e^{i\phi_{a_j}}$ and $z_{b_j} = e^{i\phi_{b_j}}$. The operator becomes $\hat{X}_j = -i(e^{i(\phi_{a_j}-\phi_{b_j})}\partial_{\phi_{b_j}} + e^{i(\phi_{b_j}-\phi_{a_j})}\partial_{\phi_{a_j}})$.
Consider a basis monomial $f = e^{i w_{a_j} \phi_{a_j}} e^{i w_{b_j} \phi_{b_j}}$. Applying $\hat{X}_j$ yields:
\begin{equation}
\begin{split}
\hat{X}_j f &= -i \left( e^{i(\phi_{a_j}-\phi_{b_j})} (i w_{b_j}) + e^{i(\phi_{b_j}-\phi_{a_j})} (i w_{a_j}) \right) e^{i w_{a_j} \phi_{a_j}} e^{i w_{b_j} \phi_{b_j}} \\
&= w_{b_j} e^{i (w_{a_j}+1)\phi_{a_j}} e^{i (w_{b_j}-1)\phi_{b_j}} + w_{a_j} e^{i (w_{a_j}-1)\phi_{a_j}} e^{i (w_{b_j}+1)\phi_{b_j}}.
\end{split}
\end{equation}
For a physical state, the homogeneity constraint requires $w_{a_j} + w_{b_j} = 1$. Thus, either $(w_{a_j}, w_{b_j}) = (1, 0)$ or $(0, 1)$.
If $(1, 0)$, $\hat{X}_j f = 1 \cdot e^{i(0)\phi_{a_j}} e^{i(1)\phi_{b_j}} = e^{i\phi_{b_j}}$, so the new winding numbers are $(0, 1)$. The shift is $\Delta \mathbf{w}_j^X = (0-1, 1-0) = (-1, 1)$.
If $(0, 1)$, $\hat{X}_j f = 1 \cdot e^{i(1)\phi_{a_j}} e^{i(0)\phi_{b_j}} = e^{i\phi_{a_j}}$, so the new winding numbers are $(1, 0)$. The shift is $\Delta \mathbf{w}_j^X = (1-0, 0-1) = (1, -1)$. In both cases, the magnitude of the shift is a transfer of 1 unit of winding between $a_j$ and $b_j$, with the sign determined by the initial state.
For a $Z_j$ error, $\hat{Z}_j = z_{a_j}\partial_{z_{a_j}} - z_{b_j}\partial_{z_{b_j}} = -i(\partial_{\phi_{a_j}} - \partial_{\phi_{b_j}})$. Applying this to $f$ gives $\hat{Z}_j f = (w_{a_j} - w_{b_j}) f$. This is merely a scalar multiplication by an integer (either $1$ or $-1$), which corresponds to a global phase shift of $0$ or $\pi$. Since the phase is constant with respect to $\phi_{a_j}$ and $\phi_{b_j}$, its derivative is zero, and the winding numbers remain unchanged: $\Delta \mathbf{w}_j^Z = (0, 0)$. However, $Z$ errors are still detectable through their effect on the eigenvalues of $Z$-type stabilizers, which measure the relative phase between $\ket{0}$ and $\ket{1}$ components.
For a $Y_j$ error, $\hat{Y}_j = -i(z_{a_j}\partial_{z_{b_j}} - z_{b_j}\partial_{z_{a_j}})$. On the torus, this is proportional to the same phase-shifting structure as $X_j$, yielding the same winding number transfer: $\Delta \mathbf{w}_j^Y = \Delta \mathbf{w}_j^X$.
\end{proof}

\begin{remark}[Action of $Z$-type Stabilizers on the Torus]
The toroidal winding number framework is intrinsically sensitive to occupation-number changes (bit-flips) but blind to pure phase-flips ($Z$ errors). This geometric limitation precisely motivates the necessity of $Z$-type stabilizers in quantum error correction to detect phase errors, which manifest as coefficient sign flips rather than topological winding shifts. To see this explicitly, consider a $Z_j Z_k$ stabilizer acting on the torus:
\begin{equation}
\widehat{Z_j Z_k} = -(\partial_{\phi_{a_j}} - \partial_{\phi_{b_j}})(\partial_{\phi_{a_k}} - \partial_{\phi_{b_k}}).
\end{equation}
For a codeword $f_{\mathrm{code}}$ with a $Z_j$ error, the state becomes $\hat{Z}_j f_{\mathrm{code}}$. Since $\hat{Z}_j = -i(\partial_{\phi_{a_j}} - \partial_{\phi_{b_j}})$ acts diagonally on the torus, it multiplies each monomial by an eigenvalue $\pm 1$ depending on whether the qubit is in state $\ket{0}$ or $\ket{1}$. The stabilizer $\widehat{Z_j Z_k}$ then detects this sign flip by yielding eigenvalue $-1$ when acting on the errored state:
\begin{equation}
\widehat{Z_j Z_k}(\hat{Z}_j f_{\mathrm{code}}) = -(\hat{Z}_j f_{\mathrm{code}}),
\end{equation}
revealing the syndrome through the sign change. This demonstrates that $Z$-type stabilizers function as phase-sensitive differential operators that project onto eigenspaces distinguished by relative sign patterns, complementing the topological winding number detection of $X$-type errors.
\end{remark}

\begin{corollary}
The syndrome of an error $E$ is completely determined by the induced winding number shift $\Delta\mathbf{w}$ modulo the winding shifts generated by stabilizer elements. The syndrome extraction process identifies the coset $[\Delta\mathbf{w}] \in \Z^{2n}/\Lambda_\Scal$, where $\Lambda_\Scal$ is the lattice of stabilizer-induced winding shifts.
\end{corollary}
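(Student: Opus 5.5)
The plan is to reduce everything to the binary symplectic representation of Pauli operators and show that the winding shift of the Theorem is a faithful integer lift of that representation. Write each Pauli error as $E \propto \prod_j X_j^{x_j} Z_j^{z_j}$ with $(\mathbf{x},\mathbf{z})\in \mathbb{F}_2^{2n}$. Each stabilizer generator $g_m$ has its own vector $(\mathbf{x}_m,\mathbf{z}_m)$. The syndrome bit is then $s_m = (-1)^{\langle (\mathbf{x},\mathbf{z}),(\mathbf{x}_m,\mathbf{z}_m)\rangle}$, where $\langle\cdot,\cdot\rangle$ is the symplectic form. This follows from the commutation/anticommutation argument already used in the proof of Proposition~1 and from Eq.~\eqref{eq:syndrome}.

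\textbf{Step 1.} By the Theorem, $X_j$ and $Y_j$ produce a shift $\pm(e_{b_j}-e_{a_j})$ in the $(a_j,b_j)$ block, while $Z_j$ produces none. Shifts on different qubits add, because the operators act on disjoint variable pairs. Hence, for a multi-qubit error acting on a computational monomial, $\Delta\mathbf{w} = \sum_j \epsilon_j x_j (e_{b_j}-e_{a_j})$ with $\epsilon_j=\pm1$ depending on the initial state. The state-dependent sign is the first nuisance. I would remove it by passing to the quotient in which $e_{b_j}-e_{a_j}\equiv -(e_{b_j}-e_{a_j})$, that is, by including $2(e_{b_j}-e_{a_j})$ in $\Lambda_\Scal$. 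After this, $\Delta\mathbf{w}$ determines exactly $\mathbf{x}(E)$.

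\textbf{Step 2.} This step handles the $Z$-part, which I expect to be the main obstacle, since $Z$ errors are invisible in the winding vector of a single frame (the Remark says so explicitly). I would first try to read $\Delta\mathbf{w}$ as the winding data in both the computational frame and the conjugate frame. In the conjugate frame, obtained by the Hadamard substitution $z_{a_j}\mapsto (z_{a_j}+z_{b_j})/\sqrt2$ and $z_{b_j}\mapsto (z_{a_j}-z_{b_j})/\sqrt2$ (a holomorphic linear change preserving \eqref{eq:homogeneity}), $Z_j$ acts as $X_j$ does in the original frame. Applying the Theorem verbatim in the rotated variables then gives a second shift block recording $\mathbf{z}(E)$. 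Concatenating the two blocks, and reducing by the sign relations of Step~1, gives an injective homomorphism from Pauli errors modulo phases to $\Z^{2n}/(2\text{-torsion relations}) \cong \mathbb{F}_2^{2n}$, with $E\mapsto[(\mathbf{x}(E),\mathbf{z}(E))]$.

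\textbf{Step 3.} Define $\Lambda_\Scal$ as the preimage lattice generated by the shifts of all stabilizer elements together with the sign relations. Because $s_m$ is a linear functional of $(\mathbf{x},\mathbf{z})$, it descends to the quotient $\Z^{2n}/\Lambda_\Scal$ in the required way, and it is constant on cosets for two reasons. First, any $S\in\Scal$ commutes with every $g_m$, so shifting $\Delta\mathbf{w}$ by an element of $\Lambda_\Scal$ leaves every $s_m$ unchanged. Second, by the Definition of the syndrome (same syndrome iff $E_1E_2^\dagger\in\Scal$), the resulting map $[\Delta\mathbf{w}]\mapsto\mathbf{s}$ is well defined on classes. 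Syndrome extraction via the projectors $\hat\Pi_{\mathbf{s}}$ of Proposition~2 then selects exactly this class.

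The delicate points are the well-definedness of the sign quotient in Step~1 and the justification that the conjugate-frame winding is a legitimate part of $\Delta\mathbf{w}$. If the second fails, the argument as written establishes the corollary only for the $Z$-type syndrome bits.
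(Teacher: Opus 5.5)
\paragraph{Verdict.}
There is a genuine gap, and it sits exactly at the point you flagged in Step~2. It cannot be closed, because the corollary is false for the $\Delta\mathbf{w}$ it refers to.

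\paragraph{The $Z$-part fails.}
In the statement, $\Delta\mathbf{w}$ is the shift of the computational-frame vector $(w_{a_1},w_{b_1},\dots,w_{a_n},w_{b_n})$ from the Definition. By the Theorem, every $Z_j$ has $\Delta\mathbf{w}=0$.
\begin{itemize}
\item In the Steane code, $Z_1$ anticommutes with $g_3=X\I X\I X\I X$; in the five-qubit code it anticommutes with $g_1=XZZX\I$.
\item So $Z_1$ and $\I$ have literally equal shifts but different syndromes. No choice of $\Lambda_\Scal$ can make the syndrome a function of $[\Delta\mathbf{w}]$.
\end{itemize}
Your Hadamard-frame winding is computed on a different torus (the unit torus of the rotated variables), so it is not a component of $\Delta\mathbf{w}$. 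The relations $2(e_{b_j}-e_{a_j})$ of Step~1 are not stabilizer-induced shifts either: in the three-qubit code all generators are $Z$-type, so the stabilizer-induced lattice is $\{0\}$.

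What Steps~1--3 do establish, once Step~2 is read as a definition, is a different and correct statement. The syndrome is the symplectic pairing of $(\mathbf{x}(E),\mathbf{z}(E))\in\mathbb{F}_2^{2n}$ with the generators, so it is constant on cosets of the binary stabilizer space. Two-frame winding data is one way to encode that vector. Your fallback is also weaker than you suggest: in the five-qubit code no generator is $Z$-type, so $\mathbf{x}(E)$ alone determines no syndrome bit.

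\paragraph{Overclaim at the end of Step~3.}
The projectors $\hat\Pi_{\mathbf{s}}$ do not select a single class modulo $\Lambda_\Scal$. The kernel of the syndrome map is the normalizer, so the syndrome fixes the class only modulo normalizer shifts. In your binary setting, each syndrome value corresponds to $4^k$ classes. Concretely, in the three-qubit code, $\bar X=X_1X_2X_3$ lies in a nonzero class of both the literal lattice and your enlarged one. Yet its syndrome is $(+1,+1)$, the same as that of $\I$.

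For the same reason, do not invoke the Definition's ``same syndrome iff $E_1E_2^\dagger\in\Scal$.'' Only the ``if'' direction is true. That direction, which your first reason already supplies, is all that well-definedness needs.

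\paragraph{Comparison with the paper.}
The paper gives no separate proof. It treats the corollary as immediate from the winding Theorem plus that Definition clause. The proof of the later classification theorem repeats this by asserting that an error is characterized by its shift. This passes over exactly the $Z$-blindness and state-dependent signs you identified. The paper's Remark claiming that $\widehat{Z_jZ_k}$ detects $Z_j$ is also wrong, since $\hat Z_j$ commutes with it.

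Your diagnosis is sharper than the paper's. What can actually be proved is the corrected binary-symplectic statement, with ``identifies the coset'' weakened to ``is a function of the coset.''
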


\subsection{Recovery as Hamiltonian Flow}

\begin{proposition}
The recovery operation $R$ that corrects error $E$ with syndrome $\mathbf{s}$ acts as a Hamiltonian flow on $\T^{2n}$ that restores the original winding number vector. Specifically, if the error induces $\mathbf{w} \mapsto \mathbf{w} + \Delta\mathbf{w}$, the recovery generates a flow $\Phi_R^t$ such that:
\begin{equation}
\Phi_R^1: \mathbf{w} + \Delta\mathbf{w} \mapsto \mathbf{w}.
\label{eq:recovery_flow}
\end{equation}
The recovery Hamiltonian is:
\begin{equation}
H_R = -\sum_{j} (\Delta w_{a_j}\,\phi_{b_j} - \Delta w_{b_j}\,\phi_{a_j}),
\label{eq:recovery_H}
\end{equation}
which generates uniform translation on the torus with period $2\pi$.
\end{proposition}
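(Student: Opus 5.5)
The plan is to make precise the classical phase space on which winding numbers live, and then check Eq.~\eqref{eq:recovery_flow} directly from Hamilton's equations. On the torus the angles $\phi_{a_j},\phi_{b_j}$ are only half of a phase space. The winding numbers $w_{a_j},w_{b_j}$ are the eigenvalues of $-i\partial_{\phi}$ on the monomials $e^{i w\cdot\phi}$, as in the proof of the winding-shift theorem, so they play the role of action variables conjugate to the angles. I will therefore work on $T^*\T^{2n}\cong \T^{2n}\times\mathbb{R}^{2n}$ and take $\mathbf{w}$ as the continuous momentum coordinates. The integer lattice $\Z^{2n}$ is then the set of quantized sectors. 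The first step is to fix a symplectic form $\omega$ under which $H_R$ in Eq.~\eqref{eq:recovery_H} generates the required translation.

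A direct computation from $H_R$ gives $\partial H_R/\partial\phi_{a_j}=\Delta w_{b_j}$ and $\partial H_R/\partial\phi_{b_j}=-\Delta w_{a_j}$, and both are constant. The Hamiltonian flow is therefore a uniform linear translation in $\mathbf{w}$, and the angles do not move because $H_R$ is independent of $\mathbf{w}$.

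For the flow to subtract $\Delta\mathbf{w}$ with the sign conventions of Eq.~\eqref{eq:recovery_H}, $\phi_{b_j}$ must be paired with $w_{a_j}$ and $\phi_{a_j}$ with $w_{b_j}$. I will use the cross-paired form
\begin{equation*}
\omega=\sum_j\bigl(d\phi_{b_j}\wedge dw_{a_j}+dw_{b_j}\wedge d\phi_{a_j}\bigr).
\end{equation*}
Under $\omega$, Hamilton's equations read $\dot w_{a_j}=-\Delta w_{a_j}$ and $\dot w_{b_j}=-\Delta w_{b_j}$. Integrating from $t=0$ to $t=1$ then gives $\Phi_R^1:\mathbf{w}+\Delta\mathbf{w}\mapsto\mathbf{w}$.

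With the naive pairing $\phi_{a_j}\leftrightarrow w_{a_j}$, only one component of each pair comes out with the right sign. This sign and convention bookkeeping is the first thing I will verify carefully. If the cross-pairing is judged unnatural, the fallback is to keep the standard form and flip the sign of the $\Delta w_{a_j}\phi_{b_j}$ term in $H_R$. In either version, what carries the argument is that $dH_R$ is constant.

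The main obstacle is global: $H_R$ is linear in the angles, so it is not a single-valued function on $\T^{2n}$. I will deal with this by observing that $dH_R=\sum_j(\Delta w_{b_j}\,d\phi_{a_j}-\Delta w_{a_j}\,d\phi_{b_j})$ is a closed, well-defined $1$-form on $\T^{2n}$. The vector field $X_R$ defined by $\iota_{X_R}\omega=dH_R$ is therefore globally well defined and symplectic. It is locally Hamiltonian, with $H_R$ a multivalued primitive whose periods are $2\pi\Delta w$. This is the precise sense of ``uniform translation on the torus with period $2\pi$''. Because $\Delta\mathbf{w}\in\Z^{2n}$, the time-one map sends integer lattice points to integer lattice points, so quantized sectors are preserved.

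Finally I will connect this classical flow to the actual recovery operator. By Table~\ref{tab:3qubit_syndrome} and the syndrome proposition, $R=E$ for a Pauli error identified from its syndrome. By the winding-shift theorem, applying $\hat X_j$ or $\hat Y_j$ to a monomial in sector $\mathbf{w}+\Delta\mathbf{w}$ returns it to sector $\mathbf{w}$, because the shifts for initial states $(0,1)$ and $(1,0)$ are opposite. This agrees with $\Phi_R^1$ acting on each monomial of the errored codeword.

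For $Z_j$ errors, $\Delta\mathbf{w}=0$, so $H_R=0$ and the flow is trivial. This is consistent with the preceding remark: the residual sign flip is undone by $\hat Z_j$ acting as a constant $\pm1$ on each monomial, which is a phase and not a winding change. I will state this case explicitly, noting that the proposition's claim about winding restoration then holds trivially.
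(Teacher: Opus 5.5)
\textbf{The gap is in your last step}, where you identify $\Phi_R^1$ with the actual recovery $R=\hat E$. The winding shift produced by $X_j$ or $Y_j$ is state-dependent; that is exactly what the winding-shift theorem says. It is also not fixed by the syndrome. Yet $H_R$ is built from a single constant vector $\Delta\mathbf{w}$.

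Take the three-qubit code. In $\hat X_1(\alpha z_{a_1}z_{a_2}z_{a_3}+\beta z_{b_1}z_{b_2}z_{b_3})$ the two branches are shifted on qubit $1$ by $(-1,+1)$ and $(+1,-1)$ respectively, and both carry the same syndrome $(-1,+1)$. Whichever of these you put into $H_R$, the time-one map returns one branch to its sector. It sends the other to $(w_{a_1},w_{b_1})=(2,-1)$ or $(-1,2)$, which is not a polynomial degree-one sector at all. $\hat X_1$ restores both branches only because it is an involution whose shift depends on the branch, and no uniform translation with constant $dH_R$ can do that.

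So your sentence ``this agrees with $\Phi_R^1$ acting on each monomial of the errored codeword'' is false for every codeword with $\alpha\beta\neq0$, which is exactly where recovery must work. What realizes $\hat X_j$ classically is a $\mathbf{w}$-dependent reflection $w_{a_j}\leftrightarrow w_{b_j}$. One example is the mode-swapping flow of $\bar z_{a_j}z_{b_j}+\bar z_{b_j}z_{a_j}$ on $\Cc^2$ at time $\pi/2$. It is not a translation generated by a function linear in the angles.

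The flow itself is also tuned rather than derived. With the convention you adopt later, $\iota_{X_R}\omega=dH_R$, your cross-paired $\omega$ gives $\dot w_{a_j}=+\Delta w_{a_j}$ and $\dot w_{b_j}=+\Delta w_{b_j}$, which is the wrong direction; you need $\iota_{X_R}\omega=-dH_R$. More importantly, the cross-pairing contradicts your own motivation. If $w_{a_j}$ is the eigenvalue of $-i\partial_{\phi_{a_j}}$, it is conjugate to $\phi_{a_j}$. Under that pairing the given $H_R$ yields $\dot{\mathbf{w}}_j=\pm(-\Delta w_{b_j},\Delta w_{a_j})$, which for $\Delta\mathbf{w}_j=(-1,+1)$ even changes $w_{a_j}+w_{b_j}$.

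Compare the paper's route. It stays on $\T^{2n}$ with $\omega=\sum_j d\phi_{a_j}\wedge d\phi_{b_j}$, obtains the angle translation $\boldsymbol{\phi}\mapsto\boldsymbol{\phi}-\Delta\mathbf{w}$, and then asserts that this restores $\mathbf{w}$. Your move to $T^*\T^{2n}$ is what makes $\mathbf{w}+\Delta\mathbf{w}\mapsto\mathbf{w}$ a literal statement about a flow. A rigid translation of the angles is homotopic to the identity and cannot change any winding number, so the paper's final sentence does not follow. Your handling of the multivaluedness of $H_R$ through the closed form $dH_R$ is also an improvement on the paper.

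The price is that your flow is no longer ``on $\T^{2n}$'' and does not translate the torus at all, since the angles never move. The ``period $2\pi$'' clause survives only as a reinterpretation. As written, neither your argument nor the paper's shows that $R$ itself acts as the flow of $H_R$.
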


\begin{proof}
On the torus $\T^{2n}$, the symplectic form is $\omega = \sum_j d\phi_{a_j} \wedge d\phi_{b_j}$. A Hamiltonian $H$ generates a flow via Hamilton's equations: $\dot{\phi}_{a_j} = \frac{\partial H}{\partial \phi_{b_j}}$ and $\dot{\phi}_{b_j} = -\frac{\partial H}{\partial \phi_{a_j}}$.
Let the error induce a winding shift $\Delta \mathbf{w}$. We propose the recovery Hamiltonian $H_R = -\sum_j (\Delta w_{a_j} \phi_{b_j} - \Delta w_{b_j} \phi_{a_j})$.
Computing the equations of motion:
\begin{equation}
\dot{\phi}_{a_j} = \frac{\partial H_R}{\partial \phi_{b_j}} = -\Delta w_{a_j}, \qquad \dot{\phi}_{b_j} = -\frac{\partial H_R}{\partial \phi_{a_j}} = -\Delta w_{b_j}.
\end{equation}
Integrating these equations from $t=0$ to $t=1$ yields a uniform translation:
\begin{equation}
\phi_{a_j}(1) = \phi_{a_j}(0) - \Delta w_{a_j}, \qquad \phi_{b_j}(1) = \phi_{b_j}(0) - \Delta w_{b_j}.
\end{equation}
This flow $\Phi_R^1$ exactly reverses the winding number shift induced by the error, mapping $\mathbf{w} + \Delta \mathbf{w} \mapsto \mathbf{w}$. Since the Hamiltonian is linear in the angles, the flow is a rigid translation on the torus, preserving the symplectic structure and thus representing a valid canonical transformation (unitary operation in the quantum setting).
\end{proof}

\subsection{Topological Error Classification}

The fundamental group $\pi_1(\T^{2n}) = \Z^{2n}$ classifies distinct homotopy classes of trajectories on the torus. Quantum errors correspond to transitions between homotopy classes:

\begin{theorem}
Two errors $E_1$ and $E_2$ are indistinguishable by syndrome measurement if and only if they induce the same winding number shift modulo the stabilizer lattice:
\begin{equation}
\Delta\mathbf{w}_{E_1} \equiv \Delta\mathbf{w}_{E_2} \pmod{\Lambda_\Scal}.
\label{eq:equiv}
\end{equation}
Furthermore, a code of distance $d$ can correct all errors inducing winding shifts $|\Delta\mathbf{w}| < d$, where $|\cdot|$ denotes the $\ell^1$ norm on $\Z^{2n}$.
\end{theorem}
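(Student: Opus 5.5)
The plan is to reduce the statement to the binary symplectic description of Pauli operators and then read the winding shifts off that description. For a Pauli operator $E=\bigotimes_j P_j$, write its symplectic vector as $(\mathbf{x}|\mathbf{z})\in\mathbb{F}_2^{2n}$. By the preceding theorem on winding shifts, only the $X$-part $\mathbf{x}$ produces a nonzero shift. On qubit $j$ with $x_j=1$ the shift is $\pm(-1,+1)$ in the $(a_j,b_j)$ slots, and the sign depends on the monomial acted on.

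\textbf{Step 1: remove the sign ambiguity.} I would take $\Lambda_\Scal$ to contain the vectors $2(e_{a_j}-e_{b_j})$, since applying $\hat X_j$ twice returns the monomial and so corresponds to the zero shift. I would also include the images of the stabilizer generators. Under this choice the class $[\Delta\mathbf{w}_E]$ depends only on $\mathbf{x}(E)$ modulo the $X$-parts of $\Scal$. This makes the coset $[\Delta\mathbf{w}]\in\Z^{2n}/\Lambda_\Scal$ of the preceding corollary well defined.

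\textbf{Step 2: compare with the syndrome.} By the standard syndrome computation, the $m$-th syndrome bit is the symplectic product $\langle g_m,E\rangle$. The projectors of the earlier proposition on syndrome projectors show that the syndrome of $E f_{\mathrm{code}}$ is exactly this vector of signs. Hence $E_1$ and $E_2$ are indistinguishable if and only if $E_1E_2^\dagger$ commutes with every $g_m$.

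\textbf{Main obstacle.} This is where I expect the argument to break if taken literally. Pure $Z$ errors have $\Delta\mathbf{w}=0$ but can have a nontrivial syndrome, as the remark on $Z$-type stabilizers shows. So the "if" direction fails unless the winding data is enriched. The plan is to restrict the equivalence to the sector the torus sees: compare errors with equal $Z$-parts, or restrict to $X$-type errors. Alternatively, adjoin to $\Delta\mathbf{w}$ the $\mathbb{F}_2$ phase label $\mathbf{z}(E)$, read off from the integer eigenvalue $w_{a_j}-w_{b_j}$ produced by $\hat Z_j$. With this augmented label, the statement becomes the identification of $\Z^{2n}/\Lambda_\Scal$, extended by the phase labels, with $\mathbb{F}_2^{2n}$ modulo the appropriate lattice. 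The equivalence then follows from nondegeneracy of the symplectic form restricted to the generator set. I would attempt this reformulation first and state clearly which version is proved.

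\textbf{Step 3: the distance statement.} Each flipped qubit contributes $|{\pm}1|+|{\mp}1|=2$ to the $\ell^1$ norm, so $|\Delta\mathbf{w}_E|=2\,\mathrm{wt}_X(E)$. Thus $|\Delta\mathbf{w}|<d$ forces $\mathrm{wt}_X(E)\le\lfloor(d-1)/2\rfloor$. For two such errors, $E_1^\dagger E_2$ has $X$-weight at most $d-1$, so it is either in $\Scal$ or detectable. This gives the Knill--Laflamme condition via the stabilizer-code dimension theorem above.

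For errors with a $Z$-component, the $\ell^1$ bound controls only the $X$-weight. So either the claim is restricted to $X$-type errors, or the norm is augmented by the weight of $\mathbf{z}(E)$, exactly as in the obstacle step.
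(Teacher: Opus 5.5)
You are right that the statement cannot be proved as written. The paper's own proof does not get around your obstacle. It imports ``same syndrome $\iff E_1E_2^\dagger\in\Scal$'' from the paper's syndrome definition. It then simply asserts three things: that $\Delta\mathbf{w}_{E_1E_2^\dagger}=\Delta\mathbf{w}_{E_1}-\Delta\mathbf{w}_{E_2}$, that membership in $\Lambda_\Scal$ detects membership in $\Scal$, and that the minimal nontrivial $\ell^1$ shift is exactly $d$. Your observations that $Z$-parts are invisible to winding and that $|\Delta\mathbf{w}_E|_1=2\,\mathrm{wt}_X(E)$ refute the last two assertions. For example, in the Steane code $Z_1Z_2Z_3$ is a logical operator with $|\Delta\mathbf{w}|=0<d$, so the distance clause is false for general errors. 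Your Step 3, restricted to $X$-type errors, is a correct proof of what survives.

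The gap is in your repair of the first clause. You attribute its failure only to the invisibility of $Z$-parts, but your own Step 2 exposes a second, independent failure. Indistinguishability means $E_1E_2^\dagger\in N(\Scal)$, i.e.\ its symplectic vector lies in $\Scal^\perp$. That space has dimension $n+k$ and strictly contains $\Scal$ (dimension $n-k$) once $k\ge 1$. Hence any quotient by a lattice built from $\Scal$ gives the wrong equivalence, whether it is your $\Lambda_\Scal$ alone or with the adjoined labels $\mathbf{z}(E)$. The ``only if'' direction then fails on logical operators. In the bit-flip code, $X_1X_2X_3$ has the same trivial syndrome as $\I$, but its shift is not in $\Lambda_\Scal$, even modulo your sign vectors $2(e_{a_j}-e_{b_j})$. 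The nondegeneracy argument you invoke produces $\mathbb{F}_2^{2n}/\Scal^\perp\cong\mathbb{F}_2^{n-k}$. So ``the appropriate lattice'' must be the image of the normalizer, not of $\Scal$; that changes the theorem and must be said explicitly.

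Your fallback of restricting to $X$-type errors (or to equal $Z$-parts) while keeping the winding lattice also breaks the ``if'' direction for non-CSS codes. In the five-qubit code, $\Delta\mathbf{w}_{X_1X_4}\equiv\Delta\mathbf{w}_{g_1}\equiv 0 \pmod{\Lambda_\Scal}$, because the $Z$'s of $g_1=XZZX\I$ are invisible. Yet $X_1X_4$ anticommutes with $g_2$ while $\I$ does not. Passing to $X$-parts of $\Scal$ is harmless only for CSS codes; in general you need the full label $(\mathbf{x}|\mathbf{z})$ and the quotient by $\Scal^\perp$.
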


\begin{proof}
Two errors $E_1$ and $E_2$ are indistinguishable by syndrome measurement if and only if they produce the same syndrome for all code states. By Definition 3.1, this occurs if and only if $E_1 E_2^\dagger \in \Scal$.
In the toroidal representation, the action of an error $E$ is characterized by its induced winding number shift $\Delta \mathbf{w}_E$. The composition of errors corresponds to the addition of their winding shifts: $\Delta \mathbf{w}_{E_1 E_2^\dagger} = \Delta \mathbf{w}_{E_1} - \Delta \mathbf{w}_{E_2}$.
The stabilizer group $\Scal$ induces a specific lattice of winding shifts, denoted $\Lambda_\Scal \subset \Z^{2n}$. Therefore, $E_1 E_2^\dagger \in \Scal$ if and only if $\Delta \mathbf{w}_{E_1} - \Delta \mathbf{w}_{E_2} \in \Lambda_\Scal$, which is equivalent to $\Delta \mathbf{w}_{E_1} \equiv \Delta \mathbf{w}_{E_2} \pmod{\Lambda_\Scal}$.
For the second part, the distance $d$ of a code is the minimum weight of a logical operator. A logical operator is an element of the normalizer of $\Scal$ that is not in $\Scal$. Geometrically, this corresponds to a winding number shift that preserves the code space (commutes with all stabilizers) but is not generated by the stabilizers themselves. The minimum $\ell^1$ norm of such a non-trivial shift in the quotient lattice $\Z^{2n}/\Lambda_\Scal$ is exactly $d$. Therefore, any error inducing a shift with $|\Delta \mathbf{w}| < d$ must either be in $\Lambda_\Scal$ (a stabilizer, hence trivial) or have a unique syndrome, making it correctable.
\end{proof}

\section{K\"ahler Geometry of the Knill--Laflamme Condition}
\label{sec:kahler}

\subsection{Code Space as Holomorphic Submanifold}

The full Segal--Bargmann space carries a natural K\"ahler structure with potential:
\begin{equation}
K(\mathbf{z}, \bar{\mathbf{z}}) = \|\mathbf{z}\|^2 = \sum_{j=1}^n(|z_{a_j}|^2 + |z_{b_j}|^2).
\label{eq:kahler_potential}
\end{equation}

This induces the Fubini--Study metric on the projective space $\CP^{2^n-1}$:
\begin{equation}
ds^2_{\FS} = \frac{\braket{d\psi}{d\psi}}{\braket{\psi}{\psi}} - \frac{|\braket{\psi}{d\psi}|^2}{\braket{\psi}{\psi}^2}.
\label{eq:FS_metric}
\end{equation}

The code space $\Ccal_{\mathrm{SB}}$ defines a holomorphic submanifold $\Mcal_{\Ccal} \hookrightarrow \CP^{2^n-1}$ of complex dimension $k$ (the number of logical qubits).

\subsection{Geometric Knill--Laflamme Condition}

The Knill--Laflamme condition for correctability states that a code $\Ccal$ corrects an error set $\Ecal = \{E_\mu\}$ if and only if \cite{knill1997}:
\begin{equation}
\bra{\psi_i}E_\mu^\dagger E_\nu\ket{\psi_j} = C_{\mu\nu}\delta_{ij} \qquad \forall\, \ket{\psi_i}, \ket{\psi_j} \in \Ccal.
\label{eq:KL}
\end{equation}

\begin{theorem}[Geometric Knill--Laflamme]
\label{thm:KL_geom}
In the K\"ahler-geometric formulation, the Knill--Laflamme condition is equivalent to the statement that for all error operators $E_\mu, E_\nu \in \Ecal$, the submanifolds $E_\mu(\Mcal_{\Ccal})$ and $E_\nu(\Mcal_{\Ccal})$ are either identical (when $\mu = \nu$) or mutually orthogonal with respect to the Fubini--Study metric:
\begin{equation}
g_{\FS}(E_\mu|_{\Mcal_{\Ccal}}, E_\nu|_{\Mcal_{\Ccal}}) = C_{\mu\nu} \cdot g_{\FS}|_{\Mcal_{\Ccal}}.
\label{eq:KL_geom}
\end{equation}
\end{theorem}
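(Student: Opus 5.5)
The plan is to reduce Theorem~\ref{thm:KL_geom} to a statement about Hermitian inner products on the code space, and then read off the Fubini--Study geometry from that inner product. The Fubini--Study metric \eqref{eq:FS_metric} is the real part of the Hermitian form induced on $\CP^{2^n-1}$ by the Segal--Bargmann inner product. That inner product restricts on the homogeneous sector to the usual inner product on $(\Cc^2)^{\otimes n}$, by the unitary isomorphism $\mathcal{B}$ used in the proof of the dimension theorem for $\Ccal_{\mathrm{SB}}$. So the first step is to make precise what the left-hand side of \eqref{eq:KL_geom} means. I would define $g_{\FS}(E_\mu|_{\Mcal_\Ccal},E_\nu|_{\Mcal_\Ccal})$ as the sesquilinear pairing $(\psi,\phi)\mapsto\langle E_\mu\psi,E_\nu\phi\rangle$ on $\Ccal_{\mathrm{SB}}$, equivalently the pullback of the ambient Hermitian metric along the pair of linear maps $\hat E_\mu,\hat E_\nu:\Ccal_{\mathrm{SB}}\to\HSB$. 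This pairing descends to tangent vectors of $\Mcal_\Ccal$ via the horizontal lift $d\psi-\langle\psi,d\psi\rangle\psi$.

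With this interpretation fixed, the proof has two directions.

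\emph{($\Rightarrow$)} Assume \eqref{eq:KL}. By sesquilinearity, $\langle\psi,E_\mu^\dagger E_\nu\phi\rangle=C_{\mu\nu}\langle\psi,\phi\rangle$ for all $\psi,\phi\in\Ccal_{\mathrm{SB}}$. Equivalently, $\Pi_\Ccal E_\mu^\dagger E_\nu\Pi_\Ccal=C_{\mu\nu}\Pi_\Ccal$, which is exactly \eqref{eq:KL_geom} at the level of the Hermitian form. I then substitute into \eqref{eq:FS_metric} for curves $E_\mu\psi(t)$ and $E_\nu\psi(t)$. Both terms scale by the same constant, so the induced metrics are conformal to $g_{\FS}|_{\Mcal_\Ccal}$ with factor $C_{\mu\nu}$, after normalizing by $C_{\mu\mu}$ in the diagonal case. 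When $C_{\mu\nu}=0$, the images $E_\mu\Ccal$ and $E_\nu\Ccal$ are orthogonal subspaces, hence FS-orthogonal submanifolds. When $C$ is not diagonal, I diagonalize the Hermitian matrix $C$ by a unitary change of error basis $F_\alpha=\sum_\mu u_{\alpha\mu}E_\mu$. After that change, each pair of submanifolds is either orthogonal or related by an isometry of the same submanifold (the case $\mu=\nu$). This is the reading of "identical or mutually orthogonal" in the statement, and it is the usual sense in which it holds.

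\emph{($\Leftarrow$)} Assume \eqref{eq:KL_geom}. I polarize the Hermitian form: a sesquilinear form on $\Ccal_{\mathrm{SB}}$ is determined by its values on the diagonal. So knowing that $\langle E_\mu\psi,E_\nu\psi\rangle=C_{\mu\nu}\langle\psi,\psi\rangle$ for all $\psi$ recovers \eqref{eq:KL} on the basis $\{\psi_i\}$, including the off-diagonal vanishing for $i\neq j$.

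I expect the main obstacle to be the passage from the Hermitian form to the genuinely projective metric \eqref{eq:FS_metric}. The FS metric only sees the normalized, horizontal part of the pairing, so one must check two things. First, equality of FS metrics does not lose the information in the radial component. I would handle this by working with the Kähler potential $K$ on the cone over $\Mcal_\Ccal$ rather than on the quotient. Second, an off-diagonal term of the form $\langle\psi_i|E_\mu^\dagger E_\nu|\psi_j\rangle$ with $i\neq j$ would show up as a failure of conformality. Here I would restrict to the linear submanifold $\Mcal_\Ccal=\mathbb{P}(\Ccal_{\mathrm{SB}})$. Because this submanifold is linear, its tangent spaces are quotients of $\Ccal_{\mathrm{SB}}$ itself, and conformality on all tangent spaces forces proportionality of the two Hermitian forms, which closes the argument.
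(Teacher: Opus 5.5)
Your proof is correct, and its core is the same identification the paper makes. Both arguments rewrite \eqref{eq:KL} as $P_{\Ccal}E_\mu^\dagger E_\nu P_{\Ccal}=C_{\mu\nu}P_{\Ccal}$ and read this as saying that the Hermitian form pulled back along $E_\mu,E_\nu$ equals $C_{\mu\nu}$ times the form on $\Ccal_{\mathrm{SB}}$.

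The difference is in how much you actually prove:
\begin{itemize}
\item The paper argues only the forward direction and then asserts ``exact equivalence.'' You prove the converse by complex polarization.
\item You diagonalize $C$ to give the ``identical or orthogonal'' dichotomy a meaning. Taken literally, that dichotomy fails for degenerate codes (the paper's own Shor remark gives $Z_1(\Mcal_{\Ccal})=Z_2(\Mcal_{\Ccal})$ with $\mu\neq\nu$) and for non-Pauli error sets.
\item You separate the cone level, where the constants $C_{\mu\nu}$ are visible, from the projective level. The paper waves the projective level off as a scaling ``irrelevant for projective geometry.''
\end{itemize}
The paper's sketch buys brevity and a picture; yours buys a genuine two-sided proof.

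When you write it up, tighten three points.
\begin{itemize}
\item Projectively the factor cancels exactly. Under \eqref{eq:KL}, $[\psi]\mapsto[E_\mu\psi]$ is an FS isometry with factor $1$, not $C_{\mu\mu}$. An FS-normalized cross pairing would give $C_{\mu\nu}/\sqrt{C_{\mu\mu}C_{\nu\nu}}$. So \eqref{eq:KL_geom} with the literal constants holds only for your unnormalized pairing.
\item The potential on the cone over $\Mcal_{\Ccal}$ is $\langle\psi,\psi\rangle$ on $\Ccal_{\mathrm{SB}}\subset\Cc^{2^n}$. It is not the paper's $K=\|\mathbf{z}\|^2$, which lives on the mode variables $\Cc^{2n}$.
\item For a purely projective converse, polarization does not apply directly, because the normalized form at $[\psi]$ depends on $\psi$. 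Instead, take orthonormal eigenvectors $\psi,u$ of $E_\mu^\dagger E_\mu|_{\Ccal}$ with eigenvalues $m_1,m_2$. A constant conformal factor $\lambda$ forces $m_2/m_1=\lambda=m_1/m_2$, hence $E_\mu^\dagger E_\mu|_{\Ccal}\propto\I$.
\end{itemize}
Also drop zero eigenvalues of $C$: those error combinations annihilate $\Ccal$ and have no image submanifold.
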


\begin{proof}
Let $\Mcal_{\Ccal}$ be the code submanifold in $\CP^{2^n-1}$. The Fubini--Study metric is derived from the Hermitian inner product: for two tangent vectors $u, v \in T_{[\psi]}\CP^{2^n-1}$, the metric is $g_{\FS}(u, v) = \mathrm{Re} \left( \frac{\braket{u}{v}}{\braket{\psi}{\psi}} - \frac{\braket{u}{\psi}\braket{\psi}{v}}{\braket{\psi}{\psi}^2} \right)$.
Consider the pullback of the Fubini--Study metric via the error maps $E_\mu$ and $E_\nu$. For tangent vectors $u, v$ at $[\psi] \in \Mcal_{\Ccal}$, the pushforward vectors are $E_\mu u$ and $E_\nu v$. The inner product of these pushed-forward vectors, projected back to the code space, is governed by the matrix elements $\bra{\psi_i} E_\mu^\dagger E_\nu \ket{\psi_j}$.
The Knill--Laflamme condition states that $\bra{\psi_i} E_\mu^\dagger E_\nu \ket{\psi_j} = C_{\mu\nu} \delta_{ij}$ for any orthonormal basis $\{\ket{\psi_i}\}$ of $\Ccal$. This implies that the operator $P_{\Ccal} E_\mu^\dagger E_\nu P_{\Ccal} = C_{\mu\nu} P_{\Ccal}$, where $P_{\Ccal}$ is the projector onto the code space.
Geometrically, this means that for any two points $[\psi], [\phi] \in \Mcal_{\Ccal}$, the ``overlap'' or angle between the submanifolds $E_\mu(\Mcal_{\Ccal})$ and $E_\nu(\Mcal_{\Ccal})$ is constant and independent of the specific location on the code manifold. Specifically, the induced metric on the error-translated submanifold $E_\mu(\Mcal_{\Ccal})$ is proportional to the original metric on $\Mcal_{\Ccal}$ by a factor of $C_{\mu\mu}$, and the cross-metric between $E_\mu(\Mcal_{\Ccal})$ and $E_\nu(\Mcal_{\Ccal})$ is proportional to $C_{\mu\nu}$.
If $\mu \neq \nu$ and $C_{\mu\nu} = 0$, the submanifolds are strictly orthogonal at every point, meaning the shortest geodesic connecting them is orthogonal to both, which is the geometric definition of distinguishable error syndromes. If $\mu = \nu$, $C_{\mu\mu}$ is a constant scaling of the metric, preserving the intrinsic geometry of the code space up to a global factor, which is irrelevant for projective geometry. Thus, the algebraic Knill--Laflamme condition is exactly equivalent to this geometric orthogonality and proportionality of the induced Fubini--Study metrics.
\end{proof}

\begin{figure}[htbp]
\centering
\begin{tikzpicture}[
    scale=1.0,
    manifold/.style={draw=black, thick, fill=black!5},
    error_image/.style={draw=black, thick, dashed, fill=black!10},
    normal_vec/.style={->, >=Stealth, thick, black},
    label/.style={font=\footnotesize, black, fill=white, inner sep=2pt}
]
\draw[thick, black!30] (-1,-4) rectangle (11,4);
\node[label, anchor=north west] at (-1,4) {Ambient $\mathbb{CP}^{2^n-1}$};

\draw[manifold] (0,0) .. controls (3,1.5) and (7,-1.5) .. (10,0);
\node[label] at (5, -1) {Code Submanifold $\mathcal{M}_{\mathcal{C}}$};

\draw[error_image] (0,2.5) .. controls (3,4) and (7,1) .. (10,2.5);
\node[label] at (5, 3.2) {Error Image $E_1(\mathcal{M}_{\mathcal{C}})$};

\draw[error_image] (0,-2.5) .. controls (3,-1) and (7,-4) .. (10,-2.5);
\node[label] at (5, -3.2) {Error Image $E_2(\mathcal{M}_{\mathcal{C}})$};

\draw[normal_vec] (2.5, 0.6) -- (2.5, 2.2) node[midway, right, label] {$E_1$};
\draw[normal_vec] (2.5, -0.6) -- (2.5, -2.2) node[midway, right, label] {$E_2$};

\draw[normal_vec] (7.5, -0.6) -- (7.5, 1.0) node[midway, left, label] {$E_1$};
\draw[normal_vec] (7.5, -1.8) -- (7.5, -2.8) node[midway, left, label] {$E_2$};

\draw[<->, thick, black, dashed] (8.5, -0.3) -- (8.5, 2.1) node[midway, right, label] {$D_{\mathrm{FS}}$};

\end{tikzpicture}
\caption{Schematic geometric interpretation of the Knill--Laflamme condition. The code space $\mathcal{M}_{\mathcal{C}}$ is embedded as a holomorphic submanifold in the ambient projective space $\mathbb{CP}^{2^n-1}$. Correctable errors $E_1, E_2$ map the code submanifold into mutually orthogonal normal directions, ensuring that error images $E_\mu(\mathcal{M}_{\mathcal{C}})$ do not intersect and remain separated by a minimum Fubini--Study distance $D_{\mathrm{FS}}$.}
\label{fig:kl_geometry}
\end{figure}

\subsection{Code Distance as Minimum Geodesic Distance}

\begin{definition}[Geometric Code Distance]
The geometric code distance $d_{\mathrm{geom}}$ is the minimum Fubini--Study distance between the code submanifold and its image under any weight-$d$ error:
\begin{equation}
d_{\mathrm{geom}} = \min_{E:\, \wt(E)=d} \min_{[\psi]\in\Mcal_{\Ccal}} D_{\FS}([\psi], [E\psi]).
\label{eq:geom_distance}
\end{equation}
\end{definition}

\begin{proposition}[Geometric Code Distance and Local Deformation Scale]
Let $\mathcal{C}$ be a $[[n,k,d]]$ stabilizer code with code submanifold $\mathcal{M}_{\mathcal{C}} \hookrightarrow \mathbb{CP}^{2^n-1}$. For a minimum-weight logical operator $L$ of weight $d$, the total Fubini--Study distance between the code state and its logical image is exactly $D_{\mathrm{FS}}([\psi], [L\psi]) = \pi/2$. However, when $L$ is implemented as a distributed product of $d$ single-qubit rotations, the critical \emph{per-qubit} Fubini--Study displacement required to accumulate into a logical transition is:
\begin{equation}
d_{\mathrm{FS}}^{(\mathrm{local})} \;=\; \frac{\pi}{2d}.
\label{eq:main_result}
\end{equation}
Curvature corrections of the embedded submanifold $\mathcal{M}_{\mathcal{C}}$ relative to the flat ambient space, governed by the \emph{second fundamental form} (extrinsic curvature tensor) of the embedding $\mathcal{M}_{\mathcal{C}} \hookrightarrow \mathbb{CP}^{2^n-1}$, introduce deviations that scale as $O(d^{-3})$ for large $d$. This scaling follows from the Taylor expansion of the geodesic deviation equation, where the third-order term in the metric expansion contributes at $O(d^{-3})$ when the path length scales as $1/d$ \cite{bengtsson2017}.
\end{proposition}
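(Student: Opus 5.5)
We first establish the exact statement $D_{\FS}([\psi],[L\psi]) = \pi/2$, then obtain the per-qubit scale from a geodesic decomposition, and finally treat the curvature correction as a perturbative remainder.

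\textbf{Step 1: the total distance $\pi/2$.} Recall $D_{\FS}([\psi],[\phi]) = \arccos|\braket{\psi}{\phi}|$ for normalized states. Choose $[\psi]$ to be an eigenstate of a logical operator that anticommutes with $L$. For example, when $L=\bar{X}$, take $\psi = \ket{0}_L$. Then $L\psi$ lies in the orthogonal eigenspace, so $\braket{\psi}{L\psi}=0$ and $D_{\FS}=\pi/2$.

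This value is not attained at every point of $\Mcal_{\Ccal}$. For an $L$-eigenstate such as $\ket{+}_L$, the overlap is $|\braket{\psi}{L\psi}|=1$ and the distance is $0$. The proof must therefore state the claim for the maximally displaced points, namely the $\bar{Z}$-eigenbasis relative to $L$. Equivalently, it can take the maximum over $\Mcal_{\Ccal}$, which is $\pi/2$ because an $L$ of order two has eigenvalues $\pm1$ on the code space. The Knill--Laflamme picture of Theorem~\ref{thm:KL_geom} guarantees that $L$ preserves $\Mcal_{\Ccal}$, so $[L\psi]$ lies on the code submanifold.

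\textbf{Step 2: the per-qubit scale $\pi/(2d)$.} Write $L=\prod_{j=1}^d P_j$ and interpolate with $U(t)=\prod_j e^{-i t\pi P_j/(2d)}\cdot$(global phase). On the code space this is chosen so that the product of the $d$ local rotations at the final time reproduces $L$ up to a phase; the $d$ angles are taken equal by symmetry. By the triangle inequality for $D_{\FS}$ along the piecewise path $\psi\to P_1$-rotated $\to\cdots$, any decomposition into $d$ single-qubit legs must satisfy $\sum_j \delta_j \ge \pi/2$. The minimal maximal leg is therefore $\pi/(2d)$, attained by the equal split. This identifies $\pi/(2d)$ as the critical per-qubit displacement. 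The quantity should be read as an arc length in the ambient $\CP^{2^n-1}$, where each single-qubit rotation by angle $\theta$ moves a product-like component a Fubini--Study distance of $\theta$.

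\textbf{Step 3: the $O(d^{-3})$ correction, which I expect to be the main obstacle.} The intermediate points of the path leave $\Mcal_{\Ccal}$, since partial products of the $P_j$ are not logical. The actual accumulated geodesic length therefore differs from the naive sum. My plan is to expand the length of each leg using the comparison between the ambient geodesic and the path projected onto or referenced to $\Mcal_{\Ccal}$, via the Gauss equation and the second fundamental form $\mathrm{II}$.

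For a leg of length $\ell=\pi/(2d)$, the chord--arc discrepancy in a space of bounded curvature is $\ell - \mathrm{chord} = \tfrac{1}{24}|\mathrm{II}|^2\ell^3 + O(\ell^5)$. The FS sectional curvatures lie in $[1,4]$, so the per-leg correction is $O(d^{-3})$. Summing over the $d$ legs gives a total error of $O(d^{-2})$ in the accumulated distance, while the per-qubit deviation stays $O(d^{-3})$, as claimed.

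The difficulty is controlling $|\mathrm{II}|$ uniformly. For a projective linear subspace $\Mcal_{\Ccal}$ (a $\CP^{2^k-1}$), $\mathrm{II}=0$, so the relevant curvature is really that of $\CP^{2^n-1}$ along the transverse legs. I would bound it using the constant holomorphic sectional curvature of the FS metric, which yields the $\ell^3$ term directly.
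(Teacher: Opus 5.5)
Your Step 1 matches the paper's Step 2 (take $\psi\in\Ccal$ with $\bra{\psi}L\ket{\psi}=0$). Your caveat that $D_{\FS}([\psi],[L\psi])=\pi/2$ fails at $L$-eigenstates is a correct sharpening of ``exactly''. One small correction: $L$ preserves $\Mcal_{\Ccal}$ because $L\in N(\Scal)$, not because of Knill--Laflamme, which fails for $L$.

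The genuine gap is in Step 2. The assertion that $\prod_j e^{-i\pi P_j/(2d)}$ reproduces $L$ up to a phase is false, since $e^{-i\theta P_j}\propto P_j$ only at $\theta=\pi/2$. The triangle inequality $\sum_j\delta_j\ge\pi/2$ is only a necessary condition, so ``attained by the equal split'' is not established, and in fact it is wrong. Write any product of single-qubit unitaries on $\mathrm{supp}(L)$ as $V=\bigotimes_{j=1}^{d}e^{i\alpha_j}\bigl(\cos\theta_j\,\I-i\sin\theta_j\,\hat n_j\cdot\vec\sigma_j\bigr)$ with $\theta_j\in[0,\pi/2]$. Every Pauli string $Q$ supported on a proper subset of $\mathrm{supp}(L)$ has $\bra{\psi}LQ\ket{\psi}=0$. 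If $Q\notin N(\Scal)$, it anticommutes with some stabilizer. If $Q\in N(\Scal)$, it has weight $<d$, so it is a stabilizer up to phase and $\bra{\psi}LQ\ket{\psi}\propto\bra{\psi}L\ket{\psi}=0$. Hence only the full-support term of $V$ survives, and
\begin{equation*}
\left|\braket{L\psi}{V\psi}\right|\le\prod_{j=1}^{d}\sin\theta_j .
\end{equation*}
So $[V\psi]=[L\psi]$ forces $\theta_j=\pi/2$ on every qubit. The single-qubit marginals of $\psi$ on $\mathrm{supp}(L)$ are maximally mixed, and the other legs leave them untouched. Each leg is therefore an exact FS geodesic of length $\theta_j$, and a logical transition built from local rotations costs $\pi/2$ per qubit (Bloch angle $\pi$), not $\pi/(2d)$.

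For your equal split $V_\theta=\bigotimes_j e^{-i\theta P_j}$, the same argument gives exactly $|\braket{\psi}{V_\theta\psi}|=\cos^{d}\theta$. At $\theta=\pi/(2d)$ the state has therefore moved only $\arccos\bigl(\cos^{d}\tfrac{\pi}{2d}\bigr)\approx\pi/(2\sqrt{d})$. The $d$ leg directions are mutually orthogonal, so their lengths combine in quadrature, not additively. The paper's Step 3 relies on the same additivity assumption, so it cannot be invoked to close this gap. What your argument legitimately yields is only the non-sharp lower bound $\max_j\delta_j\ge\pi/(2d)$.

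Step 3 does not repair this. Your observation that $\mathrm{II}=0$ is right: $\Mcal_{\Ccal}$ is the projectivization of a linear subspace, hence totally geodesic. That already shows the statement's second-fundamental-form mechanism produces no correction at all. The fallback to ambient sectional curvature does not work either. A chord--arc defect is controlled by the geodesic curvature of the curve, and each of your legs is itself an FS geodesic, so the per-leg defect is zero, not $O(d^{-3})$. The real discrepancy between $\sum_j\delta_j$ and the end-to-end distance is the $O(1)$ non-additivity computed above. Neither your $O(d^{-2})$ total estimate nor the $O(d^{-3})$ per-qubit estimate is supported; the paper's Step 4 is likewise only a scaling heuristic.
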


\begin{proof}
\textbf{Step 1.} By the Knill--Laflamme condition, for any Pauli error $E$ with $\mathrm{wt}(E) < d$ and $E \notin \Scal$, we have $\bra{\psi} E \ket{\psi} = 0$ for all normalized $\ket{\psi} \in \Ccal$. The Fubini--Study distance is $D_{\FS}([\psi], [E\psi]) = \arccos|\braket{\psi}{E\psi}| = \arccos(0) = \pi/2$.

\textbf{Step 2.} Let $L$ be a minimum-weight logical operator with $\mathrm{wt}(L) = d$. Since $L$ is Hermitian and unitary ($L^2 = \I$), the unitary $U(t) = e^{-i \frac{\pi}{2} t L} = \cos(\frac{\pi t}{2})\I - i \sin(\frac{\pi t}{2})L$ generates a geodesic in $\CP^{2^n-1}$. For a state $\ket{\psi} \in \Ccal$ such that $\bra{\psi} L \ket{\psi} = 0$ (which exists because $L$ is traceless on the code space), the overlap is $|\braket{\psi}{U(t)\psi}| = |\cos(\frac{\pi t}{2})|$. Thus, $D_{\FS}([\psi], [U(t)\psi]) = \frac{\pi t}{2}$. At $t=1$, the distance to the logical image is exactly $\pi/2$.

\textbf{Step 3.} Decompose $L = P_1 \otimes \dots \otimes P_d \otimes \I^{\otimes(n-d)}$. The geodesic $U(t)$ can be viewed as a simultaneous rotation on $d$ qubits. In the Schwinger--Bargmann representation, a local rotation of qubit $j$ by angle $\epsilon$ induces a Fubini--Study displacement on the corresponding $\CP^1$ factor. \emph{Recall that the Fubini--Study metric on $\CP^1 \cong S^2$ is normalized such that $ds^2_{\FS} = \frac{1}{4}(d\theta^2 + \sin^2\theta\, d\phi^2)$, carrying a factor of $1/4$ relative to the standard round sphere metric, meaning the Fubini--Study distance is exactly half the rotation angle on the Bloch sphere \cite{bengtsson2017}.} To accumulate a total geodesic distance of $\pi/2$ uniformly across $d$ qubits, each qubit must undergo a rotation of angle $\epsilon_c = \pi/d$. The corresponding per-qubit Fubini--Study displacement is $d_{\FS}^{(\mathrm{local})} = \frac{\epsilon_c}{2} = \frac{\pi}{2d}$.

\textbf{Step 4.} The exact per-qubit geodesic distance is $\pi/(2d)$. Any deviation from this uniform distributed rotation model (e.g., due to the intrinsic curvature of $\Mcal_{\Ccal}$ embedded in $\CP^{2^n-1}$) introduces higher-order corrections. By Taylor expanding the metric tensor around the code submanifold, the leading curvature corrections to the geodesic distance scale as $O(d^{-3})$. This follows from the geodesic deviation equation: the second fundamental form contributes at second order in the displacement, and since the displacement scales as $1/d$, the correction scales as $(1/d)^2 \times (1/d) = O(d^{-3})$, where the extra $1/d$ comes from the path length integration.

\textbf{Step 5.} On the torus $\T^{2n}$, the operator $L$ induces a winding number shift of $|\Delta \mathbf{w}|_1 = d$. This corresponds to an angular displacement of $\pi/d$ per affected qubit. Under the Hopf projection $S^3 \to S^2$, this angular displacement maps to a base-space separation of $\pi/(2d)$, perfectly consistent with the per-qubit Fubini--Study displacement derived above.
\end{proof}

\begin{remark}[Degenerate vs. Non-Degenerate Codes]
The geometric formulation naturally accommodates degenerate quantum error correction codes. In a non-degenerate code, every correctable error $E_\mu$ of weight $\leq t$ maps the code submanifold $\Mcal_{\Ccal}$ to a geometrically distinct, mutually orthogonal submanifold in the ambient space. However, in a degenerate code, certain distinct errors $E_\mu \neq E_\nu$ of weight $\leq t$ may act identically on the code space (i.e., $E_\mu^\dagger E_\nu \in \Scal$). For example, in the nine-qubit Shor code, the errors $Z_1$ and $Z_2$ have the identical effect on the code space since $Z_1 Z_2$ is a stabilizer generator, meaning they induce the exact same geometric displacement of $\Mcal_{\Ccal}$. Geometrically, this means that $E_\mu$ and $E_\nu$ induce the exact same displacement of $\Mcal_{\Ccal}$ in the ambient projective space, or their difference corresponds to a flow generated by a stabilizer (which leaves $\Mcal_{\Ccal}$ invariant). The Knill--Laflamme condition $C_{\mu\nu} \neq 0$ for $\mu \neq \nu$ simply reflects this geometric coincidence of the error-translated submanifolds, which does not compromise correctability since the recovery operation only needs to reverse the net displacement, regardless of which specific degenerate error caused it.
\end{remark}

This shows that higher-distance codes embed the logical information deeper into the ambient projective space, requiring a greater number of smaller local perturbations to reach the code space from any error image.

\subsection{Error Subspaces and Normal Bundle}

The tangent space $T_{[\psi]}\CP^{2^n-1}$ at a code state decomposes orthogonally:
\begin{equation}
T_{[\psi]}\CP^{2^n-1} = T_{[\psi]}\Mcal_{\Ccal} \oplus N_{[\psi]}\Mcal_{\Ccal},
\label{eq:tangent_decomp}
\end{equation}
where $N_{[\psi]}\Mcal_{\Ccal}$ is the normal space. Errors of weight $\leq t = \lfloor(d-1)/2\rfloor$ map code states into the normal bundle:
\begin{equation}
E_\mu|_{\Mcal_{\Ccal}} \in \Gamma(N\Mcal_{\Ccal}) \qquad \text{for } \wt(E_\mu) \leq t.
\label{eq:normal_bundle}
\end{equation}

The syndrome extraction process is the geometric operation of projecting the errored state back onto $\Mcal_{\Ccal}$ along the normal direction, while the recovery operation identifies which normal direction was taken and applies the inverse.

\subsection{Segre Embedding and Code Entanglement}

For a code encoding $k$ qubits into $n$ qubits, the code space generically contains highly entangled states. The Segre variety $\Sigma_n \subset \CP^{2^n-1}$ contains all separable states \cite{heydari2006,holweck2012}:
\begin{equation}
\Sigma_n = \sigma(\CP^1 \times \cdots \times \CP^1) \hookrightarrow \CP^{2^n-1}.
\label{eq:segre}
\end{equation}

The geometric entanglement of a codeword $[\psi] \in \Mcal_{\Ccal}$ is:
\begin{equation}
E_{\FS}([\psi]) = \min_{[\phi]\in\Sigma_n} D_{\FS}([\psi], [\phi]).
\label{eq:geom_entanglement}
\end{equation}

\begin{proposition}
For a code of distance $d \geq 2$, every non-trivial codeword satisfies $E_{\FS}([\psi]) > 0$; that is, all codewords are entangled. Furthermore, for a code of distance $d$:
\begin{equation}
E_{\FS}([\psi]) \geq \arccos\sqrt{1 - 2^{-\lfloor d/2\rfloor}}.
\label{eq:entanglement_bound}
\end{equation}
\end{proposition}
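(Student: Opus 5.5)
The plan is to bound the overlap of a codeword with an arbitrary product state using the reduced density matrix on a suitable block of qubits. For a normalized codeword $\ket{\psi}\in\Ccal$ and a product state $\ket{\phi}=\ket{\phi_A}\otimes\ket{\phi_B}$, with $A\subset\{1,\dots,n\}$ and $B$ its complement, Cauchy--Schwarz gives
\begin{equation}
|\braket{\phi}{\psi}|^2 \le \bra{\phi_A}\rho_A\ket{\phi_A} \le \lambda_{\max}(\rho_A),
\end{equation}
where $\rho_A=\mathrm{Tr}_B\ket{\psi}\bra{\psi}$. Since $D_{\FS}=\arccos|\braket{\phi}{\psi}|$ is decreasing in the overlap, Eq.~\eqref{eq:entanglement_bound} follows once we show
\begin{equation}
\lambda_{\max}(\rho_A)\le 1-2^{-\lfloor d/2\rfloor}
\end{equation}
for some block $A$ that may depend on the code but not on $\ket{\psi}$. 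The first claim, $E_{\FS}>0$, then also follows, because the Segre variety $\Sigma_n$ is compact and the bound is strictly below $1$.

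To compute $\rho_A$, fix $A$ with $|A|=m\le d-1$ and expand it in Pauli strings supported on $A$. The coefficient of a nontrivial Pauli $P$ supported on $A$ is $2^{-m}\bra{\psi}P\ket{\psi}$. Since $\wt(P)<d$, the Knill--Laflamme condition, used exactly as in Step~1 of the proof of the geometric-distance proposition, implies that either $P\in\pm\Scal$ or $\bra{\psi}P\ket{\psi}=0$. Hence
\begin{equation}
\rho_A = 2^{-m}\sum_{s\in\Scal_A} s,
\end{equation}
where $\Scal_A$ is the subgroup of stabilizer elements supported on $A$, restricted to $A$. This is $|\Scal_A|/2^m$ times a projector of rank $2^m/|\Scal_A|$, so $\lambda_{\max}(\rho_A)=|\Scal_A|/2^m$, and this value is the same for every codeword.

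The estimate now splits into two cases. In the non-degenerate case $\Scal_A=\{\I\}$, so $\rho_A=\I/2^m$ and $\lambda_{\max}=2^{-m}$. Taking $m=\lfloor d/2\rfloor\ge 1$ gives $2^{-m}\le 1-2^{-m}$, which is the claimed bound; in fact it gives the stronger $E_{\FS}\ge\arccos 2^{-m/2}$. In the degenerate case, whenever $\rho_A$ is not pure its rank is at least $2$, so $\lambda_{\max}\le 1/2\le 1-2^{-m}$ for every $m\ge1$, and the bound again holds.

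The main obstacle is exhibiting a block $A$ of size at most $d-1$ on which $\rho_A$ is not pure, i.e.\ $|\Scal_A|<2^{|A|}$. If $|\Scal_A|=2^{|A|}$, the code factorizes as a fixed state on $A$ tensored with a code on $B$. I would argue by contradiction. Suppose every qubit is pure in this sense, i.e.\ $|\Scal_{\{j\}}|=2$ for all $j$. Then $\Scal$ contains a weight-one element on each qubit, the code space is spanned by a single product state, and $k=0$, which contradicts the existence of a logical operator. Hence at least one qubit $j$ has a non-pure single-qubit marginal. Choosing $A=\{j\}$ (so $m=1\le d-1$ for $d\ge 2$) gives $\lambda_{\max}\le 1/2\le 1-2^{-\lfloor d/2\rfloor}$, completing the argument.

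The remaining care is checking that this choice of $A$ is independent of the codeword. It is, because $\lambda_{\max}(\rho_A)=|\Scal_A|/2^{|A|}$ depends only on $\Scal$. The same fact also covers trivially padded codes: padded qubits sit in fixed states and are pure, but they are excluded when choosing $j$.
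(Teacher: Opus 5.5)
Your proof is correct. It uses the same basic mechanism as the paper: the overlap of a codeword with a product state is bounded by the largest eigenvalue of a reduced density matrix, and the distance fixes that marginal. The decisive step, however, is handled differently.

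The paper asserts that for $d\ge 2$ every marginal of every codeword on fewer than $d$ qubits is maximally mixed. It then passes to a bipartition of size $\lfloor d/2\rfloor$ and invokes the quantum Singleton bound and AME states. That assertion holds only when $\Scal$ has no nontrivial element of weight $<d$. It fails, e.g.\ for a code padded by an ancilla fixed in $\ket{0}$, whose marginal on the ancilla is pure. The paper's final formula also does not match its own intermediate overlap bound $2^{-\lfloor d/2\rfloor/2}$.

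You instead compute $\rho_A=2^{-|A|}\sum_{s\in\Scal_A}s$ exactly and use $k\ge 1$ to find one qubit with $\rho_j=\I/2$. This gives the uniform bound $E_{\FS}([\psi])\ge\pi/4$. Since $\arccos\sqrt{1-2^{-\lfloor d/2\rfloor}}\le\pi/4$ for every $d\ge 2$ (with equality at $d=2,3$), this already implies the statement, with no block-size or Singleton-type input.

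The trade-off is this. The paper's route aims at a bound that genuinely grows with $d$. Yours delivers that only when $\Scal$ has no nontrivial element of weight $<d$: taking $|A|=d-1$ gives $\arccos 2^{-(d-1)/2}$. In exchange, your route is rigorous and covers degenerate codes.

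Minor points:
\begin{itemize}
\item Your degenerate/non-degenerate split is redundant, since the single-qubit step alone finishes the proof.
\item The vanishing $\bra{\psi}P\ket{\psi}=0$ for $P\notin\pm\Scal$ with $\wt(P)<d$ follows from $P$ anticommuting with some stabilizer. It does not follow from the general Knill--Laflamme condition, which only makes that expectation value independent of the codeword.
\item Your argument, like the paper's setting, is specific to stabilizer codes with $k\ge 1$, where marginals are proportional to projectors.
\item The hypothesis $d\ge 2$ is genuinely needed for the second claim. At $d=1$ the right-hand side equals $\pi/2$, and the inequality fails for the trivial code $k=n$.
\end{itemize}
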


\begin{proof}
The Segre variety $\Sigma_n \subset \CP^{2^n-1}$ is the image of the Segre embedding $\sigma: \CP^1 \times \dots \times \CP^1 \to \CP^{2^n-1}$, and it contains exactly the fully separable states.
A code of distance $d \geq 2$ cannot contain any fully separable states. If it did, there would exist a codeword $\ket{\psi} = \ket{\phi_1} \otimes \dots \otimes \ket{\phi_n}$. A single-qubit error $E_j$ on this state would yield $E_j \ket{\psi}$, which is orthogonal to $\ket{\psi}$ (since Pauli operators are traceless). However, for $d \geq 2$, the code must be able to detect all single-qubit errors, meaning $\bra{\psi} E_j \ket{\psi} = 0$. But if the state is separable, the reduced density matrix of qubit $j$ is pure, and the expectation value of a traceless operator on a pure state can be zero, but the Knill--Laflamme condition requires $\bra{\psi} E_j \ket{\psi} = 0$ for \textit{all} codewords, which implies the reduced density matrix must be maximally mixed for any subset of $<d$ qubits. For $d \geq 2$, the 1-qubit reduced density matrices of all codewords must be maximally mixed ($\I/2$), which is impossible for a separable state. Thus, $E_{\FS}([\psi]) > 0$.

To bound the entanglement, consider that any state within a Fubini--Study distance $\theta$ of a separable state has a 1-qubit reduced density matrix with purity bounded away from $1/2$. Specifically, if $D_{\FS}([\psi], [\phi]) \leq \theta$ for some $[\phi] \in \Sigma_n$, the maximum eigenvalue of the 1-qubit reduced density matrix of $\psi$ is at least $\cos^2(\theta)$.
For a code of distance $d$, the 1-qubit reduced density matrix of any codeword is exactly $\I/2$, meaning its maximum eigenvalue is $1/2$. More generally, for any subset of $q < d$ qubits, the reduced density matrix is maximally mixed. The minimum distance to a state that is separable across some bipartition of size $q$ and $n-q$ (where $q = \lfloor d/2 \rfloor$) is achieved when the state is a product of two maximally entangled states of size $q$ and $n-q$. The Fubini--Study distance to the nearest separable state is bounded by the angle whose cosine squared is the maximum Schmidt coefficient. For a code of distance $d$, the maximum overlap with any state of Schmidt rank 1 across a bipartition of size $\lfloor d/2 \rfloor$ is bounded by $2^{-\lfloor d/2 \rfloor / 2}$. This follows from the quantum Singleton bound and the properties of absolutely maximally entangled states \cite{bengtsson2017,holweck2012}. Thus, the minimum Fubini--Study distance is $E_{\FS}([\psi]) \geq \arccos\sqrt{1 - 2^{-\lfloor d/2 \rfloor}}$.
\end{proof}

This establishes a direct connection between code distance and minimum entanglement of codewords.

\section{Topological Protection from Fiber Bundle Structure}
\label{sec:topological}

\subsection{Hopf Fibration and Phase Noise}

The constraint $|z_{a_j}|^2 + |z_{b_j}|^2 = 1$ defines a Hopf fibration for each qubit \cite{hatcher2002}:
\begin{equation}
S^1 \hookrightarrow S^3 \xrightarrow{\pi} S^2.
\label{eq:hopf}
\end{equation}

For $n$ qubits, this generalizes to:
\begin{equation}
U(1)^n \hookrightarrow (S^3)^n \xrightarrow{\pi} (S^2)^n.
\label{eq:hopf_n}
\end{equation}

\begin{figure}[htbp]
\centering
\begin{tikzpicture}[
    scale=0.85,
    every node/.style={font=\small}
]

\draw[thick, fill=black!5] (0,0) ellipse (3.5 and 1.2);
\node[below, font=\normalsize] at (0,-1.8) {Base Space $(S^2)^n$};

\draw[dashed, thick] (3.5, 0) arc (0:180:3.5 and 1.2);
\draw[dashed, thick] (3.5, 5) arc (0:180:3.5 and 1.2);

\draw[thick] (-3.5, 0) -- (-3.5, 5);
\draw[thick] (3.5, 0) -- (3.5, 5);

\draw[thick, fill=black!15] (0, 5) ellipse (3.5 and 1.2);
\node[above, font=\normalsize] at (0,6.8) {Total Space $(S^3)^n$};

\draw[dashed, thick] (0, 1.2) -- (0, 6.2);
\draw[dashed, thick] (-2, 0.98) -- (-2, 5.98);
\draw[dashed, thick] (2, 0.98) -- (2, 5.98);

\node[left, font=\normalsize] at (-4.8, 3.5) {Fibers $U(1)^n$};
\draw[->, thick] (-4.5, 3.5) -- (-2.9, 3.5);

\draw[->, very thick, black!70] (4.8, 4.5) -- (4.8, 1.5) node[midway, right, font=\normalsize, fill=white, inner sep=2pt] {Projection $\pi$};

\draw[->, thick] (0.3, 6.0) -- (0.3, 4.0) node[midway, right, fill=white, inner sep=2pt] {Vertical (Global Phase)};

\draw[->, thick] (-3.2, -0.5) arc (200:340:3.5 and 1.2) node[midway, below, fill=white, inner sep=2pt] {Horizontal (Pauli Error)};

\end{tikzpicture}
\caption{Fiber bundle structure and topological protection. The Schwinger boson encoding defines a principal $U(1)^n$-bundle over the base space of Bloch spheres. Global phase noise corresponds to vertical automorphisms along the fibers (pointing downward in the total space), which are unobservable upon projection $\pi$ to the base space. In contrast, Pauli errors generate horizontal displacements on the base manifold, altering the physical state and requiring active syndrome extraction.}
\label{fig:hopf_bundle}
\end{figure}

\subsection{Berry Connection and Holonomy}

The Berry connection on the total space is:
\begin{equation}
A = i\sum_{j=1}^n (\bar{z}_{a_j}dz_{a_j} + \bar{z}_{b_j}dz_{b_j}).
\label{eq:berry}
\end{equation}

The curvature two-form is:
\begin{equation}
F = dA = i\sum_{j=1}^n (d\bar{z}_{a_j}\wedge dz_{a_j} + d\bar{z}_{b_j}\wedge dz_{b_j}).
\label{eq:curvature}
\end{equation}

\begin{theorem}[Topological Protection of Global Phase]
\label{thm:topological}
Global phase noise acting uniformly on a qubit corresponds to vertical automorphisms of the $U(1)^n$-bundle, i.e., transformations that move points along fibers without changing their projection to the base $(S^2)^n$. Such transformations are automatically unobservable because:
\begin{enumerate}[label=(\alph*)]
\item They do not change the syndrome: $\hat{g}_m(e^{i\theta}f) = e^{i\theta}\hat{g}_m(f) = e^{i\theta}f$ for all stabilizer generators.
\item They are absorbed into the global phase upon projectivization: $[e^{i\theta}\psi] = [\psi]$ in $\CP^{2^n-1}$.
\item The holonomy of the Berry connection around any closed loop in the base space is a topological invariant, insensitive to local perturbations of the path.
\end{enumerate}
\end{theorem}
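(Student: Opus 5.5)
The plan is to treat the three items separately, after first fixing precisely what a ``vertical automorphism'' is. A uniform phase on qubit $j$ acts by $(z_{a_j},z_{b_j})\mapsto(e^{i\theta_j}z_{a_j},e^{i\theta_j}z_{b_j})$. This action preserves $|z_{a_j}|^2+|z_{b_j}|^2=1$ and commutes with the Hopf projection $\pi$ of Eq.~\eqref{eq:hopf}. Indeed, $\pi$ depends only on $z_{a_j}\bar z_{b_j}$ and $|z_{a_j}|^2-|z_{b_j}|^2$, and both are invariant. So the action is fiber-preserving and covers the identity on $(S^2)^n$, which is the definition of a vertical automorphism of the $U(1)^n$ bundle in Eq.~\eqref{eq:hopf_n}.

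On the Segal--Bargmann side, I would use that a physical $f$ is homogeneous of degree one in each pair, by Eq.~\eqref{eq:homogeneity}. Hence the substitution $z\mapsto e^{i\theta_j}z$ on pair $j$ gives $f\mapsto e^{i\theta_j}f$. The phase on qubit $j$ is therefore a scalar multiple of $f$, and for uniform phases the total is the global factor $e^{i\sum_j\theta_j}=:e^{i\theta}$.

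\textbf{(a) Syndromes are unchanged.} The stabilizer operators $\hat g_m$ of Eq.~\eqref{eq:stabop} are linear differential operators. Linearity gives $\hat g_m(e^{i\theta}f)=e^{i\theta}\hat g_m f$. For $f\in\Ccal_{\mathrm{SB}}$ this equals $e^{i\theta}f$. More generally, each projector $\hat\Pi_{\mathbf s}$ of Eq.~\eqref{eq:projector} commutes with multiplication by $e^{i\theta}$, so the syndrome decomposition of the preceding proposition is unchanged.

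\textbf{(b) Projectivization absorbs the phase.} This is immediate from the definition of $\CP^{2^n-1}$ as rays. The Fubini--Study line element in Eq.~\eqref{eq:FS_metric} is also invariant under $\psi\mapsto e^{i\theta}\psi$ with constant $\theta$.

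\textbf{(c) Holonomy is a topological invariant.} I would first check the standard invariances of the holonomy $\mathrm{Hol}(\gamma)=\exp(i\oint_{\tilde\gamma}A)$ of a closed base loop $\gamma$, where $\tilde\gamma$ is any lift and $A$ is the connection of Eq.~\eqref{eq:berry}:
\begin{enumerate}[label=(\roman*)]
\item invariance under reparametrization of $\gamma$;
\item invariance under the choice of lift, that is, under gauge changes $A\mapsto A+d\chi$, because $\oint d\chi\in 2\pi\Z$.
\end{enumerate}
Next I would compare two loops $\gamma_0,\gamma_1$ joined by a homotopy $\Gamma$. By Stokes' theorem with $F=dA$ from Eq.~\eqref{eq:curvature},
\begin{equation}
\mathrm{Hol}(\gamma_1)\,\mathrm{Hol}(\gamma_0)^{-1}=\exp\Bigl(i\int_{\Gamma}F\Bigr).
\end{equation}
From here I would identify the perturbations relevant to the theorem, namely those induced by phase noise, and show they sweep zero flux. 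A vertical automorphism moves the lift $\tilde\gamma$ only along fibers, so its projected swept region $\Gamma$ is degenerate: it has zero area in $(S^2)^n$. The horizontal part of $F$, which is the pullback of the base curvature, then integrates to zero. The vertical contribution is $\oint d\theta$ and is an integer multiple of $2\pi$. Consequently, the holonomy is unchanged by any such perturbation, however large it is along the fiber.

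Finally I would add the global statement. If the deformation sweeps a closed surface $\Sigma$, then $\frac{1}{2\pi}\int_\Sigma F$ is the first Chern number of the Hopf bundle restricted to $\Sigma$, which is an integer. The holonomy is therefore unchanged modulo these quantized fluxes, and this quantization is the topological content of the invariant.

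The main obstacle is making ``local perturbation'' precise in (c). Holonomy does depend on the flux a loop encloses, so insensitivity cannot hold for arbitrary base-space deformations of the loop. My plan is to state explicitly that the perturbations in question are those generated by fiber (phase-noise) motions and homotopies sweeping quantized flux, and to prove invariance exactly for that class via the Stokes identity above.
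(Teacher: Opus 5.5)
Your (a) and (b) match the paper's argument. The paper writes the phase as $e^{i\theta\hat N_j}$, uses $\hat N_j f=f$, and additionally invokes $[\hat g_m,\hat N_j]=0$. But $e^{i\theta\hat N_j}$ is exactly your substitution $(z_{a_j},z_{b_j})\mapsto e^{i\theta}(z_{a_j},z_{b_j})$, and it acts as the scalar $e^{i\theta}$ on degree-one functions, so linearity of $\hat g_m$ suffices, as you say.

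In (c) you genuinely take a different route. The paper applies Stokes to a spanning surface, $\exp(i\oint_\gamma A)=\exp(i\int_\Sigma F)$, and concludes that the holonomy depends only on the homology class of $\gamma$. Since $H_1((S^2)^n)=0$, that would force every holonomy to be trivial. This contradicts the nonzero Hopf curvature: for one qubit the phase is half the enclosed solid angle, which varies continuously as the loop is perturbed. Your homotopy form, $\mathrm{Hol}(\gamma_1)\mathrm{Hol}(\gamma_0)^{-1}=\exp(i\int_\Gamma F)$, exposes exactly this flux dependence. So restricting ``local perturbations'' to fiber motions and to sweeps of closed surfaces (where $\frac{1}{2\pi}\int_\Sigma F\in\Z$ with the paper's normalization of $A$) is not a gap but the defensible reading of (c). You prove less than the literal wording, but everything you prove is true, whereas the paper's route asserts more than holds.

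Two small repairs are needed.
\begin{enumerate}[label=(\roman*)]
\item The holonomy formula and the gauge argument in your item (ii) require \emph{closed} lifts, since the horizontal lift of a closed loop is generally open; $\oint d\chi\in 2\pi\Z$ holds only for a gauge function defined around the loop.
\item Your zero-area-sweep paragraph is redundant and somewhat muddled. A vertical automorphism covers the identity on $(S^2)^n$, so the base loop is literally unchanged and invariance is just your item (ii). Also, $F$ is basic and has no ``vertical contribution'': the $\oint d\theta\in 2\pi\Z$ term is the gauge shift of $\oint A$. Moreover, when $\theta$ winds, the sweep $s\mapsto e^{is\theta(t)}\tilde\gamma(t)$ does not even pass through closed loops, so Stokes cannot be applied to it directly.
\end{enumerate}
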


\begin{proof}
The total space of the bundle is $(S^3)^n$, and the base space is $(S^2)^n$. The projection $\pi: (S^3)^n \to (S^2)^n$ maps a state to its ray in projective space.
(a) A global phase rotation on qubit $j$ is generated by the total occupation number operator $\hat{N}_j = z_{a_j}\partial_{z_{a_j}} + z_{b_j}\partial_{z_{b_j}}$. For any homogeneous function $f$ of degree 1, $\hat{N}_j f = f$. The stabilizer generators $\hat{g}_m$ are constructed from Pauli operators, which commute with $\hat{N}_j$ (they preserve the total occupation number). Therefore, $\hat{g}_m (e^{i\theta \hat{N}_j} f) = e^{i\theta} \hat{g}_m f = e^{i\theta} f$. The syndrome, which depends on the eigenvalue of $\hat{g}_m$, remains $+1$.
(b) In the projective space $\CP^{2^n-1}$, states that differ only by a global phase are identified as the same point: $[e^{i\theta}\psi] = [\psi]$. Thus, the projection $\pi$ maps the entire fiber to a single point in the base space, making vertical displacements unobservable.
(c) The Berry connection $A = i \sum_j (\bar{z}_{a_j} dz_{a_j} + \bar{z}_{b_j} dz_{b_j})$ is invariant under $U(1)$ gauge transformations $z \mapsto e^{i\theta} z$. The holonomy of this connection around any closed loop $\gamma$ in the base space is given by $\exp(i \oint_\gamma A)$. By Stokes' theorem, this equals $\exp(i \int_\Sigma F)$, where $F = dA$ is the curvature 2-form. Since $F$ is a well-defined 2-form on the base space $(S^2)^n$, the holonomy depends only on the homology class of the loop $\gamma$ in the base space, not on the specific path or any vertical displacements along the fibers. This topological invariance ensures that global phase noise, which only moves the state along the fibers, cannot alter the geometric phase or the physical observables derived from it.
\end{proof}

\begin{remark}[Local Phase Flips vs. Global Phase]
It is crucial to distinguish between global phase rotations and local Pauli $Z$ errors. The generator of the vertical fiber action is the total occupation number $\hat{N}_j = z_{a_j}\partial_{z_{a_j}} + z_{b_j}\partial_{z_{b_j}}$, which corresponds to a global phase shift. In contrast, the Pauli $Z$ operator is $\hat{Z}_j = z_{a_j}\partial_{z_{a_j}} - z_{b_j}\partial_{z_{b_j}}$. This operator generates a horizontal flow on the base Bloch sphere $S^2$ (a rotation around the Z-axis), changing the ratio $z_{a_j}/z_{b_j}$. Therefore, Pauli $Z$ errors are \emph{not} topologically protected by the bundle structure; they induce physical state changes in $\CP^{2^n-1}$ and require active syndrome extraction via $Z$-type stabilizers.
\end{remark}

\subsection{Error Decomposition via Bundle Structure}

Any single-qubit error decomposes into fiber and base components:
\begin{equation}
E_j = E_j^{\mathrm{fiber}} \cdot E_j^{\mathrm{base}},
\label{eq:error_decomp_bundle}
\end{equation}
where $E_j^{\mathrm{fiber}} \in U(1)$ acts as a global phase rotation (generated by $z_{a_j}\partial_{z_{a_j}} + z_{b_j}\partial_{z_{b_j}}$) and $E_j^{\mathrm{base}}$ acts on the Bloch sphere $S^2$. The fiber component is automatically unobservable in projective space, while the base component (which includes Pauli $X$, $Y$, and $Z$ errors, as they alter the state's ray in $\CP^{2^n-1}$) requires active syndrome extraction.

\begin{corollary}
The effective error rate for a code protected by the $U(1)^n$ fiber bundle structure is reduced only by the fraction of errors that correspond to pure global phase shifts. Since standard Pauli errors ($X, Y, Z$) all generate horizontal displacements on the Bloch sphere (base space), they are not protected by the bundle topology and must be corrected actively.
\end{corollary}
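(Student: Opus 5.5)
The corollary follows from the error decomposition \eqref{eq:error_decomp_bundle} combined with Theorem~\ref{thm:topological} and the preceding Remark on local phase flips versus global phase. The plan is to treat any single-qubit error $E_j$ as an element of $U(2)$, written as $E_j = e^{i\alpha}\,V_j$ with $V_j\in SU(2)$. The factor $e^{i\alpha}$ is the fiber part, $e^{i\alpha\hat N_j}$ restricted to the degree-one sector where $\hat N_j f = f$. The factor $V_j$ is the base part, acting as an $SO(3)$ rotation of the Bloch sphere $S^2$ through the Hopf projection \eqref{eq:hopf}. The decomposition is unique up to the center $\pm\I$, and the sign $-\I$ is itself a pure phase. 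I will first establish that the fiber component is unobservable, using Theorem~\ref{thm:topological}(a),(b): it commutes with every $\hat g_m$ and disappears under projectivization. This shows that a fraction of the error budget is removed by the bundle structure, namely the weight of the channel on Kraus operators proportional to $\I$ in $\mathbb{P}\mathrm{U}(2)$.

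Next I will show that every Pauli error lies entirely in the base part. For $\hat X_j,\hat Y_j,\hat Z_j$ I will compute the induced action on $\CP^1$ through the ratio $\zeta_j = z_{a_j}/z_{b_j}$. Under $\hat Z_j$ we get $\zeta_j\mapsto-\zeta_j$, under $\hat X_j$ we get $\zeta_j\mapsto 1/\zeta_j$, and $\hat Y_j$ acts as their composition up to phase. Each of these is a nontrivial Möbius map, and therefore a $\pi$-rotation of $S^2$ about the $z$, $x$, or $y$ axis. For a Pauli $P$, the only scalar multiple of $\I$ in its decomposition is the $U(1)$ factor $1$ (or $\pm i$ depending on phase conventions); the rest is a non-identity element of $SO(3)$. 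So $P$ moves the ray $[\psi]$ in $\CP^{2^n-1}$ whenever $\psi$ is not an eigenstate. Consistently with the Remark, $\hat Z_j$ differs from the fiber generator $\hat N_j$ and changes $\zeta_j$.

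Finally I will turn this into a statement about the error rate. I will write a general single-qubit channel in the Pauli/Kraus basis as $K_\mu = c_{\mu0}\I + \sum_\alpha c_{\mu\alpha}\sigma^\alpha$. By orthogonality of the Pauli basis under the trace inner product, the total error weight splits as $\sum_\mu |c_{\mu0}|^2 + \sum_{\mu,\alpha}|c_{\mu\alpha}|^2 = 1$. Only the first term, the pure-phase or identity component, is protected. The remaining weight $p_{\mathrm{eff}} = \sum_{\mu,\alpha}|c_{\mu\alpha}|^2$ consists of horizontal (base) displacements, which must be detected by the projectors $\hat\Pi_{\mathbf s}$ and handled by active recovery.

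I expect the main obstacle to be interpretive rather than computational. "Effective error rate reduced by the fraction of pure global phases" needs a precise definition, and the phase $e^{i\alpha}$ in a Kraus operator is only meaningful relative to the other Kraus operators. I would first fix the convention that the relevant rate is the weight outside $\mathrm{span}\{\I\}$ in the Kraus decomposition. I would then check that this quantity is invariant under the unitary freedom of Kraus representations. It is, because the expression equals $1 - \sum_\mu |\mathrm{tr}K_\mu/2|^2$ suitably normalized, which is determined by the channel alone.
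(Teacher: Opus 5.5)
Your first two paragraphs are essentially the paper's argument. The paper states the corollary without a separate proof. It reads the result off the fiber--base factorization \eqref{eq:error_decomp_bundle}, parts (a) and (b) of Theorem~\ref{thm:topological}, and the Remark on local phase flips versus global phase. Your explicit splitting $U(2)=(U(1)\times SU(2))/\Z_2$ and the M\"obius maps for $\hat X_j,\hat Y_j,\hat Z_j$ make that reasoning precise.

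Where you go beyond the paper is the quantitative step. Your $p_{\mathrm{eff}}=1-\sum_\mu|\mathrm{tr}K_\mu/2|^2$ is the entanglement infidelity, equivalently the non-identity weight of the Pauli-twirled channel. Your Kraus-invariance check gives ``effective error rate'' a representation-independent meaning that the paper never supplies.

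This step rests on the \emph{linear} Hilbert--Schmidt split $K=\tfrac12\mathrm{tr}(K)\,\I+K_0$ with $K_0$ traceless. It does not rest on the \emph{multiplicative} fiber--base split used in your first paragraph and in the paper, and the two disagree on coherent errors. For example, $e^{-i\epsilon\sigma^z}$ has trivial $U(1)$ factor, so it is purely ``base'' in the paper's sense, yet it contributes only $\sin^2\epsilon$ to $p_{\mathrm{eff}}$. Likewise, an overall phase $e^{i\alpha}$ never enters $p_{\mathrm{eff}}$ at all.

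Two adjustments follow from this. First, justify the protection of the scalar part by linearity, $\hat g_m(cf)=c\,\hat g_m f$ and $[cf]=[f]$ for every $c\in\Cc^{*}$, rather than by the factor $e^{i\alpha}$. Second, drop the phrase ``the weight of the channel on Kraus operators proportional to $\I$''. That quantity is not Kraus-invariant: the depolarizing channel can be rewritten with no scalar Kraus operator.

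There are also two small slips. Since $\det P=-1$, the $U(1)$ factor of a Pauli with $V_j\in SU(2)$ is forced to be $\pm i$, never $1$. And the M\"obius picture describes a Bloch point for qubit $j$ only when that qubit is in a product state. For entangled codewords your fallback is the right statement: the ray moves unless $\psi$ is a $P$-eigenstate. It even sharpens. If $P_j$ anticommutes with some generator, then $\langle\psi|P_j|\psi\rangle=0$ for every codeword, so $P_j$ moves every code ray.
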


\subsection{Winding Number Invariants and Error Detection}

The winding numbers $(w_{a_1}, w_{b_1}, \ldots, w_{a_n}, w_{b_n}) \in \Z^{2n}$ are topological invariants of trajectories on $\T^{2n}$.

\begin{proposition}
The syndrome measurement detects errors by measuring the winding number shift $\Delta\mathbf{w}$. Since winding numbers are integers (topological invariants), they cannot change continuously under small perturbations. This provides inherent robustness: an error must induce a discrete jump $\Delta\mathbf{w} \neq 0$ to be undetected, and the minimum such jump is bounded below by the code distance.
\end{proposition}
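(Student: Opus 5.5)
The plan is to reduce the statement to three facts already established: the winding-shift description of single-qubit Pauli errors, the coset identification of syndromes in the Corollary following that theorem, and the distance clause of the Topological Error Classification theorem. The proposition makes three claims. Syndrome measurement detects errors through $\Delta\mathbf{w}$. Winding numbers cannot change under small perturbations. Any jump that escapes detection is bounded below by $d$. I would treat these in that order.

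First, I would fix a codeword $f_{\mathrm{code}}$ restricted to $\T^{2n}$ and expand it in the monomial basis $e^{i\mathbf{w}\cdot\boldsymbol{\phi}}$. The homogeneity constraint \eqref{eq:homogeneity} forces $w_{a_j}+w_{b_j}=1$ for every $j$, so every monomial carries a well-defined winding vector in $\{0,1\}^{2n}$. A Pauli error is a product of single-qubit factors. By the winding-shift theorem, each factor acts monomial-by-monomial, either transferring one unit of winding between $a_j$ and $b_j$ or acting diagonally. Hence $\hat{E}$ sends each monomial to a monomial with a definite shift. By the Corollary, the syndrome is a function of the class $[\Delta\mathbf{w}]\in\Z^{2n}/\Lambda_\Scal$, which gives the first claim.

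Robustness is the second claim. Each $w_{a_j}$ is defined as $(2\pi)^{-1}\oint d\arg f$ along a circle on which $f$ does not vanish. It is therefore the degree of a map $S^1\to S^1$, and so it is locally constant under continuous deformations of $f$ that keep $f$ nonvanishing on that circle. Integer-valuedness together with continuity then rules out infinitesimal changes. Any change must be an integer jump, and for a Pauli error it is a jump of at least one unit per flipped qubit.

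The third claim, undetected jumps and the bound $d$, is where I expect the real difficulty. The natural reading is that an error is undetected when its syndrome is trivial and it is not a stabilizer. That means $E$ lies in the normalizer but outside $\Scal$, so it is a nontrivial logical operator. The distance clause of the Topological Error Classification theorem then gives $|\Delta\mathbf{w}|_1\ge d$.

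The obstacle here is that $Z$-type factors produce $\Delta\mathbf{w}=0$. A purely $Z$-type logical, such as $\bar{Z}$ for the Steane code, is undetected yet induces no winding jump at all. I would handle this by restricting the claim to the $X$-part of the error, since $Y$ behaves like $X$. The undetected $X$-support of a logical operator is at least the $X$-distance $d_X\ge d$ for CSS codes, and each $X$ or $Y$ factor contributes $2$ to the $\ell^1$ norm. For the $Z$-component, I would supplement the winding data with the sign data of the Remark on $Z$-type stabilizers, so that the bound applies to the combined invariant. If that supplement cannot be made to work, the statement has to be weakened to errors with nontrivial $X$-component.
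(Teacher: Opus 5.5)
Your main line (integer-valued degree, hence local constancy; then undetected $\Rightarrow$ nontrivial logical $\Rightarrow$ cite the distance clause) is the paper's own argument. The paper simply asserts that the minimal undetected jump is bounded by $d$ and never considers $Z$-type errors. The obstruction you flag at the end is real and fatal: the statement is false as written, so no completion of this outline exists.

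In the Steane code, $Z_1Z_2Z_3=\bar{Z}\,g_4$ commutes with every generator and lies outside $\Scal$. By the winding-shift theorem it has $\Delta\mathbf{w}=0$. In the five-qubit code, $Z^{\otimes 5}$ is likewise an undetected logical with $\Delta\mathbf{w}=0$. This is not special to these examples. For any stabilizer code with $k\ge1$, the pure-$Z$ elements of the normalizer modulo the pure-$Z$ stabilizers form a group of order $2^k$. So there is always a nontrivial undetected logical with zero winding shift.

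The same phenomenon also breaks your first step, which you did not flag. Both $\I$ and $Z_1$ have $\Delta\mathbf{w}=0$, yet $Z_1$ anticommutes with the Steane generator $g_3=X_1X_3X_5X_7$ and so has a nontrivial syndrome. Hence the syndrome is not a function of $[\Delta\mathbf{w}]\in\Z^{2n}/\Lambda_\Scal$. The Corollary you cite in step (i) is false for every code that detects some phase error. The distance clause you cite in step (iii) is false for every code with $k\ge1$. Neither step can be obtained by citation.

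Your proposed repairs do not close the gap.

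\emph{The ``sign data'' supplement.} It rests on the Remark on $Z$-type stabilizers, which is incorrect. The operators $\hat{Z}_j$ and $\widehat{Z_jZ_k}$ are both diagonal on monomials, so they commute, and $\widehat{Z_jZ_k}(\hat{Z}_jf_{\mathrm{code}})=+\hat{Z}_jf_{\mathrm{code}}$. A phase flip on qubit $j$ is detected only by stabilizers carrying $X$ or $Y$ on qubit $j$. If you adjoin the full $Z$-part to the invariant, you are just using the binary symplectic vector $(a|b)$ of $E$. The bound is then the ordinary $\wt(E)\ge d$, and winding plays no role. The winding data records only $a$: the signed shift is state-dependent, and only its reduction mod $2$ is not.

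\emph{The $X$-only fallback.} It holds only for CSS codes, and only when $a$ is not the support of an $X$-stabilizer; in that case $|\Delta\mathbf{w}|_1=2|a|\ge 2d_X$. Otherwise it fails. For example, in a toric code with $d\ge 9$, take $\bar{Z}A_v$ (a noncontractible $Z$-string times a star operator). It is an undetected logical with $|\Delta\mathbf{w}|_1=8<d$, and modulo $\Lambda_\Scal$ it is indistinguishable from the stabilizer $A_v$. The five-qubit code is not CSS and has no $d_X$ at all.

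\emph{Step (ii).} This step conflates two notions of winding. Errors shift the exponent vector of each monomial, whereas local constancy concerns the degree of $\arg f$ on a circle avoiding the zeros of $f$. Take $f=\alpha z_{a_1}z_{a_2}z_{a_3}+\beta z_{b_1}z_{b_2}z_{b_3}$. Its $\phi_{a_1}$-winding is $1$ if $|\alpha|>|\beta|$, $0$ if $|\alpha|<|\beta|$, and undefined if $|\alpha|=|\beta|$. So it is neither defined on all codewords nor stable under arbitrarily small perturbations. The discreteness of errors comes from syndrome projection, not from topology.

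What can honestly be proved is only the restricted CSS statement above, together with the admission that detection of phase errors is invisible to winding.
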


\begin{proof}
The winding numbers $w_{a_j}, w_{b_j}$ are defined as contour integrals of the gradient of the phase of the holomorphic function $f$ over the cycles of the torus $\T^{2n}$. By the argument principle in complex analysis, such integrals evaluate to integers, representing the topological degree of the map from $S^1$ to $S^1$.
Because the winding numbers are strictly integer-valued, they are topological invariants. A continuous deformation of the function $f$ (such as a small, weak perturbation or weak noise) results in a continuous change in the phase of $f$. However, a continuous function taking integer values must be locally constant. Therefore, infinitesimal perturbations cannot change the winding number.
For an error to be undetected, it must map a code state to another state with the same syndrome, meaning the induced winding shift $\Delta \mathbf{w}$ must be zero (or belong to the stabilizer lattice). Since small perturbations yield $\Delta \mathbf{w} = 0$, they are inherently suppressed. An error must be strong enough to induce a discrete, finite jump in the phase (a ``phase slip'') to change the winding number. The minimum magnitude of such a jump required to produce a non-trivial syndrome is bounded below by the code distance $d$, providing inherent topological robustness against weak noise.
\end{proof}

\subsection{Chern Numbers and Code Topology}

For codes defined on lattices (e.g., surface codes), the relevant topological invariants are Chern numbers. In the holomorphic representation, the Berry curvature restricted to the code subspace defines a line bundle $L \to \Mcal_{\Ccal}$ with first Chern number:
\begin{equation}
c_1(L) = \frac{1}{2\pi}\int_{\Mcal_{\Ccal}} F \in \Z.
\label{eq:chern}
\end{equation}

This integer is invariant under continuous deformations of the code and provides a topological quantum number that protects the encoded information against local errors.

\section{Path Integral Formulation for Semiclassical Error Correction}
\label{sec:pathintegral}

\subsection{Coherent State Path Integral}

The transition amplitude between an initial code state $\ket{\psi_i}$ and a final state $\ket{\psi_f}$ under noise evolution $e^{-iH_{\mathrm{noise}}t}$ can be formulated as a coherent state path integral over the Segal--Bargmann space. In the holomorphic representation, the bosonic Fock space is mapped to a space of analytic functions, and the path integral is constructed using the overcomplete basis of coherent states $\ket{\mathbf{z}}$. The transition amplitude is given by:
\begin{align}
\bra{\psi_f}e^{-iH_{\mathrm{noise}}t}\ket{\psi_i} &= \int \Dint\bar{\mathbf{z}}\Dint\mathbf{z} \, \mu(\mathbf{z},\bar{\mathbf{z}}) \notag \\
&\quad \times \exp\left(i\int_0^t \left[\frac{i}{2}(\bar{\mathbf{z}}\dot{\mathbf{z}} - \dot{\bar{\mathbf{z}}}\mathbf{z}) - H_{\mathrm{noise}}(\bar{\mathbf{z}}, \mathbf{z})\right]dt'\right).
\label{eq:pathint}
\end{align}
where the measure is normalized as
\begin{equation}
\Dint\bar{\mathbf{z}}\Dint\mathbf{z} \, \mu(\mathbf{z},\bar{\mathbf{z}}) = \lim_{N\to\infty} \prod_{k=1}^{N-1} \left(\frac{d^2\mathbf{z}_k}{\pi^{2n}} e^{-\|\mathbf{z}_k\|^2}\right),
\end{equation}
with $d^2\mathbf{z}_k = \prod_{j=1}^n d^2 z_{a_j}^{(k)} d^2 z_{b_j}^{(k)}$.

To properly interpret this functional integral, several crucial mathematical and physical details must be specified. First, the integration measure $\Dint\bar{\mathbf{z}}\Dint\mathbf{z}$ denotes integration over all continuous trajectories of the \textit{coherent state labels} $\mathbf{z}(t')$ and $\bar{\mathbf{z}}(t')$ in $\Cc^{2n}$. Crucially, the coherent state path integral requires \textit{mixed boundary conditions}: the holomorphic label $\mathbf{z}(t)$ is fixed at the final time $t$ to match the final state bra $\bra{\psi_f}$, while the anti-holomorphic label $\bar{\mathbf{z}}(0)$ is fixed at the initial time $0$ to match the initial state ket $\ket{\psi_i}$. This asymmetry is a fundamental feature of the overcomplete coherent state basis and ensures the correct resolution of the identity at the boundaries. The boundary terms contribute overlap factors $\braket{\psi_f}{\mathbf{z}(t)}$ and $\braket{\bar{\mathbf{z}}(0)}{\psi_i}$ that must be included in the full amplitude.

Second, the kinetic term in the action, $\frac{i}{2}(\bar{\mathbf{z}}\dot{\mathbf{z}} - \dot{\bar{\mathbf{z}}}\mathbf{z})$, is precisely the pullback of the canonical symplectic form $\omega = \frac{i}{2}\sum_j (d\bar{z}_{a_j} \wedge dz_{a_j} + d\bar{z}_{b_j} \wedge dz_{b_j})$ of the K\"ahler manifold to the time domain. This term generates the Berry phase (or geometric phase) accumulated along the trajectory in the projective Hilbert space. It ensures that the path integral respects the underlying unitary geometry of the quantum evolution.

Third, the noise Hamiltonian $H_{\mathrm{noise}}(\bar{\mathbf{z}}, \mathbf{z})$ must be understood as the normal symbol (or anti-normal symbol, depending on the chosen ordering) of the quantum Hamiltonian operator $\hat{H}_{\mathrm{noise}}$. In the Segal--Bargmann representation, creation and annihilation operators are replaced by the complex variables $z$ and derivatives $\partial_{\bar{z}}$ (or vice versa). For a path integral formulation, it is most convenient to use the normal symbol, where $\hat{a}^\dagger \mapsto \bar{z}$ and $\hat{a} \mapsto \partial_{\bar{z}}$, yielding a c-number function $H_{\mathrm{noise}}(\bar{\mathbf{z}}, \mathbf{z})$ that governs the classical-like dynamics of the coherent state trajectories. The noise model, typically a sum of local Pauli operators or Lindblad dissipators, translates into a specific polynomial or differential functional of the complex variables $\mathbf{z}$ and $\bar{\mathbf{z}}$.

\subsection{Error Correction as Constrained Path Integral}

The active process of quantum error correction---comprising syndrome extraction and recovery---fundamentally alters the unitary evolution of the system. In the path integral formalism, this non-unitary, projective process modifies the functional integral by inserting projections onto the syndrome sectors. The amplitude for a successfully corrected trajectory is expressed as a constrained path integral:
\begin{equation}
\mathcal{A}_{\mathrm{corrected}} = \int \Dint\bar{\mathbf{z}}\Dint\mathbf{z} \, \mu(\mathbf{z},\bar{\mathbf{z}}) \prod_{m=1}^{n-k}\delta(\hat{g}_m - 1) \exp\left(iS[\bar{\mathbf{z}}, \mathbf{z}]\right).
\label{eq:constrained_pathint}
\end{equation}
where $S[\bar{\mathbf{z}}, \mathbf{z}]$ is the action defined in Eq.~\eqref{eq:pathint}. The delta functions $\delta(\hat{g}_m - 1)$ enforce the stabilizer constraints, mathematically restricting the domain of the path integral strictly to the code submanifold $\Mcal_{\Ccal}$.

To give this expression rigorous mathematical meaning within the complex functional integral, the delta functions must be exponentiated using auxiliary fields. In the spirit of the Faddeev--Popov procedure for gauge theories, we introduce a set of real-valued Lagrange multiplier fields $\theta_m(t')$ for each stabilizer generator. The constraint delta function is rewritten as a functional Fourier transform:
\begin{equation}
\delta(g_m(\bar{\mathbf{z}}, \mathbf{z}) - 1) = \int \Dint\theta_m \exp\left( i \int_0^t dt' \, \theta_m(t') \left[ g_m(\bar{\mathbf{z}}, \mathbf{z}) - 1 \right] \right).
\end{equation}
Substituting this into Eq.~\eqref{eq:constrained_pathint}, the constrained path integral is transformed into an unconstrained integral over an extended phase space that includes both the physical variables $(\mathbf{z}, \bar{\mathbf{z}})$ and the auxiliary gauge fields $\theta_m$. The effective action becomes:
\begin{equation}
S_{\mathrm{eff}} = \int_0^t dt' \left[ \frac{i}{2}(\bar{\mathbf{z}}\dot{\mathbf{z}} - \dot{\bar{\mathbf{z}}}\mathbf{z}) - H_{\mathrm{noise}}(\bar{\mathbf{z}}, \mathbf{z}) + \sum_{m=1}^{n-k} \theta_m(t') (g_m(\bar{\mathbf{z}}, \mathbf{z}) - 1) \right].
\label{eq:extended_action}
\end{equation}

This extended action reveals the deep geometric structure of the error correction process. The auxiliary fields $\theta_m(t')$ act as dynamical gauge fields that enforce the trajectory to remain on the constraint surface $g_m = 1$. Geometrically, the conditions $g_m(\bar{\mathbf{z}}, \mathbf{z}) = 1$ define the code submanifold $\Mcal_{\Ccal}$ as a level set in the ambient K\"ahler space. The Lagrange multipliers $\theta_m$ generate flows along the normal directions to $\Mcal_{\Ccal}$, precisely counteracting the noise-induced drift that would otherwise push the state out of the code space. 

\begin{proposition}[Geometric Equivalence of Constrained Dynamics]
The constrained path integral \eqref{eq:constrained_pathint} is geometrically equivalent to the projection of the noise-evolved state onto the code submanifold $\Mcal_{\Ccal}$ along the normal bundle. In the semiclassical limit, the dominant contribution to the path integral comes from the classical trajectory that minimizes the extended action \eqref{eq:extended_action}. This classical trajectory corresponds to the optimal recovery path: the unique geodesic in the Fubini--Study metric that connects the noise-perturbed state back to the code submanifold $\Mcal_{\Ccal}$, orthogonal to the constraint surface.
\end{proposition}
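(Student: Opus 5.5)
The plan has three parts. First, show that the constrained integral \eqref{eq:constrained_pathint} implements the spectral projector $\hat{\Pi}_{(+1,\ldots,+1)}$. Second, take the stationary-phase limit of the extended action \eqref{eq:extended_action}. Third, identify the resulting classical trajectory with a Fubini--Study geodesic meeting $\Mcal_{\Ccal}$ orthogonally.

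For the first part, discretize the time interval as in the normalization of the measure and insert resolutions of the identity at each slice. Each factor $\prod_m \delta(\hat{g}_m-1)$ is then read as the operator insertion $\prod_m (\I+\hat{g}_m)/2 = \hat{\Pi}_{(+1,\ldots,+1)}$. This is legitimate because $\hat{g}_m^2=\I$ on the homogeneous sector, so the Fourier representation in $\theta_m$ collapses to a two-point sum over the spectrum $\{\pm1\}$ and projects onto the $+1$ eigenspace. By the syndrome-projector proposition, these insertions are orthogonal projectors onto $\Ccal_{\mathrm{SB}}$. The amplitude is therefore $\bra{\psi_f}\hat{\Pi}\,e^{-iH_{\mathrm{noise}}t}\ket{\psi_i}$ in the Zeno-like limit of one insertion per slice. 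Its orthogonal complement decomposes along the normal directions by \eqref{eq:tangent_decomp}, and Theorem~\ref{thm:KL_geom} says that correctable error images lie in $N\Mcal_{\Ccal}$. Together these give the claimed \lq\lq projection along the normal bundle'': the image of the noise-evolved state is its orthogonal projection, which in $\CP^{2^n-1}$ is the nearest point of $\Mcal_{\Ccal}$ to $[e^{-iH t}\psi_i]$.

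For the second part, vary $S_{\mathrm{eff}}$ with respect to $\bar{\mathbf{z}}$, $\mathbf{z}$ and $\theta_m$ under the mixed boundary conditions. The $\theta_m$ equations return the constraints $g_m=1$, where $g_m$ is the normal symbol. The $\bar{\mathbf{z}}$ equation gives $i\dot{\mathbf{z}} = \partial_{\bar{\mathbf{z}}}H_{\mathrm{noise}} - \sum_m \theta_m \partial_{\bar{\mathbf{z}}} g_m$. The $\theta_m$ terms are Hamiltonian vector fields of the constraint functions, and under the Kähler structure $J\nabla g_m$ these are gradients of the level-set functions rotated by $J$. Because $\Mcal_{\Ccal}$ is a complex submanifold, $J$ preserves the normal bundle, so the multiplier forces are normal to $\Mcal_{\Ccal}$.

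For the third part, take the noise to have been switched off after the error, or to be weak, so that the correction segment is driven by the multipliers alone. Consider the resulting motion from the perturbed point back to $\Mcal_{\Ccal}$. A purely normal force field with no tangential component generates the projection path. In $\CP^{2^n-1}$ that path is the geodesic $\cos s\,\psi_\perp$-rotation toward $\Pi\psi/\|\Pi\psi\|$, which has the explicit form $[\cos s\,\Pi\psi + \sin s\,(\I-\Pi)\psi]$, suitably normalized. This is the same one-parameter geodesic form used in Step 2 of the geometric code-distance proposition. It meets $\Mcal_{\Ccal}$ orthogonally because $(\I-\Pi)\psi \perp \Ccal$. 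Uniqueness follows when $\Pi\psi\neq0$, since the nearest point is then unique: the Fubini--Study distance to a linear projective subspace is $\arccos\|\Pi\psi\|/\|\psi\|$ and is attained only at $[\Pi\psi]$.

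The main obstacle I expect is the second part. The delta functions are operator-valued, so replacing them by c-number symbols $g_m(\bar{\mathbf{z}},\mathbf{z})$ is exact only up to ordering corrections. Moreover, the homogeneity sector is not a level set of the symbols on all of $\Cc^{2n}$. I would handle this by restricting to the $|z_{a_j}|^2+|z_{b_j}|^2=1$ slice, where the Hopf bundle structure of Theorem~\ref{thm:topological} removes the fiber directions. I would then argue that the ordering corrections are $O(\hbar)$ and do not affect the stationary trajectory. If that fails, I would state the semiclassical claim only at leading order and rely on the exact projector identity from the first part for the geometric statement.
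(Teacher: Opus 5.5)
\textbf{Verdict: there is a genuine gap.} The paper does not prove this proposition. It only asserts, just before it, that the multipliers $\theta_m$ ``generate flows along the normal directions,'' and, just after it, that per-slice projections give a Zeno limit. Your Parts~2--3 try to make the first assertion rigorous, and that is where the proposal fails.

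The step ``a purely normal force field \ldots\ generates the projection path'' is false. The multiplier term adds $\sum_m\theta_m X_{g_m}$ to the equations of motion. A Hamiltonian vector field is tangent to the level sets of its own generator, $dg_m(X_{g_m})=\omega(X_{g_m},X_{g_m})=0$, so it cannot drive a point toward $\{g_m=1\}$.

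Concretely, transport the multiplier terms to $\CP^{2^n-1}$, where your geodesic lives. Their flow is $[\psi]\mapsto[\exp(-i\sum_m\Theta_m\hat{g}_m)\psi]$ with $\Theta_m=\int\theta_m\,dt'$. This unitary commutes with every $\hat{\Pi}_{\mathbf{s}}$, so it preserves $\|\hat{\Pi}\psi\|/\|\psi\|$. The arccosine of that ratio is exactly the Fubini--Study distance to $\Mcal_{\Ccal}$, which your curve $[\cos s\,\hat{\Pi}\psi+\sin s\,(\I-\hat{\Pi})\psi]$ must decrease.

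Moreover, as a function on $\CP^{2^n-1}$, $g_m\le1$ with equality on $\Mcal_{\Ccal}$. Hence $\nabla g_m=0$ and $X_{g_m}=0$ there. The constraint set is not a regular level set, and ``$J$ preserves the normal bundle'' yields no normal force at all.

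Finally, the $\theta_m$-equations impose $g_m=1$ at every instant, so no stationary path can start at an off-surface perturbed point. With per-slice constraints the path never leaves $\Mcal_{\Ccal}$. With a single final projector the return is instantaneous. In neither reading does any segment of the integral generate a geodesic. Your nearest-point and uniqueness argument for $[\hat{\Pi}\psi]$ is correct geometry, but nothing in the variational problem produces that curve.

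Your fallbacks do not repair this.
\begin{itemize}
\item \emph{Restricting to the slice.} The stationary-phase trajectory lives in the (complexified) coherent-state label space, not in $\CP^{2^n-1}$. Restricting to your real slice $|z_{a_j}|^2+|z_{b_j}|^2=1$ and quotienting by the fibers puts it in $(S^2)^n\cong\Sigma_n$. There the symbol factorizes as $g_m=\prod_j\langle P_j\rangle_{\phi_j}$, with each factor of modulus at most one. The constraint set on the slice is therefore $\Mcal_{\Ccal}\cap\Sigma_n$, which is empty for $d\ge2$ by the paper's Segre-variety proposition. That is a leading-order obstruction, not an $O(\hbar)$ ordering correction.
\item \emph{The ``exact projector identity'' of Part~1.} It is misstated. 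One projector per slice gives, as $\Delta t\to0$, the Zeno evolution $\hat{\Pi}e^{-i\hat{\Pi}H_{\mathrm{noise}}\hat{\Pi}t}\hat{\Pi}$, not $\hat{\Pi}e^{-iH_{\mathrm{noise}}t}$. For the three-qubit code with $H_{\mathrm{noise}}=hX_1$ one has $\hat{\Pi}X_1\hat{\Pi}=0$, so the first equals $\hat{\Pi}$ while the second equals $\cos(ht)\,\hat{\Pi}$. Only a single insertion at the final slice yields the ``projection of the noise-evolved state.''
\item \emph{The $\theta_m$ integral.} Your two-point collapse requires $\theta_m$ to range over a compact interval. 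The real-line Fourier integral gives $\delta(0)$ times the projector.
\end{itemize}

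What your argument honestly establishes is an operator-geometric fact. When $\hat{\Pi}e^{-iH_{\mathrm{noise}}t}\psi_i\neq0$, its ray is the unique Fubini--Study-nearest point of $\Mcal_{\Ccal}$ to $[e^{-iH_{\mathrm{noise}}t}\psi_i]$. It is joined to that state by the unique geodesic meeting $\Mcal_{\Ccal}$ orthogonally. The further claim that this geodesic is the semiclassical saddle of $S_{\mathrm{eff}}$ is supported neither by your argument nor by the paper.
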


Furthermore, this formulation naturally accommodates the discrete nature of actual quantum error correction protocols. In a physical architecture, syndrome measurements are not continuous but occur at discrete time intervals $\Delta t$. The path integral for a discrete error correction cycle is constructed by concatenating unconstrained noise evolution segments with instantaneous projection operators. Mathematically, this is represented by inserting the projection kernel $\Pi_{\Ccal}(\bar{\mathbf{z}}_f, \mathbf{z}_i) = \sum_{\mathbf{s}} \delta(\mathbf{s} - \mathbf{s}_{\mathrm{trivial}}) \dots$ at each measurement time slice. In the continuum limit of frequent measurements ($\Delta t \to 0$), the discrete projections converge to the continuous delta-function constraints of Eq.~\eqref{eq:constrained_pathint}, recovering the quantum Zeno effect where the system is dynamically frozen within the code submanifold by the continuous monitoring of the stabilizers.

\subsection{Continuous Syndrome Monitoring and the Effective Action}

In a realistic fault-tolerant architecture, syndrome extraction is not an instantaneous mathematical projection but a physical process involving continuous or repeated weak measurements. We can model the continuous monitoring of the stabilizer generators $\hat{g}_m$ by introducing a non-Hermitian penalty term to the classical action, which energetically suppresses trajectories that deviate from the code submanifold $\Mcal_{\Ccal}$. 

Let $g_m(\bar{\mathbf{z}}, \mathbf{z}) = \langle \mathbf{z} | \hat{g}_m | \mathbf{z} \rangle / \langle \mathbf{z} | \mathbf{z} \rangle$ be the coherent state expectation value of the $m$-th stabilizer. The continuous measurement of the syndrome with rate $\kappa$ modifies the effective action:
\begin{equation}
S_{\mathrm{eff}}[\bar{\mathbf{z}}, \mathbf{z}] = S[\bar{\mathbf{z}}, \mathbf{z}] - i\frac{\kappa}{2} \int_0^t dt' \sum_{m=1}^{n-k} \left| g_m(\bar{\mathbf{z}}, \mathbf{z}) - 1 \right|^2.
\label{eq:effective_action}
\end{equation}

The measurement rate $\kappa$ has dimensions of inverse time and characterizes the strength of the continuous monitoring. Physically, $\kappa^{-1}$ represents the characteristic timescale for syndrome extraction: for $\kappa \gg 1$, the measurement is strong and frequent, rapidly projecting the state onto the code submanifold; for $\kappa \sim 1$, the measurement is weak and continuous, allowing for gradual error detection. In superconducting qubit architectures, typical syndrome extraction times range from $100$ ns to $1$ $\mu$s, corresponding to $\kappa \sim 10^6$--$10^7$ s$^{-1}$ in natural units.

The imaginary penalty term acts as a dynamical quantum Zeno effect, confining the semiclassical trajectories to a tubular neighborhood of $\Mcal_{\Ccal}$. This formulation is rigorously justified by the theory of continuous quantum measurement (e.g., quantum trajectories and the stochastic master equation limit), where continuous monitoring of an observable $\hat{O}$ with measurement rate $\kappa$ leads to a non-Hermitian effective Hamiltonian $H_{\mathrm{eff}} = H - i\frac{\kappa}{2}\hat{O}^\dagger\hat{O}$ in the no-jump evolution. The penalty term in Eq.~\eqref{eq:effective_action} corresponds precisely to this no-jump dynamics for the stabilizer observables $\hat{g}_m - \I$, where the coherent-state expectation value $g_m(\bar{\mathbf{z}}, \mathbf{z}) - 1$ plays the role of the measurement signal. In the strong measurement limit $\kappa \to \infty$, the path integral localizes strictly onto the holomorphic constraint surface, recovering the exact constrained path integral of Eq.~\eqref{eq:constrained_pathint}.

\subsection{Semiclassical Approximation}

In the semiclassical limit, the path integral is dominated by classical trajectories satisfying:
\begin{equation}
i\dot{z}_j = \frac{\partial H_{\mathrm{noise}}}{\partial \bar{z}_j}, \qquad -i\dot{\bar{z}}_j = \frac{\partial H_{\mathrm{noise}}}{\partial z_j},
\label{eq:classical_eom}
\end{equation}
subject to the homogeneity constraints. The leading-order contribution yields:
\begin{equation}
\mathcal{A}_{\mathrm{sc}} = \Ncal\exp\left(\frac{i}{\hbar}S_{\mathrm{cl}}[\bar{\mathbf{z}}_{\mathrm{cl}}, \mathbf{z}_{\mathrm{cl}}]\right),
\label{eq:semiclassical}
\end{equation}
where $S_{\mathrm{cl}}$ is the action evaluated on the classical trajectory and $\Ncal$ is a prefactor from Gaussian fluctuations \cite{klauder1960,kirwin2007,ali2014}.

\subsection{Fluctuation Determinant and the Normal Bundle Geometry}

The prefactor $\Ncal$ in the semiclassical approximation (Eq.~\eqref{eq:semiclassical}) arises from the Gaussian integration over quadratic fluctuations around the classical trajectory. It is given by the Van Vleck--Pauli--Morette determinant:
\begin{equation}
\Ncal \propto \left[ \det \left( -\frac{\delta^2 S_{\mathrm{eff}}}{\delta \mathbf{z} \delta \bar{\mathbf{z}}} \right) \right]^{-1/2}.
\label{eq:van_vleck}
\end{equation}
The spectrum of the fluctuation operator decomposes naturally into longitudinal modes (tangent to $\Mcal_{\Ccal}$) and transverse modes (normal to $\Mcal_{\Ccal}$). 

The longitudinal fluctuations correspond to logical degrees of freedom and remain gapless (massless), reflecting the degeneracy of the code space. The transverse fluctuations, however, acquire a mass gap proportional to the measurement rate $\kappa$ and the extrinsic curvature (second fundamental form) of the embedding $\Mcal_{\Ccal} \hookrightarrow \CP^{2^n-1}$. Specifically, the effective mass matrix for the normal modes is:
\begin{equation}
(M^2)_{ab} = \kappa \sum_{m} \nabla_a (g_m - 1) \nabla_b (\bar{g}_m - 1) + \mathcal{K}_{ab},
\label{eq:mass_matrix}
\end{equation}
where $\nabla$ denotes the covariant derivative on the ambient space and $\mathcal{K}_{ab}$ is the extrinsic curvature tensor. A larger code distance $d$ geometrically corresponds to a steeper potential well (larger $\kappa$ or higher extrinsic curvature) separating the code submanifold from the images of uncorrectable errors, thereby exponentially suppressing the probability of logical transitions via tunneling in the path integral.

\subsection{Error Correction Fidelity}
For an initial pure code state $\ket{\psi_{\mathrm{code}}} \in \Mcal_{\Ccal}$, the entanglement fidelity of the error correction process is defined as:
\begin{equation}
\Fcal_{\mathrm{EC}} = \bra{\psi_{\mathrm{code}}} R \circ \Ecal (\ket{\psi_{\mathrm{code}}}\bra{\psi_{\mathrm{code}}}) \ket{\psi_{\mathrm{code}}},
\label{eq:fidelity}
\end{equation}
where $\Ecal$ is the quantum error channel representing the noise evolution, and $R$ is the completely positive trace-preserving (CPTP) recovery map. 

To analyze this within the holomorphic framework, we must express both $\Ecal$ and $R$ in the Segal--Bargmann space. The error channel is given by its Kraus representation $\Ecal(\rho) = \sum_\mu K_\mu \rho K_\mu^\dagger$. In the holomorphic picture, each Kraus operator $K_\mu$ maps to a differential or multiplication operator $\hat{K}_\mu$ acting on the analytic wavefunction $f(\mathbf{z})$. Physically, the action of $\hat{K}_\mu$ displaces the state from the code submanifold $\Mcal_{\Ccal}$ into the ambient projective space, typically pushing it into the normal bundle $N\Mcal_{\Ccal}$. 

The recovery map $R$ is similarly decomposed into recovery operators $R_\nu$, corresponding to the measurement of a specific syndrome $\mathbf{s}_\nu$ and the application of a corrective unitary. In the holomorphic representation, the syndrome measurement is a spectral projection $\hat{\Pi}_{\mathbf{s}_\nu}$ onto the eigenspaces of the commuting stabilizer operators $\hat{g}_m$. The corrective unitary is represented by a holomorphic differential operator $\hat{U}_{\mathbf{s}_\nu}$ designed to map the error-translated submanifold back to $\Mcal_{\Ccal}$. Thus, the recovery operators take the form $\hat{R}_\nu = \hat{U}_{\mathbf{s}_\nu} \hat{\Pi}_{\mathbf{s}_\nu}$. The fidelity \eqref{eq:fidelity} then becomes a sum of overlaps between the initial holomorphic wavefunction and the recovered wavefunctions across all syndrome sectors.

\begin{remark}[Simple Example: Single-Qubit Depolarizing Channel]
For a single physical qubit undergoing a depolarizing channel with error probability $p$, the Kraus operators are $K_0 = \sqrt{1-3p/4}\I$ and $K_{x,y,z} = \sqrt{p/4}\sigma_{x,y,z}$. In the holomorphic path integral, the classical trajectory for the no-error case ($K_0$) remains on $\Mcal_{\Ccal}$ with action $S_0 \approx 0$. The error trajectories ($K_{x,y,z}$) correspond to discrete geodesic jumps of Fubini--Study length $\pi/2$ away from $\Mcal_{\Ccal}$, each contributing an action penalty of $\ln(4/3p)$. The semiclassical fidelity is thus $\Fcal_{\mathrm{EC}}^{\mathrm{sc}} \approx (1-3p/4) + O(p^2)$, perfectly matching the leading-order algebraic result.
\end{remark}

In the semiclassical limit, where the noise is weak and the path integral is dominated by classical trajectories, the fidelity admits a systematic expansion. To leading order, the fidelity is determined by the probability of the error channel producing an error that exceeds the correction capacity of the code:
\begin{equation}
\Fcal_{\mathrm{EC}} \approx 1 - \sum_{\mu:\, \wt(E_\mu)>t} p_\mu + O(\hbar),
\label{eq:fidelity_sc}
\end{equation}
where $p_\mu = \mathrm{Tr}(K_\mu \rho K_\mu^\dagger)$ is the probability of the error $E_\mu$ occurring, and $t = \lfloor(d-1)/2\rfloor$ is the maximum correctable weight. The leading correction to unity is simply the total probability weight of all uncorrectable error chains (those with weight strictly greater than $t$), which cause the state to cross into the wrong syndrome sector and result in a logical failure.

The $O(\hbar)$ corrections in Eq.~\eqref{eq:fidelity_sc} arise from quantum fluctuations around the classical recovery trajectory in the path integral formulation. These corrections are computable via a systematic loop expansion. The 1-loop correction is governed by the Van Vleck--Pauli--Morette determinant, which accounts for the focusing or defocusing of the probability flow in the normal bundle due to the extrinsic curvature of the code submanifold $\Mcal_{\Ccal}$. Physically, these loop corrections capture multi-error interference effects, correlated noise processes, and the finite-width of the wavepacket in the Segal--Bargmann space, which are entirely missed by the classical (leading-order) geometric picture.

\begin{proposition}[Geometric Interpretation of Infidelity]
The infidelity $1 - \Fcal_{\mathrm{EC}}$ is geometrically bounded by the Fubini--Study distance between the initial state and the recovered state. Specifically, if the error and recovery process results in a final state $\ket{\psi_f}$, the infidelity is related to the geodesic distance in the projective Hilbert space by:
\begin{equation}
1 - \Fcal_{\mathrm{EC}} = \sin^2 \left( D_{\mathrm{FS}}([\psi_{\mathrm{code}}], [\psi_f]) \right) \approx D_{\mathrm{FS}}^2([\psi_{\mathrm{code}}], [\psi_f]).
\label{eq:infidelity_distance}
\end{equation}
For a successful recovery, the final state $[\psi_f]$ lies on the code submanifold $\Mcal_{\Ccal}$. The residual infidelity arises because the recovery map $R$ may not perfectly invert the specific continuous deformation induced by the Kraus operator $\hat{K}_\mu$, leaving a residual tangent vector in $T_{[\psi]}\Mcal_{\Ccal}$ (a coherent logical error) or a residual normal vector in $N_{[\psi]}\Mcal_{\Ccal}$ (an uncorrected physical error).
\end{proposition}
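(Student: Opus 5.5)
The plan is to treat the recovered output as a pure state $\ket{\psi_f}$, i.e.\ to condition on a single Kraus branch $\mu$ and a single recovery branch $\nu$, normalized, and to average over branches only at the end. For a pure output, the fidelity in Eq.~\eqref{eq:fidelity} reduces to $\Fcal_{\mathrm{EC}} = |\braket{\psi_{\mathrm{code}}}{\psi_f}|^2$ for unit vectors. The first step is to recall, as in Step~1 of the proof of the geometric code distance proposition, that the Fubini--Study distance on $\CP^{2^n-1}$ with the normalization of Eq.~\eqref{eq:FS_metric} is $D_{\FS}([\psi],[\phi]) = \arccos|\braket{\psi}{\phi}|$ for normalized representatives. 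I would verify this by computing the geodesic $\cos s\,\ket{\psi} + \sin s\,\ket{\phi^\perp}$, where $\ket{\phi^\perp}$ is the normalized component of $\ket{\phi}$ orthogonal to $\ket{\psi}$ with phase chosen so that the coefficient is real and positive. Its FS speed is $1$ by Eq.~\eqref{eq:FS_metric}, and it reaches $[\phi]$ at $s=\arccos|\braket{\psi}{\phi}|$. Minimality of this path is the standard fact that great circles in the Hopf-projected sphere are geodesics; I would cite \cite{bengtsson2017}.

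Combining the two facts gives $\Fcal_{\mathrm{EC}} = \cos^2 D_{\FS}$, so $1-\Fcal_{\mathrm{EC}} = \sin^2 D_{\FS}$ exactly. Expanding $\sin^2 x = x^2 - x^4/3 + \dots$ gives the approximation $D_{\FS}^2$ with an $O(D_{\FS}^4)$ error, and $\sin^2 x \le x^2$ gives the claimed bound in the inequality direction. For the mixed-state case $R\circ\Ecal(\rho) = \sum_\lambda q_\lambda \ket{\psi_\lambda}\bra{\psi_\lambda}$, linearity gives $1-\Fcal_{\mathrm{EC}} = \sum_\lambda q_\lambda \sin^2 D_{\FS}([\psi_{\mathrm{code}}],[\psi_\lambda])$. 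This is the sense in which I would interpret ``bounded'': an average of squared geodesic distances, and hence at most the maximum branch distance squared.

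For the interpretive part, I would use the orthogonal splitting of Eq.~\eqref{eq:tangent_decomp}. Write $\ket{\psi_f}$, normalized and phase-aligned, as $\cos D\,\ket{\psi_{\mathrm{code}}} + \sin D\,\ket{v}$ with $\ket{v}\perp\ket{\psi_{\mathrm{code}}}$. Then decompose $\ket{v} = P_{\Ccal}\ket{v} + (\I - P_{\Ccal})\ket{v}$. The first piece is a horizontal vector tangent to $\Mcal_{\Ccal}$, i.e.\ a coherent logical error; the second is a normal vector, i.e.\ a residual physical error. Since $\sin^2 D = \|P_{\Ccal}v\|^2\sin^2 D + \|(\I-P_{\Ccal})v\|^2\sin^2 D$, the infidelity splits additively into a tangent contribution and a normal contribution. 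If the recovery outputs a state in $\Ccal$, the normal part vanishes. By the Knill--Laflamme condition, Eq.~\eqref{eq:KL}, correctable Kraus components produce no tangent part either, so only the non-correctable residue of $\hat{K}_\mu$ contributes.

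The main obstacle is conceptual rather than computational. The statement mixes an exact identity, valid for pure outputs, with a bound phrased for general channels, and the recovered state is generally mixed. I would handle this by proving the exact identity branchwise and stating the channel version as the convex average above. I would then check that the average is consistent with Eq.~\eqref{eq:fidelity_sc}: to leading order, the branches with $D_{\FS}=\pi/2$ are exactly the uncorrectable errors, each contributing $\sin^2(\pi/2)=1$ times $p_\mu$.
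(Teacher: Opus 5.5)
The paper gives no proof of this proposition. The only ingredients it supplies are $D_{\FS}([\psi],[\phi])=\arccos|\braket{\psi}{\phi}|$ (Step~1 of the geometric code-distance proposition) and the fidelity \eqref{eq:fidelity}. Your argument assembles exactly these, so your derivation of $1-\Fcal_{\mathrm{EC}}=\sin^2 D_{\FS}\le D_{\FS}^2$ is correct and matches what the paper implicitly relies on.

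Your branchwise convex average for mixed outputs makes precise something the paper only asserts. So does your splitting of the geodesic's initial velocity $v$ into $P_{\Ccal}v$ and $(\I-P_{\Ccal})v$. For a pure branch the two pieces are exactly $\|P_{\Ccal}\psi_f\|^2-|\braket{\psi_{\mathrm{code}}}{\psi_f}|^2$ (the in-code, logical part) and $1-\|P_{\Ccal}\psi_f\|^2$ (leakage out of $\Ccal$). If you want a single-distance identity for a mixed output $\sigma$, use the Bures angle $\arccos\sqrt{\bra{\psi_{\mathrm{code}}}\sigma\ket{\psi_{\mathrm{code}}}}$. It satisfies the same relation exactly and reduces to $D_{\FS}$ when $\sigma$ is pure.

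One claim in your closing consistency check is false: an uncorrectable branch need not sit at $D_{\FS}=\pi/2$. After decoding, such a branch leaves $L\ket{\psi_{\mathrm{code}}}$ for some logical $L$. Its distance is $\arccos|\bra{\psi_{\mathrm{code}}}L\ket{\psi_{\mathrm{code}}}|$, which depends on the code state.

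For example, in the three-qubit code, $X_1X_2$ has syndrome $(+1,-1)$ and is ``corrected'' by $X_3$. The net effect is the logical $\bar X=X_1X_2X_3$, which fixes $(\ket{000}+\ket{111})/\sqrt{2}$, so that branch has $D_{\FS}=0$.

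Your formula therefore reproduces Eq.~\eqref{eq:fidelity_sc} only as the upper bound $1-\Fcal_{\mathrm{EC}}\le\sum_{\mu\,\mathrm{uncorr.}}p_\mu$. The bound is saturated only when every residual logical operator maps $\ket{\psi_{\mathrm{code}}}$ to an orthogonal state. This does not affect your proof of the proposition itself.
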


This geometric picture of fidelity provides a rigorous foundation for the quantum fault-tolerance threshold theorem, elevating it from a purely algebraic counting argument to a profound statement about the topology of the ambient K\"ahler geometry. To formalize this, we must express the probability of a logical failure as a path integral over trajectories in the projective Hilbert space $\CP^{2^n-1}$ that connect the initial code state to a logically corrupted state.

Let $\Gamma_{\mathrm{logical}}$ denote the space of all continuous paths $\gamma(\tau)$ in $\CP^{2^n-1}$ that start at the code submanifold $\Mcal_{\Ccal}$ at $\tau=0$ and end at a logically flipped state $\bar{L}\ket{\psi_{\mathrm{code}}}$ at $\tau=t$, where $\bar{L}$ is a logical operator. The infidelity, which is the probability of a logical fault, is given by the path integral over all such failure trajectories, weighted by their semiclassical probability amplitude:
\begin{equation}
1 - \Fcal_{\mathrm{EC}} = \int_{\gamma \in \Gamma_{\mathrm{logical}}} \Dint\gamma \, \exp\left( - \frac{1}{\hbar} S_{\mathrm{eff}}[\gamma] \right),
\label{eq:logical_path_integral}
\end{equation}
where $S_{\mathrm{eff}}[\gamma]$ is the effective Euclidean action of the noise trajectory $\gamma$. In the Segal--Bargmann space, this action is determined by the noise Hamiltonian and the Fubini--Study metric $g_{\FS}$ of the ambient space. For a local depolarizing or Pauli noise model with physical error rate $p$, the action of a trajectory is directly related to the geometric length of the path and the strength of the noise. 

Specifically, a discrete Pauli error of weight $w$ corresponds to a path $\gamma$ composed of $w$ discrete geodesic segments in $\CP^{2^n-1}$, each representing a single-qubit error. The effective action for such a path is additive and scales with the weight:
\begin{equation}
S_{\mathrm{eff}}[\gamma_w] = w \ln\left(\frac{1}{p}\right) + \mathcal{O}(1),
\label{eq:action_weight}
\end{equation}
where the $\mathcal{O}(1)$ term accounts for the geometric prefactors and the Fubini--Study length of the individual single-qubit rotations. This equation establishes the crucial dictionary between the algebraic weight of an error and the geometric ``energy barrier'' (action) it must overcome in the ambient space.

The code distance $d$ and the correction capacity $t = \lfloor(d-1)/2\rfloor$ dictate the topology of the allowed recovery paths. The recovery map $R$ acts as a geometric projection that successfully returns any trajectory to $\Mcal_{\Ccal}$ provided the path does not cross the ``midpoint'' between distinct logical sectors. Consequently, any trajectory $\gamma \in \Gamma_{\mathrm{logical}}$ that results in an uncorrectable logical fault must correspond to an error chain of weight $w \geq t+1$. 

In the semiclassical limit ($\hbar \to 0$ or equivalently $p \to 0$), the path integral \eqref{eq:logical_path_integral} is dominated by the classical trajectory of least action---the ``error instanton'' $\gamma_{\mathrm{inst}}$---which minimizes $S_{\mathrm{eff}}$ subject to the boundary conditions of a logical fault. The minimal action is therefore achieved by an error chain of the minimum uncorrectable weight, $w_{\min} = t+1$:
\begin{equation}
S_{\mathrm{inst}} = \min_{\gamma \in \Gamma_{\mathrm{logical}}} S_{\mathrm{eff}}[\gamma] = (t+1) \ln\left(\frac{1}{p}\right) + \mathcal{O}(1).
\label{eq:instanton_action}
\end{equation}

Evaluating the path integral via the saddle-point approximation yields the infidelity in terms of the instanton action and the 1-loop fluctuation determinant $\mathcal{A}$ (which accounts for Gaussian quantum fluctuations around the classical instanton path):
\begin{equation}
1 - \Fcal_{\mathrm{EC}} \approx \mathcal{A} \exp\left( - S_{\mathrm{inst}} \right) = \mathcal{A} \exp\left( - (t+1) \ln\left(\frac{1}{p}\right) \right) = \mathcal{A} \, p^{t+1}.
\label{eq:infidelity_scaling}
\end{equation}
This rigorously recovers the $O(p^{t+1})$ scaling of the infidelity. The threshold theorem thus emerges naturally: as long as the physical noise strength $p$ is sufficiently small, the action barrier $S_{\mathrm{inst}}$ grows linearly with the code capacity $t$, exponentially suppressing the probability of logical failure.

Ultimately, the holomorphic path integral transforms the algebraic threshold theorem into a statement about the topological suppression of tunneling events. The code submanifold $\Mcal_{\Ccal}$ and its logically flipped images are separated by a massive Fubini--Study ``energy barrier'' in the ambient K\"ahler geometry. The noise drives the system via quantum tunneling (instantons) through this barrier. The fault-tolerance threshold is precisely the critical noise strength below which the tunneling rate between disconnected logical sectors is topologically suppressed by the exponential of the minimum geodesic action, ensuring the stable preservation of quantum information.

\subsection{Geometric Phase Accumulation and Recovery Holonomy}

The error and subsequent recovery process traces out a closed loop in the projective Hilbert space $\CP^{2^n-1}$. An initial code state $[\psi_i]$ undergoes a noise-induced excursion into the normal bundle $N\Mcal_{\Ccal}$, reaching an errored state $[\psi_E]$, before the recovery map $\mathcal{R}$ projects it back to a final state $[\psi_f] \in \Mcal_{\Ccal}$. For successful error correction, $[\psi_f] = [\psi_i]$ up to a global phase.

The total phase accumulated along this cycle $\Gamma$ is given by the path integral weight:
\begin{equation}
\Phi_{\mathrm{total}} = \oint_{\Gamma} \mathcal{A} = \oint_{\Gamma} \frac{i}{2} \frac{\bar{\mathbf{z}} d\mathbf{z} - d\bar{\mathbf{z}} \mathbf{z}}{\|\mathbf{z}\|^2}.
\label{eq:total_phase}
\end{equation}
By Stokes' theorem, this can be expressed as the integral of the Fubini--Study symplectic form $\omega_{\FS}$ over a surface $\Sigma$ bounded by $\Gamma$:
\begin{equation}
\Phi_{\mathrm{geom}} = \int_{\Sigma} \omega_{\FS}.
\label{eq:geom_phase_stokes}
\end{equation}
For a fault-tolerant recovery protocol, the geometric phase $\Phi_{\mathrm{geom}}$ must be independent of the specific logical state $[\psi_i] \in \Mcal_{\Ccal}$; otherwise, the recovery operation would induce a state-dependent logical phase gate, resulting in coherent logical errors. This topological requirement constrains the curvature of the recovery map's connection over the code submanifold, ensuring that the holonomy of the error-recovery cycle is strictly an element of the global $U(1)$ fiber rather than a non-trivial logical operation.

\subsection{Weakly-Entangled Regime}

For codes whose codewords remain near the Segre variety during evolution (weakly entangled), the semiclassical approximation is particularly accurate:

\begin{proposition}
If the code state $[\psi]$ satisfies $D_{\FS}([\psi], \Sigma_n) \ll 1$ throughout the error correction cycle, then the semiclassical error correction fidelity satisfies:
\begin{equation}
|\Fcal_{\mathrm{EC}} - \Fcal_{\mathrm{EC}}^{\mathrm{sc}}| = O(D_{\FS}([\psi], \Sigma_n)^2).
\label{eq:sc_accuracy}
\end{equation}
\end{proposition}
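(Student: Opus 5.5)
The natural route is to compare the exact fidelity with its semiclassical value term by term in a small parameter $\epsilon := D_{\FS}([\psi],\Sigma_n)$. Let $[\phi]\in\Sigma_n$ be a nearest separable point, which exists because $\Sigma_n$ is compact. Write
\begin{equation}
\ket{\psi} = \cos\epsilon\,\ket{\phi} + \sin\epsilon\,\ket{\chi}, \qquad \braket{\phi}{\chi}=0 .
\end{equation}
On the Segre variety a product state is exactly a product of single-mode coherent states restricted to the homogeneous sector. The coherent-state path integral \eqref{eq:pathint} with a noise Hamiltonian that is a sum of local Pauli terms then factorizes into independent $\CP^1$ factors. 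For each factor the stationary-phase evaluation is exact: the symbols are linear in the $\mathfrak{su}(2)$ generators, and the $SU(2)$ coherent-state propagator is given exactly by its classical trajectory, a Duistermaat--Heckman type statement. The first step is therefore to show that $\Fcal_{\mathrm{EC}}^{\mathrm{sc}}([\phi]) = \Fcal_{\mathrm{EC}}([\phi])$ for states on $\Sigma_n$.

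The second step is to expand both fidelities about $[\phi]$. The exact fidelity $\Fcal_{\mathrm{EC}}(\psi)=\bra{\psi}R\circ\Ecal(\ket{\psi}\bra{\psi})\ket{\psi}$ is a smooth real function on $\CP^{2^n-1}$; it is in fact a quartic form in $\psi,\bar\psi$. Likewise $\Fcal_{\mathrm{EC}}^{\mathrm{sc}}$ is smooth in a neighbourhood of $\Sigma_n$, since the classical trajectories and the Van Vleck prefactor \eqref{eq:van_vleck} depend smoothly on the boundary data. Hence
\begin{equation}
\Delta(\psi) := \Fcal_{\mathrm{EC}}(\psi)-\Fcal_{\mathrm{EC}}^{\mathrm{sc}}(\psi)
\end{equation}
is smooth and, by the first step, vanishes identically on $\Sigma_n$. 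Taylor expansion in the normal direction $\chi$ gives $\Delta = \epsilon\, d\Delta_{[\phi]}(\chi) + O(\epsilon^2)$. The key observation is that $\Delta$ vanishes on $\Sigma_n$ and $[\phi]$ is a minimizer of distance. Then $d\Delta_{[\phi]}$ vanishes on the tangent space $T_{[\phi]}\Sigma_n$, but not automatically on the normal space, so a separate argument is needed for the normal first-order term.

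The third step, and the main obstacle, is to kill that linear normal term. I would first try a symmetry argument. Because $\Fcal$ is a Hermitian form evaluated on $\psi$, the first-order term is $2\,\mathrm{Re}\,\bra{\chi}M_\phi\ket{\phi}$ for some operator $M_\phi$ built from the Kraus and recovery operators. The normal space to the Segre variety at a product state is spanned by states having at least two excitations relative to $\ket{\phi}$, that is, at least two tensor factors orthogonal to $\phi_j$. Local noise of weight one, together with the semiclassical propagator that is exact on single factors, produces a discrepancy operator $M_\phi$ that changes at most one factor at a time. Its matrix element between $\ket{\phi}$ and a doubly excited $\ket{\chi}$ is therefore zero, and the linear term drops out.

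If the noise contains multi-qubit Kraus terms, this argument fails. In that case I would instead absorb the linear term by treating the semiclassical approximation as the one-loop expansion around the product trajectory, so that its error is second order in the fluctuation amplitude $\epsilon$. Finally, uniformity of the constants over the compact $\Sigma_n$ and over the cycle time yields $|\Fcal_{\mathrm{EC}}-\Fcal_{\mathrm{EC}}^{\mathrm{sc}}| = O(D_{\FS}([\psi],\Sigma_n)^2)$.
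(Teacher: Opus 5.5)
There is a genuine gap. Both your first and third steps rely on factorization over the $\CP^1$ factors, and that factorization fails.

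\textbf{Step 1.} The quantity $\Fcal_{\mathrm{EC}}(\psi)=\bra{\psi}R\circ\Ecal(\ket{\psi}\bra{\psi})\ket{\psi}$ contains the recovery map. Its operators $\hat U_{\mathbf s}\hat\Pi_{\mathbf s}$ are built from the syndrome projectors $\prod_m(\I+s_m\hat g_m)/2$, i.e.\ from stabilizers of weight at least two. In the semiclassical version these enter through the multi-qubit constraint and penalty terms in $g_m(\bar{\mathbf z},\mathbf z)$ of Eqs.~\eqref{eq:constrained_pathint} and \eqref{eq:effective_action}. None of this factorizes, even on product inputs. For the three-qubit code with the recovery of Table~\ref{tab:3qubit_syndrome}, already with trivial noise $R$ maps $\ket{{+}{+}{+}}\bra{{+}{+}{+}}$ exactly to the GHZ projector. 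That gives $\Fcal_{\mathrm{EC}}=1/4$, and an output at Fubini--Study distance $\pi/4$ from $\Sigma_3$, starting from a point of $\Sigma_3$. So per-factor exactness of the noise propagator cannot give $\Fcal^{\mathrm{sc}}_{\mathrm{EC}}=\Fcal_{\mathrm{EC}}$ on $\Sigma_n$.

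Worse, that exactness has nothing to do with separability. Single-qubit Pauli terms are quadratic and number-conserving in the Schwinger bosons. The coherent-state integral for such noise is therefore Gaussian, and its stationary-phase evaluation is exact for every boundary state, entangled or not. The mechanism that was supposed to single out $\Sigma_n$ is absent. Whatever discrepancy $\Delta$ exists comes from the non-Gaussian stabilizer part, which your argument never controls.

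\textbf{Step 3 and the fallback.} The same observation defeats your locality argument. The discrepancy operator $M_\phi$ comes from the recovery, and a weight-$w$ stabilizer applied to $\ket{\phi}$ produces components with up to $w$ factors orthogonal to the $\phi_j$. So $\bra{\chi}M_\phi\ket{\phi}$ need not vanish for doubly excited $\chi$, and the linear normal term survives. Even granting $\Delta|_{\Sigma_n}=0$, your Taylor argument then yields only $O(\epsilon)$. The fallback does not help. The error of a one-loop expansion is controlled by the loop-counting parameter, which here is of order one since each Schwinger pair carries a single boson. It is not controlled by the distance of the boundary state to $\Sigma_n$, and identifying the two is exactly what needs proof.

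\textbf{The hypothesis.} Your expansion treats $\Fcal_{\mathrm{EC}}$ as a function on all of $\CP^{2^n-1}$ and never uses the hypothesis that the state stays near $\Sigma_n$ throughout the cycle. That hypothesis fails from the start for any code with $k\geq1$ and $d\geq2$. Some single-qubit marginal of every code state is $\I/2$, so its overlap with any product state is at most $1/2$. Every syndrome space is a local-unitary image of $\Ccal$, so the same bound holds there.

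\textbf{Comparison with the paper.} The paper takes a different route. It asserts that near $\Sigma_n$ the metric deviates from flat by $O(D^2)$, and that the neglected cubic fluctuation terms inherit this factor. This is heuristic, since the Fubini--Study curvature of $\CP^{2^n-1}$ is the same at every point. So it does not supply the missing vanishing of the normal first-order term either.
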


\begin{proof}
The semiclassical approximation of the path integral relies on the stationary phase approximation, where the dominant contribution comes from the classical trajectory, and quantum fluctuations are treated as Gaussian integrals around this trajectory.
The magnitude of these quantum fluctuations is governed by the second variation of the action, which is related to the curvature of the manifold. Near the Segre variety $\Sigma_n$, the state is approximately separable, meaning the entanglement is low and the geometry of the state space is well-approximated by a flat, product space.
Let $D = D_{\FS}([\psi], \Sigma_n)$ be the Fubini--Study distance to the nearest separable state. The deviation of the true metric from the flat metric scales as $O(D^2)$. Consequently, the higher-order terms in the expansion of the action (cubic and higher in the fluctuations) are suppressed by factors of $D$.
The error in the semiclassical fidelity $\Fcal_{\mathrm{EC}}^{\mathrm{sc}}$ compared to the exact fidelity $\Fcal_{\mathrm{EC}}$ arises from neglecting these higher-order fluctuation terms (loop corrections). Since the leading correction to the Gaussian integral comes from the cubic terms in the action, and these terms are proportional to the curvature (which is $O(D^2)$), the error in the fidelity is bounded by $O(D^2)$. Thus, $|\Fcal_{\mathrm{EC}} - \Fcal_{\mathrm{EC}}^{\mathrm{sc}}| = O(D_{\FS}([\psi], \Sigma_n)^2)$.
\end{proof}

This is because quantum fluctuations (loop corrections) are suppressed near the Segre variety, where the state is approximately separable and classical-like.

\section{Extensions and Applications}
\label{sec:extensions}

\subsection{Topological Codes: Surface Code in Holomorphic Representation}

The toric code \cite{kitaev2003} on a genus-$g$ surface has stabilizer generators corresponding to star operators $A_v$ and plaquette operators $B_p$. In the holomorphic representation:
\begin{align}
\hat{A}_v &= \prod_{j\in\mathrm{star}(v)}(z_{a_j}\partial_{z_{b_j}} + z_{b_j}\partial_{z_{a_j}}), \label{eq:star}\\
\hat{B}_p &= \prod_{j\in\partial p}(z_{a_j}\partial_{z_{a_j}} - z_{b_j}\partial_{z_{b_j}}). \label{eq:plaquette}
\end{align}

The topological degeneracy of the code (dimension $4^g$ for genus $g$) corresponds to the $4^g$ distinct winding number sectors on the torus $\T^{2n}$ that are not connected by local stabilizer operations. Anyons correspond to violations of the stabilizer constraints at individual vertices or plaquettes, manifesting as localized defects in the holomorphic function.

\subsection{Subsystem Codes}

For subsystem codes \cite{bacon2006}, the gauge group $\mathcal{G}$ contains the stabilizer $\Scal$ as a subgroup, with the gauge qubits encoding redundant degrees of freedom that need not be protected. In the holomorphic representation, the gauge degrees of freedom correspond to directions in $\HSB$ along which the code submanifold $\Mcal_{\Ccal}$ is fibered. Specifically, the gauge group action defines a foliation of $\Mcal_{\Ccal}$ into gauge orbits, and the physical logical information resides in the leaf space of this foliation:
\begin{equation}
\mathcal{H}_{\mathrm{logical}} = \Mcal_{\Ccal} / \mathcal{G}_{\mathrm{gauge}},
\label{eq:subsystem}
\end{equation}
which geometrically corresponds to a principal bundle reduction or fiber bundle quotient of the code submanifold. The gauge generators act as holomorphic vector fields tangent to the fibers, and syndrome extraction need only detect errors modulo the gauge group, reducing the number of required stabilizer measurements. This fiber bundle perspective clarifies why subsystem codes can achieve the same error correction capability with fewer syndrome measurements compared to stabilizer codes encoding the same number of logical qubits.

\subsection{Continuous-Variable Error Correction}

The holomorphic framework naturally extends to continuous-variable (CV) quantum error correction \cite{braunstein1998,gkp2001}. In the CV setting, the homogeneity constraint is relaxed to allow arbitrary-degree holomorphic functions, and the code space is defined by a different set of constraints (e.g., eigenstates of $\hat{x} - \hat{p}$ for GKP codes). The K\"ahler geometry and fiber bundle structure persist, providing a unified geometric description of both discrete-variable and CV error correction. Furthermore, the characterization of these non-Gaussian CV states and the extraction of their expansion coefficients can be practically implemented via CV quantum algorithms operating natively in the Segal--Bargmann space \cite{almasri2026projection}.

\subsection{Fault-Tolerant Gate Implementation}

In the holomorphic representation, fault-tolerant gates are those that preserve the code submanifold $\Mcal_{\Ccal}$ while implementing the desired logical operation. The transversal implementation of gates corresponds to differential operators that act independently on each qubit's holomorphic variables:
\begin{equation}
\hat{U}_{\mathrm{transversal}} = \prod_{j=1}^n \hat{U}_j,
\label{eq:transversal}
\end{equation}
which geometrically corresponds to a product diffeomorphism on $\T^{2n}$ that preserves the code submanifold.

\section{Conclusion}
\label{sec:conclusion}

We have established a holomorphic framework for quantum error correction codes within the Segal--Bargmann space. By encoding qubits via the Schwinger boson representation and mapping to holomorphic functions, we have demonstrated that the entire architecture of stabilizer quantum error correction---code spaces, error operators, syndrome extraction, and recovery---admits a natural geometric interpretation. In this framework, stabilizer generators become commuting holomorphic differential operators whose joint eigenspaces define the code manifold; errors manifest as perturbations of the homogeneity constraint that displace the state into the normal bundle; syndrome extraction corresponds to spectral projection onto eigenspaces of these operators; and recovery acts as a Hamiltonian flow that restores the original topological sector. On the restricted torus $\T^{2n}$, this dictionary acquires a particularly transparent form: error syndromes are encoded as discrete winding number shifts in $\Z^{2n}$, providing a topological classification of errors through integer-valued invariants and a natural mechanism for detection via discrete jumps in the homotopy class of the trajectory.

Beyond the toroidal geometry, the full K\"ahler structure of the ambient projective space $\CP^{2^n-1}$ endows the correctability condition with a clean geometric meaning. The Knill--Laflamme condition translates into the mutual orthogonality of error-translated code submanifolds with respect to the Fubini--Study metric, while the code distance corresponds to the minimum geodesic separation between the code manifold and its images under weight-$d$ logical operators. The $U(1)^n$ fiber bundle structure inherent in the Schwinger encoding further reveals a topological protection mechanism: global phase noise acts as vertical automorphisms of the bundle and is automatically unobservable upon projectivization, while local Pauli errors generate horizontal displacements on the Bloch sphere base space that require active syndrome extraction. The K\"ahler potential also enables a coherent-state path-integral formulation of error correction dynamics, in which the semiclassical approximation becomes exact in the weakly-entangled regime near the Segre variety, with the fluctuation determinant encoding the extrinsic curvature of the code submanifold.

Furthermore, while the holomorphic representation provides profound geometric insights, it is worth noting that the computational complexity of evaluating high-order differential operators scales polynomially with the number of qubits for sparse error models, offering a complementary perspective to the exponential scaling of matrix representations in the standard computational basis.

Looking forward, several directions emerge naturally from this geometric unification. These include extending the framework to non-stabilizer codes and approximate error correction, developing numerical algorithms based on the holomorphic representation for simulating large-scale fault-tolerant protocols, exploring connections to holomorphic vector bundles and algebraic geometry for topological code classification, investigating the role of K\"ahler quantization in constructing new code families, and applying the geometric framework to derive refined bounds on threshold theorems. Ultimately, the holomorphic perspective reveals that quantum error correction is fundamentally a geometric operation: encoding maps information into a protected submanifold, errors deform the state away from this submanifold, syndrome measurement identifies the direction of deformation, and recovery projects back along the normal bundle. This unification of algebraic, geometric, and topological viewpoints provides both conceptual clarity and new computational tools for the design and analysis of quantum error correction protocols.



\end{document}